\numberwithin{equation}{section}
\newtheorem{thm}{Theorem}[section]
\newtheorem{lem}[thm]{Lemma}
\newtheorem{cor}[thm]{Corollary}
\newtheorem{prop}[thm]{Proposition}
\newtheorem{defin}[thm]{Definition}
\newtheorem{hyp}[thm]{Hypothesis}
\newtheorem{rem}[thm]{Remark}
\def\la{{\lambda}}
\def\La{{\Lambda}}
\def\ga{{\gamma}}
\def\om{{\omega}}
\def\Om{{\Omega}}
\def\eps{{\epsilon}}
\newcommand{\bx}{{x}}
\newcommand\nn{\nonumber}
\newcommand{\C}{\mathcal C}
\renewcommand{\C}{{\mathcal X}_n}
\newcommand{\beq}{\begin{eqnarray}}
\newcommand{\eeq}{\end{eqnarray}}
\newcommand{\N}{\mathbb N}
\newcommand{\Z}{\mathbb Z}
\newcommand{\R}{\mathbb R}
\renewcommand{\(}{\left(}
\renewcommand{\)}{\right)}
\renewcommand{\[}{\left[}
\renewcommand{\]}{\right]}
\newcommand{\be}{\begin{equation}}
\newcommand{\ee}{\end{equation}}
\def\barr{{\sigma}}
\begin{document}

\title{Stationary states in infinite volume with non zero current}
\author{Gioia Carinci}
\address{Gioia Carinci, Technische Universiteit Delft, van Mourik Broekmanweg 6, 2628 xe, Delft, The Netherlands}
\author{Cristian Giardin\`a}
\address{Cristian Giardin\`a, University of Modena and Reggio Emilia, via G. Campi 213/b, 41125 Modena, Italy}
\author{Errico Presutti}
\address{Errico Presutti, Gran Sasso Science Institute, viale F. Crispi 7, 67100 L�Aquila, Italy}

\maketitle

\begin{abstract}
\noindent
We study the Ginzburg-Landau stochastic models in infinite domains with some special geometry and  prove that without the help of external forces there are stationary measures with non zero current in three or more dimensions.
\end{abstract}

%

\vspace{1.cm}
\section{Introduction}
Equilibrium statistical mechanics  is based on the paradigm
of the Boltzmann-Gibbs distribution.
This extremely powerful paradigm describes
equilibrium thermodynamics and applies to
a large class of systems, including phase transitions.
By contrast, it does not exist a general and system-independent
approach to non-equilibrium statistical mechanics, where instead
dynamics plays a key role.
The most natural way to create
a non-equilibrium state is by putting an extended system
in contact with two heat or mass reservoirs at different
temperatures or chemical potentials. One could think
of a $d$-dimensional box  $[-N,N]^d$ which identifies the volume of the system and
the two reservoirs are attached to the opposite faces along, say,
the $x$-direction (for simplicity periodic boundary conditions
are chosen in the other  directions).
Due to the reservoirs, the state has a non-zero current in the
$x$-direction. This defines the setting of  boundary-driven systems and the stationary
measure of those systems is then called a {\em non-equilibrium
steady state}. Usually one requires that such a state satisfies
the macroscopic laws of transport, such as the Fourier's law,
by which the heat current is proportional to the gradient of
the temperature, or the Fick's law, implying proportionality between
the mass current and the gradient of the mass density.
As a consequence, the current in a  large system scales
as the inverse of the system length $N$. In particular,
an infinite system ($N\to\infty$) has zero current.

The main question in this paper is about the opposite, namely the existence of stationary
states of infinite systems having a non zero current.  This seems paradoxical
because intuition says that some external forces are needed to sustain the current
which otherwise would die out.  However the fact that the system
has ``a special geometry'' does the trick, as we shall see.
Ruelle  \cite{Ruelle} was the first to give an example of all that by considering a
quantum model describing two infinite systems which interact with each other
via a third finite system. He proved that indeed, in this setting,
there are stationary states with
non zero current.

Later on Gallavotti and Presutti \cite{GP1,GP2,GP3}  studied a similar geometry, namely
a finite system in interaction  with several distinct infinite systems.  The dynamics in \cite{GP1,GP2,GP3} is given by the classical Newton equations with Gaussian thermostatic forces added.
The focus was however on the existence of dynamics in the infinite-volume and the equivalence between Gaussian thermostats and infinite reservoirs.

We will consider here the analogue of the Ruelle model in stochastic systems,
the so called Ginzburg-Landau models.  These are lattice systems with
unbounded (real valued) spins $\phi_x$ called ``charges''.  The dynamics is stochastic
but  it conserves the total charge. It is therefore
a continuous version of the well-known Kawasaki dynamics in the Ising model.
As mentioned, the spatial geometry has an essential role.
The crux of our argument is that in the geometrical set up
that we consider there may exist {\em non-constant} bounded harmonic functions.
We will prove that in such a case there are indeed, in $d\ge 3$ dimensions, infinite-volume
stationary states with non zero current.

For technical reasons we will prove the statement for super-stable Hamiltonians
with non negative, finite range interactions, the class is quite general to include cases where phase transitions are present.  We use such assumptions to prove the existence of the infinite volume dynamics, we believe that they could be relaxed but this is not in the spirit of our paper.

In the case of general Hamiltonians we miss the existence of the infinite-volume dynamics but we can prove that the Fick's law is violated, namely putting the system in contact with two reservoirs which fix the chemical potentials at the right and left faces (as described in the beginning of the introduction) we observe a current which does not decay when the size of the system diverges. See however the remarks after Theorem \ref{thmn3.2}.

In the next section we describe the model, in Section \ref{mainresults} we state the main results which are then proved in the successive sections.

\vskip.5cm

\section{The model}

\subsection{The geometrical setup}
We  consider an infinite system  arising from {two semi-infinite volumes} that are
put in contact by means of a {channel}. For $n\in \N$, we define the $d$-dimensional semi-infinite lattice $\Z^d_{n,+}$ as the set of all points to the right of the hyperplane $x_1=n$
\beq
\Z^d_{n,+}:=\{{\bx}:=(x_1,\ldots,x_d)\in \Z^d: \: x_1\ge n\}.
\eeq
Similarly we define the semi-infinite lattice $\Z^d_{n,-}$ as the set including all points to  the left of the hyperplane $x_1=-n$
\beq
 \Z^d_{-n,-}:=\{{\bx}:=(x_1,\ldots,x_d)\in \Z^d: \: x_1\le -n\}.
\eeq
Finally the channel $C_{n}$ is defined as the centered squared box of side $2n+1$
connecting the two semi-infinite lattices
\beq
C_{n}
:=\{{\bx}\in \Z^d: \; |x_i|\le n, \; \forall i=1,\ldots d\}.
\eeq
The infinite-volume domain is then obtained as the union
\beq
\C:=  \Z^d_{n,+} \cup  C_{n}\cup \Z^d_{-n,-}.
\eeq
Often we shall derive results about the infinite volume by first considering a finite volume
of linear size $N$ and then studying the limit $N\to\infty$.
Thus for all integers $N>n$  we define
\begin{equation}
\label{nota}
\Lambda_{n,N}= \mathcal X_n \cap [-N,N]^{d} \qquad \text{and} \qquad 
S_{n,N} = \Lambda_{n,N+1} \setminus \Lambda_{n,N}.
\end{equation}
We will use the notation $x\sim y$ to denote nearest neighbor sites in $\C$
and $\{x,y\}$ for the un-oriented bond joining them.
%
%
%
%
%
%
%
%
 
\medskip

\subsection{Harmonic functions}

We continue by identifying harmonic functions for our special geometry.
Let  $\{X(t), \; t\ge 0\}$ be the simple symmetric continuos-time random walk on $\C$
which jumps  at rate $1$ to any of its nearest neighbor sites. 
We denote by $\mathbb{P}_x$ the law of this process
started from $X(0) = x$.
The process  is defined by the generator working on 
functions $\psi: \C\to \R$ as
\be
\label{genRW}
G\psi(x) =
\sum_{\substack{y \in \C \\{y\sim x}}}  [\psi(y)-\psi(x)].
\ee
We can interpret \eqref{genRW} as a conservation law
because $\sum_{x\in\C} G\psi(x) =0$ and
then $j_{x\to y}(\psi) = \psi(x) - \psi(y)$
can be interpreted as a ``current''.
When studying the Ginzburg-Landau model we will also
have currents and the main point of
our analysis will be that 
there are stationary measures 
whose average current
is equal to 
$j_{x\to y}(\psi)$
with $\psi$ an harmonic function.

A function $\psi: \C\to \R$ is said to be harmonic if $G\psi(x)=0$ for all $x\in \C$.
 Harmonic functions are stationary for the evolution
defined by \eqref{genRW}.
When studying Fick's law we will be interested 
in currents through a section of the channel.
Thus, for $|\xi |\le n$, we consider
the total flux  $I_\xi$ through a section $\Sigma_\xi$
in the channel
perpendicular to the $x_1$-axis, i.e. 
$\Sigma_\xi = \{x\in C_n \;: \; x \cdot e_1 = \xi\}$
where $e_1$ denotes the unit vector along the $x_1$ axis.
We thus define
\be
I_{\xi}(\psi) = \sum_{x\in\Sigma_\xi} j_{x-e_1 \to x}(\psi)
\ee 
The crucial feature of our geometrical setup is that in dimension
$d \ge 3$ there are non-constant harmonic functions.
We shall say that the random walk $X(\cdot)$ is definitively in a set $A$ if there exists
a finite $T >0$ such that for all $t\ge T$ one has $X(t) \in A$. 

\begin{defin}[The harmonic function $\lambda$]
\label{def111}
We fix $\la^-,\la^+ \in \R$ with $\lambda^- < \lambda^+$
and define a function $\lambda: \C \to \R$ as
\be
\label{harmonic}
{\lambda}_x=\lambda^- \cdot p_{\bx}^{-}+ \lambda^+ \cdot  p_{\bx}^{+},
\ee
with
\be
p_{\bx}^{\pm}= \mathbb P_x(X(\cdot)\in \Z^d_{\pm n,\pm} \: \text{definitively}).
\ee
\end{defin}
\noindent
The following proposition is proved in Appendix A:

\medskip
\begin{prop}
\label{recur}
The function ${\la}$ in Definition \ref{def111} satisfies the following properties.
\begin{enumerate}
\item It is a bounded harmonic function of the process $\{X(t), t \ge 0\}$ with generator $G$.

\smallskip
\item If the spatial dimension $d\ge 3$ then $p^+_x+p^-_x=1$ and 
$\la$ is a non-constant function.

\smallskip
\item The flux $I_{\xi}(\la)$ associated to $\lambda$ 
has the same value for any $|\xi| < n$ and 
$\frac{I_{\xi}(\la)}{n^{d-1}} \le \frac cn$ for some $c>0$.
\end{enumerate}
\end{prop}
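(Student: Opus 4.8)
The plan is to establish the three items in turn, relying on the probabilistic representation of $\lambda$ and standard facts about the simple random walk on $\C$.

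For item (1), I would first observe that the events $\{X(\cdot)\in\Z^d_{\pm n,\pm}\text{ definitively}\}$ are tail events for the walk, so by the Markov property the functions $p^{\pm}_x$ satisfy the mean-value property: conditioning on the first jump, $p^{\pm}_x=\frac{1}{\deg(x)}\sum_{y\sim x}p^{\pm}_y$ whenever the walk is not yet absorbed, and in the continuous-time formulation this is exactly $Gp^{\pm}_x=0$ for all $x\in\C$. Hence $\lambda$, being a linear combination, is harmonic; boundedness is immediate since $0\le p^{\pm}_x\le 1$ forces $\lambda^-\le\lambda_x\le\lambda^+$.

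For item (2), the key input is transience of the simple random walk in dimension $d\ge 3$. I would argue that the walk on $\C$ is transient: outside the channel the geometry is that of two half-spaces, on each of which the walk is transient for $d\ge3$, and the channel is a finite set, so a path that returns to the channel infinitely often would have to make infinitely many excursions into one of the half-spaces, which happens with probability zero. Consequently, with probability one the walk eventually stays in $\Z^d_{n,+}$ or eventually stays in $\Z^d_{-n,-}$ (it cannot oscillate between the two, since any transit between them passes through the finite channel, and infinitely many such transits again forces infinitely many channel visits, contradicting transience). Therefore $p^+_x+p^-_x=1$. Non-constancy then follows by a boundary-behavior argument: as $x\to\infty$ inside $\Z^d_{n,+}$ the walk is overwhelmingly likely never to reach the channel, so $p^+_x\to1$, whereas symmetrically $p^+_x\to0$ deep inside $\Z^d_{-n,-}$; hence $\lambda$ takes values near both $\lambda^+$ and $\lambda^-$ and cannot be constant.

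For item (3), I would use that $\lambda$ is harmonic and bounded, so the ``current field'' $j_{x\to y}(\lambda)=\lambda_x-\lambda_y$ is divergence-free, i.e. $\sum_{y\sim x}j_{x\to y}(\lambda)=-G\lambda(x)=0$. A discrete divergence theorem applied to the slab of $C_n$ between sections $\Sigma_\xi$ and $\Sigma_{\xi'}$ (with $|\xi|,|\xi'|<n$) shows that the net flux out of the slab is zero; since the channel walls are ``closed'' (no bonds leave $\C$ through the sides of $C_n$ in that range of $x_1$), the only contributions are through $\Sigma_\xi$ and $\Sigma_{\xi'}$, giving $I_\xi(\lambda)=I_{\xi'}(\lambda)$. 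For the quantitative bound, I would estimate the total variation of $\lambda$ across the channel: $\lambda$ ranges in $[\lambda^-,\lambda^+]$, an interval of length $\lambda^+-\lambda^-=O(1)$, so summing $j_{x-e_1\to x}(\lambda)=\lambda_{x-e_1}-\lambda_x$ along a line through the channel telescopes to something of size $O(1/n)$ per line on average—more precisely, writing $I_\xi(\lambda)$ as a common value and summing over all $2n+1$ sections $\xi\in\{-n,\dots,n\}$, one gets $(2n+1)\,I_\xi(\lambda)=\sum_x\big(\lambda_{x-e_1}-\lambda_x\big)$ over the channel, and this last sum telescopes in $x_1$ to a difference of boundary terms each bounded by $(\lambda^+-\lambda^-)n^{d-1}$; hence $I_\xi(\lambda)=O(n^{d-1}/n)$, i.e. $I_\xi(\lambda)/n^{d-1}\le c/n$.

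The main obstacle is the transience/absorption dichotomy in item (2): one must carefully justify that the walk on the glued domain $\C$ almost surely ends up definitively in one of the two half-spaces, i.e. that $\C$ is transient and that the walk does not make infinitely many crossings of the channel. I expect this to be handled by a comparison argument with the half-space walk together with a Borel–Cantelli estimate on the number of returns to the finite set $C_n$, and this is presumably exactly what the deferred proof in Appendix A carries out.
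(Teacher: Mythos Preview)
Your plan for items (1) and (2) matches the paper's proof in Appendix A: harmonicity via conditioning on the first jump, and the transience argument for $d\ge 3$. The paper makes item (2) concrete by coupling the walk on $\mathcal X_n$ with the simple random walk on $\mathbb Z^d$ via a reflection (so that hitting $K_n$ for the $\mathcal X_n$-walk corresponds to hitting $K_n\cup K_{n-1}$ for the $\mathbb Z^d$-walk), which is exactly the ``comparison with the half-space walk'' you anticipate. For the first half of item (3), your divergence argument is also the paper's (equation \eqref{107}).

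Where you genuinely diverge is in the \emph{bound} $I_\xi(\lambda)/n^{d-1}\le c/n$. The paper does not telescope; instead it proves the stronger \emph{pointwise} estimate $|\lambda_{x-e_1}-\lambda_x|\le c/n$ for every $x\in\Sigma_0$, via a coupling $Q$ of two walks started at neighboring sites $x^0$ and $x^0-e_1$: the transverse coordinates are shared, the first coordinates run independently until they meet, and one shows that with probability $\ge 1-c/n$ the walks merge before either exits the channel, hence end up in the same half-space. Your telescoping argument is more elementary and yields exactly the integrated bound stated in the proposition, so it is a valid and shorter route; the paper's coupling buys the extra information that the current is $O(1/n)$ at \emph{every} bond of the central section, not just on average. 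One small correction: you sum over all $2n+1$ sections $\xi\in\{-n,\dots,n\}$, but the constancy $I_\xi=I_{\xi+1}$ only holds for $|\xi|<n$ (at $\xi=\pm n$ the channel opens into the half-spaces and the side walls are no longer closed), so you should telescope over the $2n-1$ interior sections; this changes nothing in the final estimate.
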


\noindent
We will also consider harmonic functions in a finite volume $\Lambda_{n,N} \cup \Delta$
with $\Delta \subseteq S_{n,N}$.
To this aim we introduce the process $\{X^{N,\Delta}(t), t\ge 0\}$  with generator

\begin{equation}
G^{N,\Delta}\psi(x) = 
\left\{
\begin{array}{ll}
\sum_{\substack{y \in \Lambda_{n,N} \\{y\sim x}}}  [\psi(y)-\psi(x)]+ \sum_{\substack{y \in \Delta \\{y\sim x}}}  [\psi(y)-\psi(x)]  \quad
& \text{if }  x\in {\Lambda_{n,N}},\\
&\\
0  
& \text{if } x\in {\Delta}.
\end{array} 
\right.
\end{equation}

\noindent
The process $\{X^{N,\Delta}(t), t\ge 0\}$, taking values in  $\Lambda_{n,N} \cup \Delta$, is a continuos time random walk that jumps at rate 1 to its nearest neighbors
in $\Lambda_{n,N} \cup \Delta$
and is absorbed when it reaches $\Delta$. We call $\tau$ such absorption time.
\begin{defin}[The harmonic function $\lambda^{(N,\Delta,\barr)}$ with boundary condition $\barr$ on $\Delta$]
\label{def222}
We fix  $\Delta\subset S_{n,N}$ (see \eqref{nota}) and $\sigma: \Delta \to \R$
and define a function $\lambda^{(N,\Delta,\barr)}: \Lambda_{n,N} \cup \Delta \to \R$ as
\be
\label{harmonic}
\lambda^{(N,\Delta,\barr)}_x
= \sum_{y\in \Delta} \barr_y \, \mathbb{P}_x(X^{N,\Delta}(\tau) = y)
\ee
Notice that $\lambda^{(N,\Delta,\barr)}_x = \sigma_x$ for $x\in\Delta$.
%
%
%
%
%
\end{defin}
\noindent
While several results of our paper hold true for a general boundary condition $\barr$ on arbitrary set $\Delta \subseteq S_{n,N}$,
two particular cases will be of special interest and are described hereafter.

\medskip

\begin{hyp}[Special settings]
\label{case}
~\\
\noindent
{\bf (a) Fick's law.} In this case
\be
\Delta= \Delta_+ \cup \Delta_- \qquad\text{where}\qquad \Delta_{\pm} = \{y\in S_{n,N} :  y \mp e_1 \in \Lambda_{n,N}\}
\ee
and $\sigma_x = \lambda^{\pm}$ for $x\in \Delta_{\pm}$.
\\
\noindent
{\bf (b) The full setting.}  In this case
\be
\Delta = S_{n,N}
\ee
and $\sigma_x = \lambda_x$ for $x\in \Delta$, where ${\la}$ is the harmonic function in Definition \ref{def111}.
\end{hyp}

%
\begin{rem}
\normalfont
Hypothesis \ref{case}(a) is the natural set-up for the Fick's law,
as discussed in the Introduction.
Under Hypothesis \ref{case}(b) we have that 
$\la_x^{N,\Delta,\sigma} = \lambda_x$ with $x\in\Lambda_{n,N}$ for any integer $N$, see item (3) 
in the proposition below.
This will be used to study the infinite volume dynamics via partial dynamics,
that will be defined in Section \ref{pd-sec}. 
\end{rem}

\noindent
The following proposition is proved in Appendix A:

\begin{prop}[]
\label{recur2}
 
The function $\lambda^{(N,\Delta,\barr)}$ in Definition \ref{def222}
satisfies the following.
\begin{enumerate}
\item
It is an harmonic function with boundary condition $\sigma$ on the set $\Delta$ 
for the process $\{X^{N,\Delta}(t), t\ge 0\}$ with generator $G^{N,\Delta}$. 

\smallskip
\item 
Under Hypothesis  \ref{case}(a) we have $\lim_{N\to\infty} \lambda^{(N,\Delta,\barr)}_x = \lambda_x$.

\smallskip
\item 
Under Hypothesis \ref{case}(b) and for any $N\in \N$ we have $\lambda^{(N,\Delta,\barr)}_x = \la_x$ for $x\in\Lambda_{n,N}$.
\end{enumerate}
\end{prop}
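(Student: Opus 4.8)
The plan for item~(1) is the standard probabilistic-representation (first-step) argument. Since $\Lambda_{n,N}\cup\Delta$ is a finite connected graph and $\Delta\neq\emptyset$, the absorption time $\tau$ of $X^{N,\Delta}$ is a.s.\ finite (with exponential tails), so $\lambda^{(N,\Delta,\barr)}_x=\mathbb E_x[\barr_{X^{N,\Delta}(\tau)}]$ is a well-defined bounded function; for $x\in\Delta$ one has $\tau=0$, hence $\lambda^{(N,\Delta,\barr)}_x=\barr_x$, and since $G^{N,\Delta}$ vanishes on $\Delta$ by definition the boundary condition and harmonicity there are trivial. For $x\in\Lambda_{n,N}$ the walk waits an exponential time and then jumps to a neighbour chosen uniformly among the $d(x):=\#\{y\sim x:\ y\in\Lambda_{n,N}\cup\Delta\}$ admissible ones; conditioning on this first jump and using the Markov property gives $\lambda^{(N,\Delta,\barr)}_x=d(x)^{-1}\sum_{y\sim x,\ y\in\Lambda_{n,N}\cup\Delta}\lambda^{(N,\Delta,\barr)}_y$, i.e.\ $G^{N,\Delta}\lambda^{(N,\Delta,\barr)}(x)=0$.

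For item~(3) take $\Delta=S_{n,N}$ and $\barr=\la$. The elementary point is that, since $N>n$, every $\C$-neighbour of a point $x\in\Lambda_{n,N}$ already lies in $\Lambda_{n,N}\cup S_{n,N}=\Lambda_{n,N+1}$, so $G^{N,S_{n,N}}\psi(x)=G\psi(x)$ for all $x\in\Lambda_{n,N}$ and all $\psi:\C\to\R$. By Proposition \ref{recur}(1), $\la$ is bounded with $G\la\equiv0$, hence $(\la_{X(t\wedge\rho)})_{t\ge 0}$ is a bounded martingale, where $\rho$ is the first hitting time of $S_{n,N}$ by the free walk $X$ on $\C$; and $\rho<\infty$ a.s.\ because $\Lambda_{n,N}$ is finite and $X$ is irreducible on $\C$. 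Optional stopping gives $\la_x=\mathbb E_x[\la_{X(\rho)}]=\sum_{y\in S_{n,N}}\la_y\,\mathbb P_x(X(\rho)=y)$. Finally, inside $\Lambda_{n,N}$ the walks $X$ and $X^{N,S_{n,N}}$ have identical jump rates and $X^{N,S_{n,N}}$ is absorbed precisely when $X$ first hits $S_{n,N}$, so $\mathbb P_x(X(\rho)=y)=\mathbb P_x(X^{N,S_{n,N}}(\tau)=y)$ and the last sum equals $\la^{(N,S_{n,N},\la)}_x$.

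Item~(2) is the substantial one. Under Hypothesis \ref{case}(a) one has $\lambda^{(N,\Delta,\barr)}_x=\la^+q^+_{x,N}+\la^-q^-_{x,N}$ with $q^{\pm}_{x,N}:=\mathbb P_x(X^{N,\Delta}(\tau)\in\Delta_{\pm})$ and $q^+_{x,N}+q^-_{x,N}=1$, while by Definition \ref{def111} and Proposition \ref{recur}(2) (where $d\ge 3$ enters) $\la_x=\la^+p^+_x+\la^-p^-_x$ with $p^+_x+p^-_x=1$; hence $\lambda^{(N,\Delta,\barr)}_x-\la_x=(\la^+-\la^-)(q^+_{x,N}-p^+_x)$ and it suffices to prove $q^+_{x,N}\to p^+_x$. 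The plan is to couple $X^{N,\Delta}$ and the free walk $X$, both started at $x$, through a common family of rate-$1$ clocks attached to the edges of $\C$: the two agree until the first time $\rho$ that $X$ leaves $\Lambda_{n,N}$; if $X(\rho)\in\Delta_+\cup\Delta_-$ then $X^{N,\Delta}$ is absorbed there as well, and if $X(\rho)$ lies in the lateral part $S_{n,N}\setminus\Delta$ then the coupling splits and $X^{N,\Delta}$ resumes from the adjacent site of $\Lambda_{n,N}$ (which sits in the right or left box according to the sign of $(X(\rho))_1$). Two ingredients then close the argument: (i) $\C\setminus C_n$ is the disjoint union of the pieces $\Z^d_{n,+}\setminus C_n$ and $\Z^d_{-n,-}\setminus C_n$ (a point of $\C$ with $|x_1|<n$ must lie in $C_n$), and every $y\in S_{n,N}$ has $\|y\|_\infty=N+1$, so by transience in $d\ge 3$ (Proposition \ref{recur}) the quantity $\eta_N:=\sup_{y\in S_{n,N}}\mathbb P_y(X\text{ ever visits }C_n)$ tends to $0$; hence off an event of probability $\le\eta_N$ the walk $X$, after leaving $\Lambda_{n,N}$, stays forever on the side where it exited, so that $\mathbb P_x(X(\rho)\in\Z^d_{n,+})\to p^+_x$; (ii) a narrow-channel estimate: a reflected walk started in, say, the right box can be absorbed at the far end $\Delta_-$ only after first crossing the throat $\{x_1=n,\ |x_i|\le n\ \forall i\}$ of width $2n+1$, and the probability of doing this before being absorbed at the near end $\Delta_+$ is bounded by some $\delta_N\to0$ as $N\to\infty$. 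Combining (i) and (ii), on the complement of an event of probability $\le\eta_N+\delta_N$ one has $\{X^{N,\Delta}(\tau)\in\Delta_+\}=\{X\text{ is definitively in }\Z^d_{n,+}\}$, so $|q^+_{x,N}-p^+_x|\le\eta_N+\delta_N\to0$. I expect the main obstacle to be exactly the quantitative narrow-channel/transience estimate in~(ii) — this is where the assumption $d\ge 3$ is genuinely used and it is most delicate in the borderline case $d=3$.
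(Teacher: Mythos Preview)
Your arguments for items (1) and (3) are correct and coincide with the paper's; for (3) the paper phrases things as conditioning on the first hitting position of $S_{n,N}$ rather than invoking optional stopping, but it is the same computation.

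For item (2) your outline is sound and rests on the same two ingredients as the paper, though the organisation differs. The paper inserts an \emph{intermediate} scale $k$ with $x\in\Lambda_{n,k}$, $k<N$: since the free walk $X$ and $X^{N,\Delta}$ coincide up to first exit from $\Lambda_{n,k}$, one gets
\[
\big|\la_x-\la^{(N,\Delta,\sigma)}_x\big|\ \le\ \big(\eps_k+\eps_{k,N}\big)\max(|\la^+|,|\la^-|),
\]
where $\eps_k=\sup_{y\in S_{n,k}^+}\mathbb P_y[\mathcal B^-]$ is your $\eta$ at scale $k$ (which vanishes by the transience input of Proposition~\ref{recur}) and $\eps_{k,N}=\sup_{y\in S_{n,k}^+}\mathbb P_y[X^{N,\Delta}(\tau)\in\Delta_-]$ is your narrow-channel quantity. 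One sends $N\to\infty$ first and then $k\to\infty$. Your direct coupling at scale $N$ amounts to taking $k$ near $N$; either organisation works, but in both the entire weight of the proof sits on the estimate you call (ii) and explicitly leave open.

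That estimate is genuinely the content of the proof, and the paper does supply it (carried out in $d=3$): one unfolds the lateral reflections so that $X^{N,\Delta}$ becomes the free simple random walk on $\Z^3$ and the single throat $D=\{x_1=n,\ |x_i|\le n\}$ becomes a $(2N{+}1)\Z^{2}$-periodic array $\underline D$ of copies; then $\mathbb P_x[\tau_{\underline D}<\tau^N]$ is split into a near part, controlled by the Green-function bound $G(x,z)\le c\,|x-z|^{-1}$ and yielding terms of order $n^2/k+n^2/\sqrt{N}$, and a far part, controlled by a displacement/large-deviation estimate for the free walk. So your sketch is correct but incomplete exactly at the point you flag; what is missing is this reflection-plus-Green-function argument, and without it the claim $\delta_N\to 0$ is an assertion rather than a proof.
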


\begin{rem}
\normalfont
Item (2) of Proposition \ref{recur} and item (2) in Proposition \ref{recur2}
show that in the context of Hypothesis
\ref{case}(a) the current $j_{x\to y}(\la^{N,\Delta,\sigma}):= \la_x^{(N,\Delta,\sigma)} - \la_y^{(N,\Delta,\sigma)}$
is not identically zero in the limit $N\to\infty$.
\end{rem}

\medskip

\subsection{Hamiltonian}

As customary in the theory of lattice systems the energy
is given in terms of its potential, thus
the
formal Hamiltonian is
\be
H(\phi) = \sum_{A\in \mathcal{A}} V_A(\phi_A),
\ee
where $\mathcal{A}$ is the set of all finite subsets of the lattice $\C$,
$\phi_A = \{\phi_x\}_{x\in A}$ and $V_A(\phi_A)$ are $C^\infty$ functions.  We may write
$V_A(\phi)$ for  $V_A(\phi_A)$, $\phi_A$ in such a case is the restriction of $\phi$ to $A$.  To study the infinite volume limit we will restrict to the following case:

\noindent
\begin{defin}[Positive interactions.]
\label{Definition 2.1}
By this we mean Hamiltonians which satisfy the following four conditions.

\begin{itemize}

\item $V_A=0$ if the cardinality $|A|$ of $A$ is $\ge 3$, moreover there is $R$ so that $V_{\{x,y\}}=0$  if $|x-y|> R$.

 \item $V_A=V_B$ if $B$ is a translate of $A$.

\item  $V_{\{x,y\}}(\phi)\ge 0$,  $V_{\{x\}}(\phi)  \ge a\phi_x^2-b$, $a>0$.

\item  $V_{\{x,y\}}(\phi)\le c (\phi_x^2+\phi_y^2)$

\end{itemize}

\end{defin}

\medskip

\begin{rem} 
\normalfont
In the first condition we restrict to one and two-body interactions with finite range; in the second one we suppose that the interaction is translational invariant; the third one is special. 
To understand the origin of the third condition it is convenient to consider the typical two-body interaction, that has the form $-c_{x,y}\phi_x\phi_y$.
In the ferromagnetic case $c_{x,y}>0$ so that we can rewrite it as  $\frac12 c_{x,y}(\phi_x-\phi_y)^2 -  \frac12 c_{x,y}(\phi_x^2+\phi_y^2)$.  This means that the one body potential at $x$ has an extra term
$-\frac12 \sum_y c_{x,y}\phi_x^2$, the assumption is then  that, despite this additional term, the one-body potential   is $\ge a\phi_x^2-b$, $a>0$.
Thus the third condition may be seen as a strengthening of the usual super-stability condition for ferromagnetic interactions.  The fourth condition is clearly satisfied in the usual case where the two body interaction has the form $-c_{x,y}\phi_x\phi_y$.
\end{rem}

\begin{rem} 
\normalfont
The stronger super-stability condition is satisfied in the case of quadratic, ferromagnetic two-body interactions and
when the one-body potential  grows as $c \phi_x^4$, $c>0$.  A particular case is the Hamiltonian
\begin{equation}
\label{sinai2}
H(\phi) = \sum_{x} (\phi_x^2 - 1)^2 +  \frac{1}{2} \sum_{x\sim y} (\phi_x - \phi_y)^2.
 \end{equation}
which has  a phase transition
at  small temperatures in $\Z^d, d\ge 2$, as proved by Dinaburg and Sinai \cite{DS}. Indeed the one-body potential has a double-well shape with two
minima at $\pm 1$ and thus forces the charges to be close to $\pm 1$;
the quadratic interaction term forces the charges to be equal.
As a consequence, at low temperatures the Gibbs measure
concentrates on configurations where the charges are mostly
close to $+1$ (or to $-1$).
\end{rem}

\begin{rem} 
\normalfont
Another Hamiltonian  that satisfies the four conditions stated above
 is the quadratic Hamiltonian
\be
\label{quad2}
H(\phi) = \frac12 \sum_{x\in \C} \phi_x^2.
\ee
Here the potentials are only one-body, the interactions are absent.  It is however
interesting because it has almost explicit solutions obtained by using duality.
\end{rem}

\vskip.5cm

We  use the  assumption of positive interactions to study the infinite-volume dynamics.  In  finite volumes we can be much more general.  In the whole sequel $\La$ will denote
a bounded set in $ \C$ and
\be
H_\La(\phi_\La) = \sum_{A\in \mathcal{A}: A \subset  \Lambda } V_A(\phi_A),
\ee
the energy of $\phi_\La$ in $\La$.

\begin{defin}[``General'' interactions]
\label{Definition 2.2}
~\\
\begin{itemize}

\item There are integers $K$ and $R$ so that $V_A=0$ if the cardinality $|A|$ of $A$ is $\ge K$ or if the diameter of $A$ is $> R$.

 \item $V_A=V_B$ if $B$ is a translate of $A$.

\item   There are $a>0$ and $b\ge 0$ so that, for any bounded $\La\in \C$,
\begin{equation}
\label{n2.12}
H_\La(\phi_\La) = H^0_\La(\phi_\La) + H'_\La(\phi_\La),\qquad H^0_\La(\phi_\La)= a \sum_{x\in \La }  \phi_x^2,\qquad
H'_\La(\phi_\La) \ge- b |\La|
\end{equation}
\item There are  $k$ and $c$ so that, for any $A\in \mathcal{A}$ and any $x \in A$,
\begin{equation}
\label{n2.12.0}
|V_A(\phi_A)| +\Big| \frac{\partial}{\partial \phi_x} V_A(\phi_A)\Big|
 + \Big| \frac{\partial^2}{\partial \phi_x^2} V_A(\phi_A)\Big|\le c \sum_{x\in A}\phi_x^{2k}
\end{equation}

\end{itemize}

\end{defin}

\medskip

\noindent
\eqref{n2.12} is the usual super-stability condition which states that the energy is the sum of a stable hamiltonian plus a positive quadratic term.  The assumption on the derivatives in the last condition will be used when studying the dynamics.  To prove the existence of DLR measures in the thermodynamic limit we need more assumptions which are not stated because we will use the above definition only in finite volumes.

When studying dynamics  for general hamiltonians we will first introduce a cutoff,   use it to prove existence and finally show that it can be removed. We  use the following notation: $\La$ and $\Delta$ always denote sets in $\C$, their complement being meant as the complement in $\C$.  Let  $\La$  be a bounded set  $\phi_\La$ and $\phi_{\La^c}$ configurations in $\La$ and its complement, we then set
\begin{equation}
\label{n2.13}
H_\La(\phi_\La| \phi_{\La^c}) = H_\La(\phi_\La) + \sum_{ {\substack{A: A\cap \La \ne \emptyset \\{A\cap \La^c\ne \emptyset}}}} V_A(\phi),\qquad \phi=( \phi_\La, \phi_{\La^c})
\end{equation}

We next introduce the cutoff function $g_R(\xi)$, $\xi \in \mathbb R_+$, $R>1$, by setting
$g_R(\xi)=1$ when $\xi \le R-1$, $g_R(\xi)=0$ when $\xi \ge R$ and $g_R(\xi)$ a decreasing $C^\infty$ function of $\xi$ in $(R-1,R)$ which interpolates between the values 1 and 0.

\medskip
\begin{defin}[``Cutoff Hamiltonians'']
\label{Definition 2.3}
~\\
The general Hamiltonian $H$ with cutoff $R>1$ is:
\be
H_{\La,R}(\phi_\La| \phi_{\La^c}) = H^0_\La(\phi_\La) + g_R(\|\phi_\La\|_2^2) H'_\La(\phi_\La| \phi_{\La^c})
\ee
where $ H^0_\La(\phi_\La)$ is defined in \eqref{n2.12} and
\be
\|\phi_\La\|_2^2 = \sum_{x\in \La} \phi_x^2
\ee

\end{defin}
\medskip
Thus, when $\|\phi_\La\|_2^2>R$, the Hamiltonian $H_{\La,R}(\phi_\La| \phi_{\La^c})$ becomes quadratic with no interaction among charges.

\vskip1cm

\subsection{Dynamics.}
\label{sec-dynamics}
The stochastic Ginzburg Landau model on $\mathcal X_n$ describes the time evolution of
variables $\phi_x(t)$ which represent the amount of ``charge'' at site $x\in \mathcal X_n$
at time $t\ge 0$.
The evolution  is governed by  the infinite system of stochastic differential equations
   \begin{eqnarray}
   \label{1.34.00}
 \phi_x(t)&=&  \phi_x(0)- \int_0^t ds  \sum_{y \in \C, y \sim x}\Big\{ \frac{\partial H}{\partial \phi_x}(\phi(s)) -\frac{\partial H}{\partial \phi_y} (\phi(s))\Big\}\nn\\ &+&
   \beta^{-1/2} \sum_{y\in \C, y \sim x} w_{x,y}(t), \qquad x \in \C
    \end{eqnarray}
where the variables $ w_{x,y}(t)$ are defined in a space $(\Om,P)$ as follows.
An element $\om \in \Om$ is the collection
$\{ B_{x,y}(t), t\ge 0\}$ where $x,y$ run over the pairs $x\sim y$
such that $x<y$ in the lexicographic order.  $P$ is a product measure such that each $\{B_{x,y}(t), t\ge 0\}$ is a standard Brownian motion.   We then set
 \begin{equation}
   \label{1.33.00}
w_{x,y}(t) = B_{\{x,y\}}(t) \; {\rm if} \;x <y,\qquad w_{x,y}(t) = -B_{\{y,x\}}(t) \; {\rm if}\; x >y
    \end{equation}

%

\begin{rem}
\label{hellooo}
\normalfont
We will prove an existence theorem of the dynamics for ``Positive interactions'' (see the previous subsection) and for ``General interactions'' in the finite-volume case that we will describe below.
\end{rem}

\begin{rem}
\normalfont
Restrict the system \eqref{1.34.00} to only two equations, one for $x$ and the other for $y$ with $x \sim y$. By summing the two we see that the total charge $\phi_x(t)+\phi_y(t)$ is conserved thus the process describes exchanges of charges between the two sites.  There is a random white noise term $dw_{x,y}(t)$, 
to which  it is added a drift given by $\{ \frac{\partial H}{\partial \phi_x}(\phi(t)) -\frac{\partial H}{\partial \phi_y} (\phi(t))\}$ that we will call the instantaneous expected
current from $x$ to $y$, which is thus defined as
\be
\label{currdef1}
J_{x \to y} =  \frac{\partial H}{\partial \phi_x} - \frac{\partial H}{\partial \phi_y}.
\ee
\end{rem}


\subsection{Partial dynamics.}
\label{pd-sec}
As mentioned in Remark \ref{hellooo} above we will first study a {\em partial dynamics} where only finitely-many charges (those
contained in a finite volume of linear size $N$) may evolve, while all
the others are frozen at their initial values. The infinite volume
dynamics will then be obtained in the limit $N\to\infty$.

\noindent
The partial dynamics in $\Lambda_{n,N}$  freezes all charges outside $\Lambda_{n,N}$.
We  denote by $\phi$ a configuration in $ \Lambda_{n,N}$ , by $\bar \phi$ a configuration outside $ \Lambda_{n,N}$
and by $(\phi,\bar\phi)$ a configuration in $\C$.
We then write  $\phi^{(N,\Delta,\barr)}(t)=\{\phi^{(N,\Delta,\barr)}_x(t| \phi, \bar \phi, \om, \Delta, \barr), x \in \La_{n,N}\}$  for the solution (when it exists) of
   \begin{eqnarray}
   \label{1.34.00.1}
 \phi^{(N,\Delta,\barr)}_x(t)&=&  \phi_x(0)- \int_0^t ds \bigg( \sum_{y \in \La_{n,N}, y \sim x}\Big\{ \frac{\partial H}{\partial \phi_x}(\phi^{(N,\Delta,\barr)}(s),\bar\phi) -\frac{\partial H}{\partial \phi_y} (\phi^{(N,\Delta,\barr)}(s),\bar\phi)\Big\}\nn\\ &-&
   \sum_{y \in \Delta, y \sim x}\Big\{\frac{\partial H}{\partial \phi_x}(\phi^{(N,\Delta,\barr)}(s),\bar\phi)-\barr_y\Big\}\bigg)+ \beta^{-1/2} \sum_{ y \sim x} w_{x,y},\quad \; x \in \Lambda_{n,N}
    \end{eqnarray}

\medskip

\noindent
We interpret \eqref{1.34.00.1} by saying that at each bond $\{x,y\}$ with $x \in \Lambda_{n,N}$
and $y \in \Delta$ it is attached a reservoir which exchanges charges at a rate dictated
by the chemical potential $\barr_y$.   

\begin{rem}
\normalfont
Under Hypothesis \ref{case}(a)  we are in the setup of the Fick's law and the partial dynamics in \eqref{1.34.00.1} is customary in the analysis of boundary-driven
processes, where the boundary processes simulate external reservoirs attached to the right and left faces of the system and generating currents.  
As we will see in $d\ge 3$ dimensions the currents do not decay as $N\to \infty$ so that the Fick's law is violated in our geometrical setup.
The Hypothesis \ref{case}(b) is used to study the infinite-volume limit.  
The choice of these boundary processes is therefore crucial in our analysis and it is at this point that the harmonic function $\la$ of Definition 
\ref{def111} enters into play.
\end{rem}

\medskip

We close this section by observing that the partial dynamics with a cut-off Hamiltonian is a Markov process, as the following proposition precisely states.
For a general Hamiltonian $H$ we define the differential operator
   \begin{equation}
 \label{GENN2}
L^{n,N, \Delta,\sigma} =\sum_{\substack{x,y\in \Lambda_{n,N}\\{\{x,y\}}}} L_{x,y}+\sum_{ x \in \Lambda_{n,N}}\sum_{\substack{y  \in \Delta \\ {y\sim x}}} \bar{L}_{x,y},
  \end{equation}
which acts on  smooth functions as follows:
  \begin{eqnarray}
\label{gen}
&&L_{x,y}=-\(\frac{\partial H}{\partial \phi_x}-\frac{\partial H}{\partial \phi_y}\)\(\frac{\partial}{\partial \phi_x}-\frac{\partial }{\partial \phi_y}\)+\frac 1 \beta\(\frac{\partial }{\partial \phi_x}-\frac{\partial }{\partial \phi_y}\)^2
   \end{eqnarray}
  \begin{equation}
\label{2.55}
\bar{L}_{x,y}= 
\Big\{-\(\frac{\partial H}{\partial \phi_x}-{\sigma}_y\)\frac{\partial }{\partial \phi_x}+\frac 1 \beta \, \frac{\partial^2}{\partial \phi^2_x}\Big\}.
\end{equation}
\begin{prop}
\label{thmn2.2}
  Let $H$ in \eqref{1.34.00.1} be a cutoff  Hamiltonian 
 (see Definition \ref{Definition 2.3}).  
  Then, for any $\phi$, $\bar\phi , \Delta, \sigma$,
the equation \eqref{1.34.00.1} has solution
$\phi^{(N,\Delta,\sigma)}(t)=\phi^{(N)}(t| \phi, \bar \phi, \om, \Delta,\sigma)$ for $P$-almost all $\om$.  The law of $\{\phi^{(N,\Delta,\sigma)}(t), t\ge 0\}$, defines the transition probability starting from $\phi$
 of a Markov diffusion process whose generator is $L^{n,N, \Delta,\sigma}$ in \eqref{GENN2}.

\end{prop}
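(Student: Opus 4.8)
The plan is to read \eqref{1.34.00.1} as a finite-dimensional stochastic differential equation and apply the classical Lipschitz theory. Because all charges outside $\Lambda_{n,N}$ are frozen at $\bar\phi$, the only unknowns are the finitely many coordinates $\Phi=(\phi_x)_{x\in\Lambda_{n,N}}$, while $\bar\phi,\Delta,\sigma$ enter merely as fixed parameters. First I would rewrite \eqref{1.34.00.1} as $d\Phi(t)=b(\Phi(t))\,dt+\Sigma\,dW(t)$, where $W$ collects the finitely many independent Brownian motions $B_{\{x,y\}}$ on the active bonds (those joining two sites of $\Lambda_{n,N}$, or a site of $\Lambda_{n,N}$ to a site of $\Delta$), $\Sigma$ is the constant matrix dictated by $\beta^{-1/2}\sum_{y\sim x}w_{x,y}$ through the conventions \eqref{1.33.00}, and $b_x(\Phi)$ is the drift obtained from the derivatives of the cutoff Hamiltonian $H_R:=H_{\Lambda_{n,N},R}(\,\cdot\,|\bar\phi)$ of Definition \ref{Definition 2.3}.

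The key step — and essentially the only one requiring genuine care — is to verify that $b$ is globally Lipschitz with at most linear growth; this is exactly what the cutoff is designed to provide. The quadratic part $H^0_{\Lambda_{n,N}}=a\sum_x\phi_x^2$ contributes the linear (hence globally Lipschitz) term $2a\phi_x$ to each $\partial_{\phi_x}H_R$. The remaining contribution to $\partial_{\phi_x}H_R$ is $\partial_{\phi_x}\big[g_R(\|\Phi\|_2^2)\,H'_{\Lambda_{n,N}}(\Phi|\bar\phi)\big]$; since $g_R$ and $g_R'$ vanish once $\|\Phi\|_2^2\ge R$, this term is supported in a fixed ball, on which $H'_{\Lambda_{n,N}}(\,\cdot\,|\bar\phi)$ is a finite sum of $C^\infty$ potentials $V_A$ of the bounded variables $\Phi$ (with $\bar\phi$ fixed) and, by \eqref{n2.12.0}, is bounded together with its first two derivatives. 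Hence $g_R(\|\Phi\|_2^2)H'_{\Lambda_{n,N}}(\Phi|\bar\phi)$ is a $C^\infty$ function with compact support, so its gradient is globally bounded and globally Lipschitz; summing over the finitely many neighbours and adding the boundary terms $\sum_{y\in\Delta,\,y\sim x}(\partial_{\phi_x}H_R-\sigma_y)$ leaves $b$ globally Lipschitz with linear growth. The point worth stressing is that the bare $V_A$ only have polynomial growth and are not themselves Lipschitz, but the cutoff confines everything to a compact set and restores the global bounds — so no real obstacle arises here.

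With $b$ globally Lipschitz of linear growth and $\Sigma$ constant, I would then invoke the standard existence-and-uniqueness theorem for stochastic differential equations to obtain, for $P$-almost every $\omega$, a pathwise-unique strong solution $\Phi(t)=\phi^{(N,\Delta,\sigma)}(t\,|\,\phi,\bar\phi,\omega,\Delta,\sigma)$ defined for all $t\ge0$, depending measurably on $(\phi,\omega)$ and with no explosion (linear growth plus Gronwall). Pathwise uniqueness together with the independence and stationarity of the Brownian increments yields the time-homogeneous Markov property, and continuity of paths makes the process a diffusion; its transition probabilities starting from $\phi$ are by definition the laws of $\Phi(\cdot)$.

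Finally, to identify the generator I would apply It\^o's formula to smooth $f$: the process $f(\Phi(t))-f(\Phi(0))-\int_0^t(\mathcal L f)(\Phi(s))\,ds$ is a local martingale with $\mathcal L f=\sum_x b_x\,\partial_{\phi_x}f+\tfrac12\sum_{x,x'}(\Sigma\Sigma^{\!\top})_{x,x'}\,\partial^2_{\phi_x\phi_{x'}}f$. By the definition of $b$ its first-order part coincides with the first-order parts of $\sum_{\{x,y\}\subset\Lambda_{n,N}}L_{x,y}+\sum_{x\in\Lambda_{n,N}}\sum_{y\in\Delta,\,y\sim x}\bar L_{x,y}$ in \eqref{gen}--\eqref{2.55}, and a short computation of the covariations from \eqref{1.33.00} ($d\langle\phi_x,\phi_x\rangle$ proportional to the number of active bonds at $x$, $d\langle\phi_x,\phi_y\rangle$ carrying the opposite sign along the single shared bond, zero otherwise) shows that the second-order part of $\mathcal L$ reproduces the operators $(\partial_{\phi_x}-\partial_{\phi_y})^2$ and $\partial^2_{\phi_x}$ appearing in \eqref{gen}--\eqref{2.55}. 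Hence $\mathcal L=L^{n,N,\Delta,\sigma}$, which is the asserted generator. The overall argument is thus a routine application of finite-dimensional SDE theory, the only delicate point being the Lipschitz verification of the second paragraph.
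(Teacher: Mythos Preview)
Your proposal is correct and follows exactly the route the paper takes: the paper simply observes that the cutoff Hamiltonian makes the coefficients of \eqref{1.34.00.1} globally Lipschitz and then defers to standard SDE theory (Stroock--Varadhan \cite{SV} and Revuz--Yor \cite{RY}). You have spelled out the Lipschitz verification and the identification of the generator in more detail than the paper does, but the approach is the same.
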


\medskip
\noindent
The equations \eqref{1.34.00.1} with the cutoff Hamiltonian have globally Lipschitz coefficients. The proof of Proposition \ref{thmn2.2} then
follows, see for instance the book by Strook and Varadhan \cite{SV}, and Chapter VII,  \S 2 in \cite{RY}.

\vskip.5cm

\section{Main results}
\label{mainresults}

\subsection{Finite volumes}

We fix arbitrarily $n$ and $N>n$, and shorthand $\phi= \{  \phi_x, x \in \La_{n,N}\}$.   We also fix 
$\bar\phi$, $\Delta$,
$\sigma =\{\barr_y\}_{y\in\Delta}$ and  shorthand
 $\lambda^*_x = \lambda^{(N,\Delta,\sigma)}_x$, with  $x\in \Lambda_{n,N}\cup \Delta$, see Definition \ref{def222}.
 Recall that $\la^*_y= \barr_y$ for $y\in\Delta$.
Let $\mu_{n,N,\la^*}(d\phi |\bar \phi)$  be the Boltzmann-Gibbs
measure
\beq\label{muN}
\mu_{n,N,\la^*}(d\phi |\bar \phi)
=
\frac 1 {\mathcal Z_{n,N,\la^*}(\bar{\phi})}  \; \cdot  e^{-\beta [H_\La(\phi|\bar{\phi}) - \sum_{x\in \Lambda_{n,N}}  {\lambda}_x^* \phi_x]}\, d\phi,
\eeq
where $H_\La(\cdot|\bar{\phi})$ is defined in \eqref{n2.13}. The normalizing partition function is
\be
\mathcal Z_{n,N,\la^*}(\bar{\phi}) = \int e^{-\beta [H_\La(\phi|\bar{\phi}) - \sum_{x\in \Lambda_{n,N}}  {\lambda}_x^* \phi_x]}\, d\phi.
\ee
We will prove in Section \ref{pd-section} the following theorem.
\begin{thm}
\label{tobeproved22}
For a general Hamiltonian $H$, let $L^{n,N,\Delta,\sigma}$ be as in \eqref{GENN2} and $f$ a  smooth test function, then 
\be
\label{pippo}
\int (L^{n,N,\Delta,\sigma}f)(\phi) \mu_{n,N,\la^*}(d \phi|\bar\phi) =0.
\ee
Morever, if $H$ in \eqref{1.34.00.1} is a cutoff  hamiltonian (in the sense of Definition \ref{Definition 2.3}),
then  $\mu_{n,N,\la^*}$ is an invariant measure for the partial dynamics.
\end{thm}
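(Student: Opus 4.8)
The plan is to establish \eqref{pippo} by integrating $L^{n,N,\Delta,\sigma}f$ against the density of $\mu_{n,N,\la^*}(\cdot|\bar\phi)$ one bond at a time, and then to recognize the resulting expression as a sum of the values $(G^{N,\Delta}\la^*)(x)$, which vanish because $\la^* = \la^{(N,\Delta,\sigma)}$ is $G^{N,\Delta}$--harmonic on $\La_{n,N}$ by Proposition \ref{recur2}(1). Write $\rho$ for the density in \eqref{muN}, so that $\rho\propto e^{-\beta G}$ with $G(\phi)= H_\La(\phi|\bar\phi)-\sum_{x\in\La_{n,N}}\la^*_x\phi_x$ and $\partial_{\phi_x}\log\rho=-\beta\,\partial_{\phi_x}G$. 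First I would record that, by the super-stability bound \eqref{n2.12} (and automatically for a cutoff Hamiltonian, where $\rho$ has Gaussian tails), $\rho$ is a well-defined probability density with finite polynomial moments and, together with the polynomially growing coefficients of \eqref{GENN2} coming from \eqref{n2.12.0}, all the integrands below are integrable and decay at infinity, so the integrations by parts carry no boundary terms.

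Next, setting $D_{xy}:=\partial_{\phi_x}-\partial_{\phi_y}$, I would use the two integration-by-parts identities $\int (D_{xy}g)\,\rho\,d\phi=\beta\int g\,(D_{xy}G)\,\rho\,d\phi$ and $\int(\partial_{\phi_x}g)\,\rho\,d\phi=\beta\int g\,(\partial_{\phi_x}G)\,\rho\,d\phi$. Applying the first to the two terms of $L_{x,y}f$ in \eqref{gen} (writing the diffusive term as $D_{xy}^2f=D_{xy}(D_{xy}f)$), the term $\tfrac1\beta D_{xy}^2f$ produces $\int (D_{xy}f)(D_{xy}G)\,\rho\,d\phi$, which combines with the drift term $-\int (D_{xy}H)(D_{xy}f)\,\rho\,d\phi$; since $\partial_{\phi_x}H_\La(\phi|\bar\phi)=\tfrac{\partial H}{\partial\phi_x}$ for $x\in\La_{n,N}$ (from \eqref{n2.13}), one has $D_{xy}G-D_{xy}H=-(\la^*_x-\la^*_y)$ and hence
\be
\int (L_{x,y}f)\,\rho\,d\phi \;=\; (\la^*_y-\la^*_x)\int (D_{xy}f)\,\rho\,d\phi .
\ee
Likewise, for $\bar L_{x,y}$ in \eqref{2.55}, using $\partial_{\phi_x}G=\tfrac{\partial H}{\partial\phi_x}-\la^*_x$ and the boundary condition $\sigma_y=\la^*_y$ on $\Delta$ (Definition \ref{def222}), I would obtain $\int(\bar L_{x,y}f)\,\rho\,d\phi=(\la^*_y-\la^*_x)\int(\partial_{\phi_x}f)\,\rho\,d\phi$.

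Summing these identities over the bulk bonds $\{x,y\}\subset\La_{n,N}$ and over the boundary pairs $x\in\La_{n,N}$, $y\in\Delta$, $y\sim x$, I would then collect, for each fixed $x\in\La_{n,N}$, the coefficient of $\int(\partial_{\phi_x}f)\,\rho\,d\phi$: each bulk bond $\{x,y\}$ contributes $(\la^*_y-\la^*_x)$ to it (symmetrically in $x$ and $y$, since $(\la^*_y-\la^*_x)D_{xy}f=(\la^*_y-\la^*_x)\partial_{\phi_x}f+(\la^*_x-\la^*_y)\partial_{\phi_y}f$), and so does each boundary pair, so the total coefficient is $\sum_{y\sim x,\ y\in\La_{n,N}\cup\Delta}(\la^*_y-\la^*_x)=(G^{N,\Delta}\la^*)(x)=0$, which gives \eqref{pippo}. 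For the last assertion, when $H$ is a cutoff Hamiltonian the coefficients of \eqref{1.34.00.1} are globally Lipschitz and smooth, so by Proposition \ref{thmn2.2} the partial dynamics is the non-explosive Feller diffusion $P_t$ with generator $L^{n,N,\Delta,\sigma}$, and for $f\in C_c^\infty$ the function $P_tf$ is again smooth with bounded derivatives, hence admissible in \eqref{pippo}; then $\tfrac{d}{dt}\int P_tf\,d\mu_{n,N,\la^*}=\int (L^{n,N,\Delta,\sigma}P_tf)\,d\mu_{n,N,\la^*}=0$, so $\int P_tf\,d\mu_{n,N,\la^*}=\int f\,d\mu_{n,N,\la^*}$ and $\mu_{n,N,\la^*}$ is invariant.

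The computation is short once it is organized this way, so I expect the genuine points of care to be: (i) the justification of the integrations by parts — integrability of the polynomial-times-Gaussian integrands and vanishing of the boundary terms at infinity — routine given \eqref{n2.12} and \eqref{n2.12.0}; and (ii) the bookkeeping in the bond sum, whose punchline is the structural fact that what forces the whole expression to collapse to zero is precisely the harmonicity of $\la^*$ \emph{for $G^{N,\Delta}$} together with the matching boundary value $\sigma=\la^*$ on $\Delta$, and not any special tuning of the reservoir chemical potentials — which is why the same identity holds for an arbitrary boundary datum once $\la^*$ is taken to be the corresponding harmonic extension.
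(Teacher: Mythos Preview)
Your proof is correct and follows essentially the same route as the paper: integrate by parts bond-by-bond and observe that the surviving terms regroup, for each $x\in\La_{n,N}$, into a multiple of $(G^{N,\Delta}\la^*)(x)$, which vanishes by Proposition~\ref{recur2}(1). The only cosmetic difference is that the paper organizes the computation via the adjoint $L^\dagger$ in $L^2(\mu_{n,N,0})$ applied to $g_\la=e^{\beta\sum_x\la^*_x\phi_x}$ (so that the vanishing is pointwise, with the placeholder $a_x=\la^*_x-\partial H/\partial\phi_x$ in front of the harmonic bracket), whereas you integrate by parts once directly against the full density $\rho$ (so your placeholder is the integrated quantity $\int\partial_{\phi_x}f\,\rho\,d\phi$); both collapse for the same structural reason.
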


We will use Theorem \ref{tobeproved22} to extend the invariance statement to general Hamiltonians.
We denote by $P^{n,N,\la^*}(d\phi d\om|\bar \phi)=\mu_{n,N,\la^*}(d\phi |\bar \phi)\times P(d\om)$ where $P(d\om)$ is the law of the Brownian motions $B_{\{x,y\}}(t)$ used to define the dynamics.
Furthermore we write
$\phi^{(N)}(t)=\{\phi^{(N)}_x(t| \phi, \bar \phi, \om, \Delta, \barr), x \in \La_{n,N}\}$ for the solution (when it exists) of \eqref{1.34.00.1} with initial datum $\phi$.
We will prove in Section \ref{sect5} and Appendix \ref{appB} the following theorem.
\medskip

\begin{thm}
\label{thmn3.1}
With the above notation, for any $\bar\phi$, $\Delta$ and $\barr$ there is a solution $\phi^{(N)}(t)=\phi^{(N)}(t| \phi, \bar \phi, \om, \Delta, \barr),\; t\ge 0$
of \eqref{1.34.00.1} for $P^{n,N,\la^*}$-almost all $(\phi,\om)$.  Moreover
 for any test function $f$
\begin{equation}
\int dP^{n,N,\la^*}  f(\phi^{(N)}(t))  = \int d\mu_{n,N,\la^*} f(\phi)
\end{equation}
Finally, recalling \eqref{currdef1} for notation,
\begin{equation}
\label{n3.4}
\int dP^{n,N,\la^*} J_{x\to y}(\phi^{(N)}(t)) = \la^*_x-\la^*_y
\end{equation}

 \end{thm}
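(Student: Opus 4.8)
The first claim of Theorem \ref{thmn3.1}, namely existence of the solution $\phi^{(N)}(t)$ for general (non-cutoff) Hamiltonians and $P^{n,N,\la^*}$-almost every $(\phi,\om)$, is to be obtained by the cutoff-removal scheme already announced: for each $R$ the cutoff equation has a global solution by Proposition \ref{thmn2.2}, the cutoff measure $\mu_{n,N,\la^*}$ (which does not depend on $R$, being the Boltzmann--Gibbs measure of the \emph{true} Hamiltonian) is invariant for the $R$-dynamics by Theorem \ref{tobeproved22}; one then runs the process in stationarity, uses the a priori $L^2$-type energy bound coming from the super-stability condition \eqref{n2.12} to control $\|\phi^{(N)}(s)\|_2^2$ uniformly in $R$ on compact time intervals, and concludes that the stopping time at which the cutoff is felt tends to $+\infty$ in probability as $R\to\infty$. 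This is the content of Section \ref{sect5} and Appendix \ref{appB} and I would simply invoke it. The stationarity identity $\int dP^{n,N,\la^*} f(\phi^{(N)}(t))=\int d\mu_{n,N,\la^*} f(\phi)$ then passes to the limit from the cutoff dynamics, where it holds because $\mu_{n,N,\la^*}$ is $L^{n,N,\Delta,\sigma}$-invariant.

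\textbf{The current identity \eqref{n3.4}.}
The plan is to read off \eqref{n3.4} from the invariance relation \eqref{pippo} applied to a well-chosen family of test functions, together with the structure of the generator. Fix a bond $\{x,y\}$ with $x\in\Lambda_{n,N}$. First I would apply \eqref{pippo} to the linear function $f(\phi)=\phi_x$ (smoothed/truncated at infinity if one wants to stay literally within ``smooth test functions'', using the Gaussian decay of $\mu_{n,N,\la^*}$ to justify the limit). Since $L_{z,w}\phi_x = -\bigl(\tfrac{\partial H}{\partial \phi_z}-\tfrac{\partial H}{\partial \phi_w}\bigr)(\delta_{z,x}-\delta_{w,x})$ and $\bar L_{z,w}\phi_x=-\bigl(\tfrac{\partial H}{\partial \phi_z}-\sigma_w\bigr)\delta_{z,x}$, one gets
\be
(L^{n,N,\Delta,\sigma}\phi_x)(\phi)
= -\!\!\sum_{\substack{y\sim x\\ y\in\Lambda_{n,N}}}\!\Bigl(\tfrac{\partial H}{\partial \phi_x}-\tfrac{\partial H}{\partial \phi_y}\Bigr)
  -\!\!\sum_{\substack{y\sim x\\ y\in\Delta}}\!\Bigl(\tfrac{\partial H}{\partial \phi_x}-\sigma_y\Bigr)
= -\sum_{y\sim x} J_{x\to y}(\phi),
\ee
where in the last sum $J_{x\to y}$ is understood with $\sigma_y$ in place of $\tfrac{\partial H}{\partial\phi_y}$ when $y\in\Delta$. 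Hence \eqref{pippo} gives $\sum_{y\sim x}\int dP^{n,N,\la^*}J_{x\to y}=0$ for every $x\in\Lambda_{n,N}$: the expected current is a discrete divergence-free field. The parallel computation with the \emph{shifted} function $f(\phi)=\phi_x-\la^*_x$ (a constant shift does not change $L^{n,N,\Delta,\sigma}f$, but it is bookkeeping-convenient) together with the defining property that $\la^*$ is $G^{N,\Delta}$-harmonic with boundary datum $\sigma$ on $\Delta$ — i.e. $\sum_{y\sim x,\,y\in\Lambda_{n,N}}(\la^*_y-\la^*_x)+\sum_{y\sim x,\,y\in\Delta}(\sigma_y-\la^*_x)=0$ for $x\in\Lambda_{n,N}$, which is Proposition \ref{recur2}(1) — shows that the candidate field $\widehat J_{x\to y}:=\la^*_x-\la^*_y$ is \emph{also} divergence-free with the same boundary fluxes on $\Delta$.

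\textbf{Closing the argument.}
To upgrade ``same divergence'' to ``equal'' I would integrate against a test function of the form $f(\phi)=\sum_{z\in\Lambda_{n,N}} g_z\,\phi_z$ with arbitrary coefficients $g_z$ (again truncated if necessary). From the display above,
\be
\int dP^{n,N,\la^*}\,(L^{n,N,\Delta,\sigma}f)
= -\sum_{z\in\Lambda_{n,N}} g_z \sum_{y\sim z}\int dP^{n,N,\la^*} J_{z\to y}
=0 ,
\ee
which, after the discrete summation by parts that moves $g$ onto the bonds, reads $\sum_{\{z,w\}} (g_z-g_w)\,\bigl(\int dP^{n,N,\la^*} J_{z\to w}\bigr)=0$ for all $g$; doing the same with $\widehat J$ and using that both fields carry the \emph{same} prescribed boundary flux across each bond $\{x,y\}$, $x\in\Lambda_{n,N}$, $y\in\Delta$ — for $J$ this flux is $\int dP^{n,N,\la^*}\bigl(\tfrac{\partial H}{\partial\phi_x}-\sigma_y\bigr)$, and the key point is to show via \eqref{pippo} applied to $f=\phi_x$ restricted appropriately that this equals $\la^*_x-\sigma_y=\la^*_x-\la^*_y$ — forces $\int dP^{n,N,\la^*} J_{z\to w}=\la^*_z-\la^*_w$ on every interior bond as well, since the difference is a divergence-free field with zero flux through $\Delta$ on a finite connected graph with absorbing boundary, hence identically zero. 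The main obstacle I anticipate is precisely this last uniqueness step: one must make sure the ``boundary fluxes'' of the two fields match, and this is exactly where the specific form of $\bar L_{x,y}$ in \eqref{2.55} (with $\sigma_y$ replacing $\partial H/\partial\phi_y$) and the harmonicity of $\la^*$ from Definition \ref{def222} have to be used in tandem; a clean way to do this is to test \eqref{pippo} directly with $f(\phi)=\phi_x$ for a single boundary-adjacent site $x$, extract $\int dP^{n,N,\la^*}\bigl(\tfrac{\partial H}{\partial \phi_x}-\sigma_y\bigr)$ in terms of interior currents, and induct inward from $\Delta$ along the graph. A secondary, more technical point is the justification of using unbounded linear test functions in \eqref{pippo}: this is handled by the Gaussian tail bound on $\mu_{n,N,\la^*}$ (super-stability) and dominated convergence after truncation.
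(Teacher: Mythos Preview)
Your plan for existence and stationarity is fine and is essentially what the paper does: approximate by the cutoff dynamics (Proposition \ref{thmn2.2}), use invariance of $\mu_{n,N,\la^*}$ for the cutoff generator (Theorem \ref{tobeproved22}) to get a uniform-in-$R$ time super-stability estimate, and then remove the cutoff.

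Your argument for \eqref{n3.4}, however, has a real gap. From \eqref{pippo} with $f(\phi)=\phi_x$ you correctly obtain that the field $b_{x\to y}:=\int d\mu_{n,N,\la^*}\,J_{x\to y}$ is divergence-free on $\Lambda_{n,N}$, and that $\widehat J_{x\to y}:=\la^*_x-\la^*_y$ is divergence-free by harmonicity of $\la^*$. But on a finite graph \emph{with cycles} (and $\Lambda_{n,N}$ has many), a divergence-free edge field with prescribed boundary flux through $\Delta$ is \emph{not} unique: one can always add a circulation around any cycle without changing the divergence or the boundary fluxes. So the step ``the difference is a divergence-free field with zero flux through $\Delta$ on a finite connected graph with absorbing boundary, hence identically zero'' is false as stated, and the proposed ``induct inward from $\Delta$'' cannot close for the same reason --- at each interior vertex the divergence constraint is one equation in several unknown bond values. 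Matching boundary fluxes alone will never pin down the interior currents.

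The paper's route for \eqref{n3.4} avoids all of this and is a one-line computation. By stationarity one is reduced to evaluating $\int d\mu_{n,N,\la^*}\,J_{x\to y}(\phi)$. Now integrate by parts directly against the explicit density \eqref{muN}: since
\[
\frac{\partial}{\partial\phi_x}\,e^{-\beta\bigl[H(\phi|\bar\phi)-\sum_z\la^*_z\phi_z\bigr]}
= -\beta\Bigl(\frac{\partial H}{\partial\phi_x}-\la^*_x\Bigr)\,e^{-\beta\bigl[H(\phi|\bar\phi)-\sum_z\la^*_z\phi_z\bigr]},
\]
integrating the left-hand side over $d\phi$ gives $0$ (the density decays by super-stability), hence $\int d\mu_{n,N,\la^*}\,\frac{\partial H}{\partial\phi_x}=\la^*_x$ for every $x\in\Lambda_{n,N}$. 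Subtracting the same identity at $y$ yields \eqref{n3.4} immediately. Note that this step uses only the explicit form of the Gibbs density, not the harmonicity of $\la^*$; harmonicity has already done its job in establishing invariance. This is both simpler and stronger than your divergence/uniqueness scheme: it shows directly that $\int d\mu_{n,N,\la^*}\,\frac{\partial H}{\partial\phi_x}$ is a \emph{function of the site} $x$ alone (namely $\la^*_x$), so the expected current is automatically a gradient and no cycle ambiguity can arise.
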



\noindent
{\bf Non-validity of Fick's law}.
With reference to Hypothesis \ref{case}(a), 
and using Propositions \ref{recur} and \ref{recur2}
the above theorem states that, in the limit $N\to\infty$,
the current \eqref{n3.4} is not identically zero, 
against what stated in the
Fick's law.


\medskip
To study the infinite-volume dynamics we will use that,
under Hypothesis \ref{case}(b), $\la^* = \la$ 
and that DLR measures with chemical potential $\la$ 
are invariant under the partial dynamics, a statement
that we specify next.
For this we need more complete notation.  We thus write $\phi$ for a configuration on $\C$, $\phi_{\La}$ and $\phi_{\La^c}$ for its restriction to
$\La_{n,N}$ and $\La_{n,N}^c$.
For $\Delta$ and $\sigma$ as in Hypothesis  \ref{case}(b) we  define an evolution on configurations $\phi$ by setting
\begin{equation}
\label{pluo}
T^{(N)}_t (\phi,\om,\la):= \big (\phi^{(N)}(t|\phi_{\La},\phi_{\La^c},\om,\Delta, \barr),\phi_{\La^c}\big)
\end{equation}
whenever the right hand side is well-defined, the definition being non empty because of Theorem \ref{thmn3.1}.
We will prove in Section \ref{sect5} the following theorem.

\medskip

\begin{thm}
\label{thmn3.1bis}
Let $\mu$ be a DLR measure for the formal hamiltonian $H-\sum_x \la_x \phi_x$ and $\mathcal P
= \mu \times P$.  Then, under Hypothesis \ref{case}(b),
 for any $N>n$ and any test function $f$,
\begin{equation}
\int d\mathcal P  f\circ T^{(N)}_t (\phi,\om,\la)  = \int d\mu  f
\end{equation}

 \end{thm}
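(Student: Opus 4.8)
The plan is to reduce Theorem \ref{thmn3.1bis} to Theorem \ref{thmn3.1} by exploiting the DLR (Dobrushin--Lanford--Ruelle) structure of $\mu$, which says precisely that the conditional law of $\phi_\La$ given $\phi_{\La^c}$ is the finite-volume Gibbs measure with the appropriate boundary condition. Concretely, fix $N>n$, set $\La=\La_{n,N}$, $\Delta=S_{n,N}$ and $\sigma=\la$ as in Hypothesis \ref{case}(b). Since $\mu$ is DLR for the formal Hamiltonian $H-\sum_x\la_x\phi_x$, disintegrating $\mu$ over the tail configuration $\bar\phi=\phi_{\La^c}$ gives
\begin{equation}
\mu(d\phi) = \int \mu_{n,N,\la}(d\phi_\La\,|\,\bar\phi)\; \mu(d\bar\phi),
\end{equation}
where $\mu_{n,N,\la}(\cdot\,|\,\bar\phi)$ is exactly the measure \eqref{muN}, because by item (3) of Proposition \ref{recur2} we have $\la^{(N,\Delta,\sigma)}_x=\la_x$ for all $x\in\La_{n,N}$, so the tilting term $\sum_{x\in\La_{n,N}}\la_x^*\phi_x$ appearing in \eqref{muN} is genuinely $\sum_{x\in\La_{n,N}}\la_x\phi_x$. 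This is the step where Hypothesis \ref{case}(b) is essential: for a generic boundary datum the self-consistency $\la^*=\la$ would fail and the argument would not close.

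Next I would unfold the definition \eqref{pluo} of $T^{(N)}_t$. Because $T^{(N)}_t$ leaves the frozen coordinates $\phi_{\La^c}$ untouched and evolves $\phi_\La$ according to the partial-dynamics SDE \eqref{1.34.00.1} driven by $\om$, for a test function $f$ we can write, using Fubini and conditioning first on $\bar\phi$,
\begin{equation}
\int d\mathcal P\; f\circ T^{(N)}_t(\phi,\om,\la)
= \int \mu(d\bar\phi) \int dP^{n,N,\la}(d\phi_\La\, d\om\,|\,\bar\phi)\; f\big(\phi^{(N)}(t|\phi_\La,\bar\phi,\om,\Delta,\la),\,\bar\phi\big).
\end{equation}
For each fixed $\bar\phi$ the inner integral is precisely the quantity controlled by Theorem \ref{thmn3.1} (with $\la^*=\la$), applied to the test function $\phi_\La\mapsto f(\phi_\La,\bar\phi)$: it equals $\int \mu_{n,N,\la}(d\phi_\La\,|\,\bar\phi)\, f(\phi_\La,\bar\phi)$. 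Integrating back over $\bar\phi$ and re-assembling the disintegration yields $\int d\mu\, f$, which is the claim. The almost-sure existence of the solution for $\mathcal P$-a.e.\ $(\phi,\om)$ likewise follows by integrating the $P^{n,N,\la}$-a.s.\ existence statement of Theorem \ref{thmn3.1} against $\mu(d\bar\phi)$, since $P^{n,N,\la}(\cdot|\bar\phi)=\mu_{n,N,\la}(\cdot|\bar\phi)\times P$.

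I expect the main technical obstacle to be a measurability/regularity point rather than a conceptual one: one must check that the conditional existence set and the conditional invariance identity from Theorem \ref{thmn3.1} are jointly measurable in $\bar\phi$, so that Fubini is legitimate and the $\bar\phi$-integral of the a.s.\ statements is again an a.s.\ statement under $\mathcal P$. This requires knowing that $\bar\phi\mapsto\mu_{n,N,\la}(\cdot|\bar\phi)$ is a genuine (measurable) regular conditional probability — which is part of the DLR formulation — and that the solution map of \eqref{1.34.00.1} depends measurably on the frozen boundary data, a consequence of the construction in Theorem \ref{thmn3.1} (whose proof is deferred to Section \ref{sect5} and Appendix \ref{appB}). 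A secondary point is integrability: to manipulate $\int d\mu\,f$ and the intermediate expressions one uses that $\mu$, being super-stable DLR, has all the exponential moments needed to control $f$ and the drift terms; this is where the positivity/super-stability assumptions on $H$ enter.
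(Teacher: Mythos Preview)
Your proposal is correct and follows essentially the same argument as the paper: disintegrate $\mu$ via the DLR property into $\mu_{n,N,\la}(d\phi_\La\,|\,\bar\phi)\,\mu(d\bar\phi)$, use Proposition~\ref{recur2}(3) to identify $\la^*=\la$ under Hypothesis~\ref{case}(b), apply Theorem~\ref{thmn3.1} to the inner integral for each fixed $\bar\phi$, and then integrate back. The paper's proof is slightly terser (it does not spell out the measurability and Fubini points you raise), but the structure is identical.
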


Thus the DLR measures are stationary for all partial dynamics.  However the existence of DLR measures  for the general Hamiltonians of Definition \ref{Definition 2.2} is an assumption, more conditions being needed to ensure their existence, for instance those stated in  Definition \ref{Definition 2.1} for  positive interactions.

\vskip.5cm

\subsection{Infinite volume}
We restrict here to positive hamiltonians $H$ in the sense of Definition \ref{Definition 2.1} and for notational simplicity we consider the specific case of the Dinanburg-Sinai hamiltonian defined in \eqref{sinai2}. Let $\la$ be the harmonic function of Definition \ref{def111}, $\mu$ a regular DLR measure with formal hamiltonian
$H - \sum \phi_x  \la_x$.  By regular we mean that it is supported by configurations $\phi$ such that, for all $x$ large enough, $|\phi_x| \le   (\log |x|)^{1/3}$. 

We call $\mathcal P=\mu \times P$ with $P(d\om)$ the law of the Brownians which define the dynamics. We then write $\phi(t) = \phi(t|\phi,\om)$ as the solution (when it exists) of \eqref{1.34.00} with initial datum $\phi$.
 We will prove in Section \ref{sec6} and Appendix \ref{appC} the following result.

\medskip

\begin{thm}
\label{thmn3.2}
With
$\mathcal P$-probability 1 there is a solution $\phi(t)= \phi(t|\phi,\om)$ of \eqref{1.34.00}. For any test function $f$ and any $t>0$
\begin{equation}
\int d\mathcal P  f(\phi(t))  = \int d\mu f(\phi)
\end{equation}
so that $\mu$ is time-invariant.  Finally, recalling \eqref{currdef1} for notation,
\begin{equation}
\label{3.99}
\int dP  J_{x\to y}(\phi(t)) = \la_x-\la_y
\end{equation}

 \end{thm}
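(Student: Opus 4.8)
The plan is to realise the infinite-volume dynamics $\phi(t)=\phi(t\mid\phi,\om)$ as the $N\to\infty$ limit of the partial dynamics of Section \ref{pd-sec} taken in the ``full setting'' of Hypothesis \ref{case}(b), and then to transport the two finite-volume statements—invariance (Theorem \ref{thmn3.1bis}) and the current identity (Theorem \ref{thmn3.1})—to the limit. Under Hypothesis \ref{case}(b) Proposition \ref{recur2}(3) gives $\la^{(N,\Delta,\si)}_x=\la_x$ for $x\in\La_{n,N}$, so each boundary reservoir on a bond $\{x,y\}$ with $x\in\La_{n,N}$, $y\in S_{n,N}$, injects exactly the chemical potential $\la_y$; that is, the partial evolution $T^{(N)}_t(\phi,\om,\la)$ is driven precisely by the harmonic profile $\la$. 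The first and main task is to show that, with $\mathcal P$-probability one, the trajectories $t\mapsto T^{(N)}_t(\phi,\om,\la)$ converge, uniformly on compact time intervals and locally in the site variable, to a limit solving the infinite system \eqref{1.34.00}; this is the content of Section \ref{sec6} and Appendix \ref{appC}.

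For this a priori analysis one fixes a weight $w_x=e^{-\ga|x|}$ with $\ga>0$ small and monitors $\sum_x w_x\,\phi^{(N)}_x(t)^2$. The superstability built into the Dinaburg--Sinai Hamiltonian \eqref{sinai2}—whose drift $\partial_{\phi_x}H=4\phi_x(\phi_x^2-1)+\sum_{y\sim x}(\phi_x-\phi_y)$ carries the strongly dissipative cubic term—together with the finite range of the interaction and the elementary inequality $\sum_{y\sim x}w_y\le C_\ga w_x$ for exponential weights, yields a closed Gronwall-type differential inequality for this functional; regularity of $\mu$, i.e. $|\phi_x|\le(\log|x|)^{1/3}$ for $|x|$ large, makes its initial value finite for $\mu$-a.e.\ $\phi$. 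Running the same argument on the difference $\phi^{(N)}-\phi^{(N')}$ with $N'>N$, whose equations \eqref{1.34.00.1} differ only through boundary terms localized near $S_{n,N}$, produces a Cauchy estimate forcing each coordinate $\phi^{(N)}_x(t)$ to converge as $N\to\infty$; one then passes to the limit in the integral equation \eqref{1.34.00.1} (the drift and martingale terms converge because their coefficients are continuous and locally bounded along the trajectories) and identifies the limit as a solution of \eqref{1.34.00}. The same estimates give uniform bounds $\sup_N\mathbb E_{\mathcal P}\big[|\phi^{(N)}_x(t)|^{p}\big]<\infty$ for all $p$, hence uniform integrability of $f(T^{(N)}_t(\phi,\om,\la))$ and of $J_{x\to y}(T^{(N)}_t(\phi,\om,\la))$ for any local smooth $f$ of polynomial growth.

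Granted this, invariance is immediate: by Theorem \ref{thmn3.1bis}, $\int d\mathcal P\,f\circ T^{(N)}_t(\phi,\om,\la)=\int d\mu\,f$ for every $N>n$, and letting $N\to\infty$ the almost sure convergence $T^{(N)}_t(\phi,\om,\la)\to\phi(t)$ together with the uniform integrability above turns the left-hand side into $\int d\mathcal P\,f(\phi(t))$, proving $\int d\mathcal P\,f(\phi(t))=\int d\mu\,f$. For the current identity \eqref{3.99} we return to \eqref{n3.4}: since $\mu$ is DLR for $H-\sum_x\la_x\phi_x$ and $\la^*_x=\la_x$ on $\La_{n,N}$ by Proposition \ref{recur2}(3), the conditional law of $\phi_{\La_{n,N}}$ under $\mu$ given the outside configuration $\bar\phi$ is exactly $\mu_{n,N,\la^*}(d\phi\mid\bar\phi)$; integrating \eqref{n3.4} over $\bar\phi$ with respect to $\mu$ therefore yields $\int d\mathcal P\,J_{x\to y}(T^{(N)}_t(\phi,\om,\la))=\la^*_x-\la^*_y=\la_x-\la_y$ as soon as $x,y\in\La_{n,N}$. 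As $J_{x\to y}=\partial_{\phi_x}H-\partial_{\phi_y}H$ is a fixed polynomial in finitely many coordinates, the uniform integrability lets us pass to the limit $N\to\infty$ on the left, giving \eqref{3.99}.

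The genuinely hard part is the a priori analysis of the previous paragraph: showing that the boundary perturbation at $S_{n,N}$ does not reach the bulk in the limit, that the superlinear drift does not let the trajectories explode, and that the resulting moment bounds are uniform in $N$. Once these estimates—and the matching uniform integrability—are in hand, everything else is a soft passage to the limit from the finite-volume results already proved.
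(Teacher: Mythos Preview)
Your overall plan---construct the infinite-volume dynamics as the $N\to\infty$ limit of the partial dynamics under Hypothesis~\ref{case}(b), then carry over invariance from Theorem~\ref{thmn3.1bis} and the current identity from Theorem~\ref{thmn3.1}---is exactly the route the paper takes, and your handling of the soft passage to the limit (invariance, the current formula via DLR + integration by parts) is correct. The divergence is in the a~priori analysis, and there your sketch has a real gap.

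The weighted $\ell^2$ Gronwall you propose is standard for lattice SDEs with globally Lipschitz drift, and the one-sided Lipschitz property of $-\phi^3$ can indeed close a differential inequality for $\sum_x w_x\phi^{(N)}_x(t)^2$. But the Cauchy step does not follow ``by the same argument''. The Ginzburg--Landau drift at $x$ is not $-\partial_{\phi_x}H$ but $-\sum_{y\sim x}(\partial_{\phi_x}H-\partial_{\phi_y}H)$, a discrete divergence. Writing $\psi=\phi^{(N)}-\phi^{(N')}$ and $g_z=\partial_{\phi_z}H(\phi^{(N)})-\partial_{\phi_z}H(\phi^{(N')})$, summation by parts turns the drift contribution to $\tfrac{d}{dt}\sum_x w_x\psi_x^2$ into (up to weight-gradient corrections) $-\sum_{\{x,y\}}(\psi_x-\psi_y)(g_x-g_y)$. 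The cubic part of $g_z$ is $4a_z\psi_z$ with $a_z=\phi_z^2+\phi_z\phi'_z+(\phi'_z)^2$, and the cross term $(\psi_x-\psi_y)(a_x\psi_x-a_y\psi_y)$ has no sign: it can be of order $-a_x|\psi_x\psi_y|$ with $a_x\sim\phi_x^2$ unbounded. So the inequality for $\psi$ does \emph{not} close without a \emph{pointwise} bound on $|\phi^{(N)}_x(t)|$, which a weighted $\ell^2$ estimate does not provide.

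The paper obtains exactly such pointwise control, by a different mechanism: an equilibrium Ruelle-type superstability estimate (Theorem~\ref{nnthm1.1}, giving $\mu_{N,\bar\phi,\la}[|\phi_{x_0}|\ge S]\le e^{-aS^4+b}$), extended in time by a martingale/Doob argument (Theorem~\ref{thmnn1.3}). This produces a good set $\mathcal G'_N$ of full $P_\mu$-measure on which $\sup_{t\le T}|\phi^{(N)}_x(t)|\le(\log N)^{1/3}$ for $x\in\La_{n,N/4}$. On that set the drift becomes Lipschitz with constant $O(\log N)$, and a \emph{spatial} iteration of the resulting integral inequality (Proposition~\ref{propnn1.6}), propagating the boundary discrepancy inward over $K\sim N$ steps, yields $|\phi^{(N)}_x(t)-\phi^{(2N)}_x(t)|\le (cT\log N)^K/K!\cdot(\log N)^{1/3}\le e^{-a''N\log N+b''}$. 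Here the lattice geometry (finite propagation) plays the role your exponential weight was meant to play, but only after the pointwise bound has tamed the Lipschitz constant.
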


 \medskip
 Theorem \ref{thmn3.2} proves the claim, stated in the introduction, that there are stationary measures in infinite volumes  carrying a non zero current.
The theorem will be proved by showing that the solution of the partial dynamics converges, as $N\to \infty$, to $\phi(t)$.

\medskip
\noindent
{\bf Validity of Fick's law.}
Let $\xi\in\{-n,\ldots,n\}$ then the stationary current per unit-area through a section $\Sigma_\xi$ in the channel
is,
by \eqref{3.99},
\be
{\mathcal J }_{\xi}= \frac{1}{n^{d-1}}\sum_{x\in \Sigma_\xi} \Big(\la_{x-e_1} - \la_{x}\Big) = \frac{I_\xi(\la)}{n^{d-1}}
\ee 
By Proposition \ref{recur} it follows that ${\mathcal J }_{\xi}$ does not depend on $\xi$
and it is bounded by $c/n$.
Contrary to what stated after Theorem \ref{thmn3.1}
this shows the validity of the Fick's law if we think of the system as the channel with the semi-spaces $\Z^d_{n,\pm} $ as ``gigantic'' reservoirs.  They  provide a steady current in the channel but, despite that, they do not change  in time: this has evidently to do with the fact that they are infinite, but this is not enough to explain the phenomenon because in $d=2$ the effect is not present.

\vskip.5cm

\subsection{The quadratic hamiltonian}
\label{sec3.3}

The quadratic Hamiltonian in the title is the one defined in \eqref{quad2}.  Being quadratic it may be seen as a cutoff hamiltonian so that the properties stated in Theorem \ref{thmn2.2} apply.  In particular, for any $\phi$, $\bar \phi$, $\bar \la$,
 \eqref{1.34.00.1} has solution
$\phi^{(N)}(t)$  for $P$-almost all $\om$.  Moreover the quadratic hamiltonian fits in the class of positive hamiltonians so that  Theorem \ref{thmn3.2} applies and the infinite-volume dynamics $\phi(t|\phi,\om)$ is well defined with $\mathcal P$-probability 1 and the DLR measure $\mu_\la$ with chemical potential $\la$ (which is a product measure)
is time-invariant.

We have however much more information, in fact, for an Ornestein-Ulhenbeck process 
it is known that Gaussian measures  evolve into Gaussian
measures, so that we only need to determine mean and covariance of the process.
In our case this can be done using duality.
Duality for the quadratic Ginzburg-Landau model follows from the algebraic
approach discussed in \cite{CGR}, see \cite{Groeneveld} for a derivation based on
Lie algebra representation theory. For completeness we shall also provide
a direct proof in Section \ref{due}.

For finite volumes $\Lambda_{n,N}$ duality is stated as follows.
Given $\Delta$ and $\sigma$ the duality function is
\be
\label{hermi}
D^{\Delta,\sigma}(\phi,\eta) := \prod_{x\in \Delta} \sigma_x^{\eta_x}\prod_{x\in \Lambda_{n,N}} h_{\eta_x} (\phi_x)
\ee 
where $h_{n} (\xi)$ with $\xi\in\R$ denotes the Hermite polynomial of degree $n$
and  $\eta\in \N^{\Lambda_{n,N}}$ with $|\eta| = \sum_{x \in \Lambda_{n,N}} \eta_x < \infty$.
Duality relates the Ginzburg-Landau evolution of $\{\phi^N(t), t\ge 0\}$ to the evolution
of the Markov process $\{\eta^{N}(t), t\ge 0\}$ with generator
\be
\mathcal L = \sum_{\substack{x,y\in \Lambda_{n,N}\\{\{x,y\}}}} {\mathcal L}_{x,y} + \sum_{x\in \Lambda_{n,N}} \sum_{y\in \Delta} {\bar{\mathcal{L}}}_{x,y}
\ee
where
\be
(\mathcal L_{x,y}f)(\eta) = \eta_x(f(\eta^{x,y})-f(\eta)) + \eta_y(f(\eta^{y,x})-f(\eta))
\ee
with $\eta^{x,y}$ the configuration obtained from $\eta$ by moving a particle
from site $x$ to site $y$ and
\be
\bar{\mathcal L}_{x,y} f(\eta) = \eta_x(f(\eta^{x,y}) - f(\eta))
\ee
Thus the dual process is made of independent particles with
absorptions at $\Delta$. We denote by ${\mathcal{E}}_\eta$
the expectation with respect to the law of the process $\{\eta^{N}(t), t\ \ge 0\}$
started at $\eta$.
Similarly, we denote by ${{E}}_\phi$
the expectation with respect to the law of the process $\{\phi^{N}(t), t\ge 0\}$
started at $\phi$. We will prove in Section \ref{due} the following result.
\begin{thm}
\label{dual-thm} 
With the above notation we have
\be
\label{dual-formula}
E_{\phi} [D^{\Delta,\sigma}(\phi^N(t),\eta)] = {\mathcal{E}}_\eta [D^{\Delta,\sigma}(\phi,\eta^N(t))]
\ee
\end{thm}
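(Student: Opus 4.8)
The plan is to deduce the semigroup identity \eqref{dual-formula} from a duality relation between the generators $L^{n,N,\Delta,\sigma}$ and $\mathcal L$ on the function \eqref{hermi}, and then to propagate that relation in time. Throughout I take $\beta=1$ to lighten notation; for general $\beta$ one simply rescales the argument of the Hermite polynomials so that $h_n$ stays the eigenfunction, with eigenvalue $-n$, of the single-site Ornstein--Uhlenbeck operator $\frac1\beta\partial_\phi^2-\phi\,\partial_\phi$. On a single site $x$ introduce the ladder operators $a^-_x=\partial_{\phi_x}$ and $a^+_x=\phi_x-\partial_{\phi_x}$; the Hermite recursions $\phi\,h_n=h_{n+1}+n\,h_{n-1}$, $h_n'=n\,h_{n-1}$ become $a^+_x h_n(\phi_x)=h_{n+1}(\phi_x)$ and $a^-_x h_n(\phi_x)=n\,h_{n-1}(\phi_x)$, so the number operator $N_x:=a^+_x a^-_x=\phi_x\partial_{\phi_x}-\partial^2_{\phi_x}$ satisfies $N_x h_n(\phi_x)=n\,h_n(\phi_x)$. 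For the quadratic Hamiltonian \eqref{quad2} one has $\partial H/\partial\phi_x=\phi_x$, and a one-line computation gives the key operator identities
\begin{equation*}
L_{x,y}=-(a^+_x-a^+_y)(a^-_x-a^-_y),\qquad \bar L_{x,y}=-N_x+\sigma_y\,a^-_x .
\end{equation*}

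Next I apply these operators to $D^{\Delta,\sigma}$. Since $L_{x,y}$ touches only the variables $\phi_x,\phi_y$ and $\bar L_{x,y}$ only $\phi_x$, all other factors of \eqref{hermi} are inert, and with $a=\eta_x$, $b=\eta_y$ one obtains
\begin{equation*}
L_{x,y}\big(h_a(\phi_x)h_b(\phi_y)\big)=-(a+b)\,h_a(\phi_x)h_b(\phi_y)+a\,h_{a-1}(\phi_x)h_{b+1}(\phi_y)+b\,h_{a+1}(\phi_x)h_{b-1}(\phi_y),
\end{equation*}
which is precisely $\mathcal L_{x,y}$ applied to $\eta\mapsto h_{\eta_x}(\phi_x)h_{\eta_y}(\phi_y)$ (moving a particle from $x$ to $y$ lowers $\eta_x$ and raises $\eta_y$, and conversely). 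Likewise, for $x\in\Lambda_{n,N}$ and $y\in\Delta$,
\begin{equation*}
\bar L_{x,y}D^{\Delta,\sigma}(\phi,\eta)=\eta_x\big(\sigma_y\,h_{\eta_x-1}(\phi_x)-h_{\eta_x}(\phi_x)\big)\prod_{w\in\Delta}\sigma_w^{\eta_w}\prod_{z\in\Lambda_{n,N}\setminus\{x\}}h_{\eta_z}(\phi_z),
\end{equation*}
which is exactly $\bar{\mathcal L}_{x,y}$ applied to $\eta\mapsto D^{\Delta,\sigma}(\phi,\eta)$, the factor $\sigma_y$ being produced because the absorbed particle picks up the boundary weight at $y$. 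Summing over bonds gives the generator duality $\big(L^{n,N,\Delta,\sigma}D^{\Delta,\sigma}(\cdot,\eta)\big)(\phi)=\big(\mathcal L\,D^{\Delta,\sigma}(\phi,\cdot)\big)(\eta)$ for every fixed $\phi$ and $\eta$.

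To upgrade this to \eqref{dual-formula}, fix $t>0$ and $\eta$ and set $F(s):=E_\phi\,\mathcal E_\eta\big[D^{\Delta,\sigma}(\phi^N(s),\eta^N(t-s))\big]$, $s\in[0,t]$, with the two processes independent. The dual process $\eta^N(\cdot)$ is a finite system of particles on the finite lattice $\Lambda_{n,N}\cup\Delta$ whose total number never increases, so $D^{\Delta,\sigma}(\cdot,\eta^N(t-s))$ is, uniformly in $s$ and in the trajectory, a polynomial in $\phi$ of degree at most $|\eta|$; and $\phi^N(\cdot)$, being an Ornstein--Uhlenbeck process for the quadratic Hamiltonian (the drift in \eqref{1.34.00.1} is affine), has moments of every order bounded uniformly on $[0,t]$. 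Hence every expectation is finite, Dynkin's formula applies to each factor, and differentiating $F$ in $s$ the two contributions cancel by the generator duality, so $F$ is constant; evaluating $F(t)=F(0)$ yields \eqref{dual-formula}. Alternatively one can solve the linear equation \eqref{1.34.00.1} explicitly and read off \eqref{dual-formula} from the Gaussian moment formulas for Hermite polynomials; this is the self-contained argument announced for Section \ref{due}.

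The algebra is not the difficulty — the ladder-operator identities above make both the bulk and the boundary computations essentially mechanical. The only genuine point is the analytic justification of the time-propagation step: one must apply Dynkin's formula to the polynomially growing function $D^{\Delta,\sigma}(\cdot,\eta)$ and control the remainder. This is handled by the standard localization (stop $\phi^N$ on exit from large balls, then remove the cutoff using the uniform moment bounds for the Ornstein--Uhlenbeck process), so the obstacle is routine rather than essential.
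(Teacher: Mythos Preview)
Your proof is correct and follows essentially the same route as the paper: both establish the generator duality $L_{x,y}D=\mathcal L_{x,y}D$ and $\bar L_{x,y}D=\bar{\mathcal L}_{x,y}D$ via the Hermite recursions and then pass to semigroups. Your ladder-operator factorizations $L_{x,y}=-(a^+_x-a^+_y)(a^-_x-a^-_y)$ and $\bar L_{x,y}=-N_x+\sigma_y a^-_x$ package the algebra more cleanly than the paper's direct expand-and-regroup with the identities \eqref{id1}--\eqref{id3}, and you make explicit the time-propagation step (the interpolation $F(s)$ and its analytic justification), which the paper leaves implicit after Lemma~\ref{lem1}.
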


\medskip
\begin{rem}
\normalfont
Using duality, the mean and covariance of the Gaussian 
process $\phi^N(t)$ can be computed starting the dual
process with one and two dual particles.
Furthermore duality also implies convergence
in the limit $t\to\infty$ to the Gibbs measure 
$\mu_{n,N,\la^*}(d\phi)$
given by
\be
\label{n3.1112}
\mu_{n,N,\la^*}(d\phi)=\prod_{x\in\C} \frac 1 {\mathcal Z}  \; \cdot  \exp\Big\{-\tfrac{\beta}{2}  (\phi_x - {\la}^*_x)^2 \Big\}\,  d\phi
\ee
where ${\mathcal Z}$ is a normalizing constant.
Indeed the duality formula \eqref{dual-formula} gives
\be
\lim_{t\to\infty} E_{\phi} [D(\phi(t),\eta)] = \prod_{x\in\Lambda_{n,N}} \Big(\sum_{y\in\Delta} \mathbb{P}_x(X(\infty)=y)\sigma_y\Big)^{\eta_x} = \prod_{x\in\Lambda_{n,N}} (\lambda^*)^{\eta_x}. 
\ee
Expression \eqref{n3.1112} follows by recalling that, for a  Gaussian random variable $Y$ with mean $m$, one has
\be
\mathbb{E}[h_n(Y)] = m^n\;.
\ee
Similarly one can check invariance of \ref{n3.1112}.
Labelling the particles of $\eta$, we describe
$\eta$ as a configuration $X= \{X_i, i=1,\ldots, |\eta|\}$
where the particles evolve independently
and $X_i(t)$ is the position of the $i^{th}$ particle at time $t\ge0$.
We have
\be
\int \mu^{(t)}_{\la}(d\phi) D(\phi,\eta) =  \mathbb{E}_{X} \Big[\prod_{i=1}^{|\eta|} \lambda^*_{X_i(t)} \Big] =  \prod_{i=1}^{|\eta|}  \mathbb{E}_{X_i} \Big[\lambda^*_{X_i(t)}\Big] =  \prod_{i=1}^{|\eta|}   \lambda^*_{X_i}
\ee
where in the last equality it has been used that $\lambda$ is harmonic. Thus
\be
\int \mu^{(t)}_{\la}(d\phi) D(\phi,\eta)
 =  \int \mu_{\la} (d\phi) D(\phi,\eta).
\ee
\end{rem}
\begin{rem}
\normalfont
The duality formula can be used
to characterize the measure at infinite volume 
(by taking the $N\to\infty$ limit in \eqref{dual-formula})
and to show existence of the infinite-volume dynamics
for general initial conditions $\phi$
which may grow polynomially at infinity.
\end{rem}
\begin{rem}
\normalfont
There is a large class of models where  duality
 holds, including both particle systems
(symmetric exclusion, Kipnis-Marchioro-Presutti model, independent particles,
symmetric inclusion) and several interacting diffusions. We refer to the survey in preparation
\cite{CGR}.
Results similar to those of the Ginzburg-Landau model with quadratic Hamiltonian can be obtained
in models where duality holds.
\end{rem}

\vskip.5cm

\section{Proof of Theorem \ref{tobeproved22}}
\label{pd-section}

%
%
%
%

\medskip
\noindent
Equation \eqref{pippo} will be proved via an explicit computation that uses that $\la^*$ is harmonic.
This  generalizes a previous computation by De Masi, Olla, Presutti in \cite{DOP}.
In this section we shorthand $L=L^{n,N,\Delta,\sigma}$.
We have
\begin{eqnarray}
\int (Lf)(\phi) \mu_{n,N,\la}(d\phi)
& = &
\frac 1 {\mathcal Z_{n,N,\la}}  \; \cdot \int (Lf)(\phi)  \, e^{-\beta [H(\phi|\bar{\phi}) - \sum_{x\in\Lambda_{n,N}}  \la^*_x \phi_x]}\, d\phi\nn\\
& = &
\frac {\mathcal Z_{n,0,N}}  {\mathcal Z_{n,N,\la}}  \; \langle Lf, e^{\beta \sum_{x\in\Lambda_{n,N}}  \la^*_x \phi_x} \rangle_{\mu_{n,N,0}} \nn\\
& = & \frac {\mathcal Z_{0,N}}  {\mathcal Z_{\la,N}}  \;  \langle f, L^{\dagger}e^{\beta \sum_{x\in\Lambda_{n,N}}  \la^*_x \phi_x} \rangle_{\mu_{n,N,0}}
\end{eqnarray}
where $\langle f, g\rangle_{\mu_{n,N,0}} :=\int f(\phi)  g(\phi) \, \mu_{n,N,0}(d\phi)$ and $L^{\dagger}$ denotes  the adjoint in $L^2(\mu_{n,N,0})$.
 Hence, to prove \eqref{pippo} it is enough to show that
 \be
 \label{compute}
 (L^{\dagger}g_\la)(\phi)=0
 \qquad
\text{for}
\qquad
 g_\la(\phi):=e^{\beta\sum_{x\in \Lambda_{n,N}}\la^*_{x} \phi_x}.
 \ee
We compute the adjoint
\beq
L^{\dagger} =\sum_{\substack{x,y\in \Lambda_{n,N}\\{\{x,y\}}}} L_{x,y}^\dagger+\sum_{ x \in \Lambda_{n,N}}\sum_{\substack{y  \in \Delta \\ {y\sim x}}} \bar{L}_{x,y}^\dagger.
\eeq
As in \cite{DOP}, we have
\beq
&&L_{x,y}^{\dagger}=L_{x,y}\nn\\
&&\bar{L}_{x,y}^\dagger= e^{\beta \barr_y \sum_{z\in\Lambda_{n,N}} \phi_z}  \cdot \bar{L}_{x,y} \cdot e^{- \beta \barr_y \sum_{z\in\Lambda_{n,N}} \phi_z}
\eeq
Thus we find
\beq
\frac{1}{\beta} (L_{x,y}^\dagger g_\la)(\phi)
& = &
g_\la(\phi)\cdot \[\(\frac{\partial H}{\partial \phi_y}-\frac{\partial H}{\partial \phi_x}\)\(\la^*_x-\la^*_y\)+\(\la^*_x-\la^*_y\)^2\]\nn\\
& = & g_\la(\phi) \cdot (a_x-a_y)(\la^*_x-\la^*_y)\nonumber\\
\eeq
where $x,y\in \Lambda_{n,N}$ and we have defined
\beq
a_x:=\la^*_x-\frac{\partial H}{\partial \phi_x}.
\eeq
Similarly
\beq
\frac{1}{\beta}  (\bar{L}_{x,y}^\dagger g_\la ) (\phi)&= &
g_\la(\phi) \cdot \[\(\barr_{y}-\frac{\partial H}{\partial \phi_x}\)(\la^*_x-\barr_{y})+(\la^*_x-\barr_{y})^2\]\nn\\
&=&
g_\la(\phi)\cdot a_x \cdot (\la^*_{x}-\barr_y)
\eeq
where $x\in \Lambda_{n,N}$ and $y\in \Delta$.
Hence \eqref{compute} is equivalent to
\beq
\sum_{\substack{x,y\in \Lambda_{n,N}\\ {\{x,y\}}}} (a_x-a_y)(\la^*_x-\la^*_y)+\sum_{ x \in \Lambda_{n,N}}\sum_{\substack{y  \in \Delta \\ {y\sim x}}}
a_x(\la^*_x-\barr_{y})  = 0\,.
\eeq
Changing from a sum over bonds to a sum over neighboring sites, we can rewrite this as
\beq
\sum_{x\in \Lambda_{n,N}} a_x \Big[ \sum_{\substack{y\in \Lambda_{n,N}\\ {y \sim x}}} (\la^*_x-\la^*_y)+\sum_{\substack{y  \in \Delta \\ {y\sim x}}} (\la^*_x-\barr_{y}) \Big] = 0\,,
\eeq
which is clearly satisfied as a consequence of Prop. \ref{recur2}.
\qed

\medskip

\vskip.5cm

\section{Proof of Theorems \ref{thmn3.1} and  \ref{thmn3.1bis}}
\label{sect5}

For brevity we call $\La=\La_{n,N}$, and  $\mu(d\phi):= \mu_{n,N,\la^*}(d\phi|\bar\phi)$. We write $ H_{\La,R}(\phi|\bar\phi)$
and $\mu^{(R)}$
when we consider the hamiltonian with cutoff $R$, see Definition \ref{Definition 2.3}.
%
%
 We have already proved that the stochastic differential equations \eqref{1.34.00.1} with the cutoff  hamiltonian  
  have, for any initial datum, global solution with $P$-probability 1, they are denoted here by $\phi^{(R)}(t)$. By what proved in the previous section the Gibbs measure $\mu^{(R)}$ (with the chemical potential $ \la^*$) is invariant.  We  will exploit this to prove a ``time super-stability estimate''.
We write  $\mathcal P^{(R)} = \mu^{(R)} \times P$, and for any configuration $\phi$ in $\La_{n,N}$,
   $$
   \|\phi \|_2^2  = \sum_{x\in \La_{n,N}} \phi_x^2.
   $$
Then we have:
%
%
%
\begin{thm}
\label{thmn1.3}
Given $T>0$ there are $A>0$ and $B$ (independent of $R$)  so that, for all  $S>2$,
   \begin{equation}
   \label{n1.12}
 \mathcal P^{(R)} \bigg[\sup_{t \le T}  \|\phi^{(R)} (t)\|_2^2 \ge S \bigg] \le e^{- AS^2 +B}
    \end{equation}

\end{thm}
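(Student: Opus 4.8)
The plan is to exploit the fact that the cutoff Gibbs measure $\mu^{(R)}$ is invariant for $\phi^{(R)}(t)$ together with a Gaussian-type tail estimate at equilibrium, turning a supremum-in-time bound into an integrated-in-time bound via a stochastic differential inequality for $\|\phi^{(R)}(t)\|_2^2$. The essential point which produces the \emph{quadratic} exponent $e^{-AS^2+B}$ — rather than the naive linear one — is the super-stability built into the Hamiltonian: $H^0_\La(\phi_\La)=a\sum_x\phi_x^2$ together with $H'_\La\ge -b|\La|$. In particular, under $\mu^{(R)}$ one has $\mu^{(R)}[\|\phi\|_2^2\ge S]\le e^{-\beta a S + C|\La|}$ after completing the square against the chemical potential term $\beta\sum_x\la^*_x\phi_x$; but one can do better: the Gaussian weight $e^{-\beta a\|\phi\|_2^2}$ forces the full distribution of $\|\phi\|_2^2$ to have Gaussian (not merely exponential) upper tails once we subtract off the mean, because the $|\La|$-dimensional Gaussian measure of the shell $\{\|\phi\|_2^2\ge S\}$ decays like $e^{-c(S-|\La|)^2/|\La|}$ for $S$ large. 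Here $|\La|=|\Lambda_{n,N}|$ is fixed, so for $S>2$ large this is $e^{-AS^2+B}$ with $A,B$ depending on $n,N,\beta$ but crucially not on $R$ (the cutoff only turns off $H'$, which is already bounded below, so all the constants are uniform in $R$).

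First I would record the equilibrium estimate: there exist $A_0>0$, $B_0$, independent of $R$, with $\mu^{(R)}[\|\phi\|_2^2\ge S]\le e^{-A_0 S^2+B_0}$ for all $S>2$. This follows from \eqref{muN} with $\la^*=\la^{(N,\Delta,\sigma)}$: bound $H_\La(\phi|\bar\phi)-\sum_x\la^*_x\phi_x\ge a\|\phi\|_2^2-\sum_x\la^*_x\phi_x - (\text{const})\ge \tfrac a2\|\phi\|_2^2 - C$ by completing the square, then compare with the Gaussian reference measure $\propto e^{-\beta a\|\phi\|_2^2/2}$ whose mass on $\{\|\phi\|_2^2\ge S\}$ is $O(e^{-(\beta a/2)(S-|\La|)^2/|\La|})$ by a standard chi-squared tail bound (e.g.\ Laurent–Massart), and absorb the normalizing constants — all uniformly in $R$ since $H'$ enters only through the lower bound $-b|\La|$ and the cutoff $g_R$ can only increase the exponent $-\beta[H_\La\cdots]$ on the region where it differs from $1$.

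Next I would pass from the static estimate to the dynamic supremum. Here I would apply Itô's formula to $\|\phi^{(R)}(t)\|_2^2$ and observe that the drift is $\le C_1 - C_2\|\phi^{(R)}(t)\|_2^2$ plus a martingale $M(t)$ with quadratic variation $\langle M\rangle_t\le C_3\int_0^t(1+\|\phi^{(R)}(s)\|_2^2)\,ds$; then I would either (i) use the invariance of $\mu^{(R)}$ directly — write $\mathcal P^{(R)}[\sup_{t\le T}\|\phi^{(R)}(t)\|_2^2\ge S]$, split on whether $\|\phi^{(R)}(0)\|_2^2\ge S/2$ (controlled by the static estimate with $S/2$, still giving $e^{-A S^2+B}$), and on the complementary event use a maximal inequality for the supermartingale $e^{\theta\|\phi^{(R)}(t)\|_2^2}e^{-\theta C t}$ with $\theta$ chosen proportional to $S$ — it is the freedom to take $\theta\sim S$ that upgrades the exponent from linear to quadratic, exactly as in the static bound; or (ii) follow the De Masi–Olla–Presutti-style argument (cited in the paper around \cite{DOP}) which does precisely this kind of exponential supermartingale estimate. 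The main obstacle is controlling the martingale part uniformly in $R$ with the correct quadratic rate: one must choose the exponential weight $e^{\theta\|\phi^{(R)}(t)\|_2^2}$ with $\theta$ small enough that the $\langle M\rangle$ contribution $\tfrac12\theta^2 C_3(1+\|\phi^{(R)}\|_2^2)$ is dominated by the drift term $-\theta C_2\|\phi^{(R)}\|_2^2$, which forces $\theta\le C_2/(C_3)$ to be bounded — but then optimizing over the allowed range of $\theta$ against the target level $S$ still yields $e^{-AS^2+B}$ provided the initial distribution supplies a Gaussian tail (which it does, by step one), so the quadratic rate is inherited from the initial condition rather than generated by the dynamics. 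I expect the bookkeeping to show $A$ and $B$ depend on $T,n,N,\beta,a,b$ and the fixed $\bar\phi,\la^*$, but not on $R$, which is all that is claimed.
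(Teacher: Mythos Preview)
Your equilibrium tail estimate is wrong, and this is the central gap. For a Gaussian reference density $\propto e^{-\beta a\|\phi\|_2^2/2}$ on $\mathbb R^{|\La|}$, the random variable $\|\phi\|_2^2$ is (up to scale) $\chi^2_{|\La|}$, and its upper tail decays like $e^{-cS}$, \emph{not} $e^{-cS^2}$. The Laurent--Massart inequality you invoke reads $P[\chi^2_d-d\ge 2\sqrt{dt}+2t]\le e^{-t}$; for $S\gg d$ the term $2t$ dominates and solving $S\approx d+2t$ gives only $t\approx S/2$. Your rate $e^{-c(S-|\La|)^2/|\La|}$ comes from retaining only the fluctuation term $2\sqrt{dt}$, which is valid in the moderate-deviation window $S-d=O(\sqrt d)$, not for all $S>2$. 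Since $|\La|$ is fixed here, the relevant regime for large $S$ is the large-deviation one, where the decay is linear in $S$. This cascades into step two: an exponential tilt $e^{\theta\|\phi\|_2^2}$ with $\theta$ forced to stay bounded (by the quadratic-variation constraint you yourself identify) can at best propagate a linear exponential tail, never manufacture a Gaussian one. So your program, even if the supermartingale bookkeeping is carried out, yields $e^{-AS+B}$.

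The paper's route is structurally different and does not apply It\^o to $\|\phi^{(R)}\|_2^2$. It introduces a smooth indicator $f$ with $f=0$ on $[0,S-1]$, $f=1$ on $[S,\infty)$, and applies the Holley--Stroock martingale inequality (Theorem~\ref{thmn5.2}) to $z(t)=f(\|\phi^{(R)}(t)\|_2^2)$. The point is that $\gamma_1=Lf$ and $\gamma_2=Lf^2-2fLf$ are supported on the thin shell $\{\|\phi\|_2^2\in[S-1,S]\}$ because that is where $f',f''$ live; by the invariance of $\mu^{(R)}$ the time integrals in \eqref{n1.15.2} reduce to static expectations of (polynomial)$\times\mathbf 1_{\|\phi\|_2^2\in[S-1,S]}$, which the density bound \eqref{11.13} controls uniformly in $R$. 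The resulting estimate is $e^{-AS+B}$ (look at the displayed bounds after \eqref{n1.15.2.2} and \eqref{n1.15.2.3.4}); the exponent $S^2$ in the theorem statement appears to be a typo --- indeed the Corollary that immediately follows uses only the linear rate $e^{-AR+B}$.
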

Theorem \ref{thmn1.3} will be proved in Appendix \ref{appB}.


\medskip

We will next prove that we can replace $\mu^{(R)}$ by $\mu$ in \eqref{n1.12}.

\medskip

\begin{prop}
Calling
$d\mu^{(R)}(\phi)= G^{(R)}(\phi) d\phi$ and $d\mu(\phi)= G(\phi) d\phi$, we have
  \begin{equation}
   \label{11.4}
 \int d\phi \,| G^{(R)}(\phi)-G(\phi)|   \le 2(p+p')
    \end{equation}
where 
   \begin{equation}
   \label{11.5}
 p = \mu^{(R)} \Big[\|\phi\|_2^2 > R\Big],\qquad p' = \mu\Big[\|\phi\|_2^2 > R\Big]
    \end{equation}
\end{prop}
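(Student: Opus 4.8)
The idea is to compare the two densities $G^{(R)}$ and $G$ directly, exploiting that they differ only through the cutoff factor $g_R(\|\phi_\La\|_2^2)$ in the Hamiltonian. On the set $\{\|\phi_\La\|_2^2 \le R-1\}$ one has $g_R(\|\phi_\La\|_2^2)=1$, so $H_{\La,R}(\phi|\bar\phi)=H_\La(\phi|\bar\phi)$ there and hence the two unnormalized densities $e^{-\beta[H_{\La,R}(\phi|\bar\phi)-\sum_x\la^*_x\phi_x]}$ and $e^{-\beta[H_\La(\phi|\bar\phi)-\sum_x\la^*_x\phi_x]}$ agree pointwise; they differ only on $\{\|\phi_\La\|_2^2 > R-1\}$. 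Thus the two probability measures are obtained by normalizing densities that coincide except on a set which, under either measure, has small probability (controlled by $p$ and $p'$ as in \eqref{11.5}, up to the harmless gap between $R-1$ and $R$, which one can absorb or handle by defining $p,p'$ with $R-1$; I will follow the paper's convention).

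\textbf{Key steps.} First I would write $G^{(R)}(\phi) = \rho_R(\phi)/Z_R$ and $G(\phi)=\rho(\phi)/Z$ with $\rho_R,\rho$ the unnormalized densities and $Z_R=\int\rho_R$, $Z=\int\rho$, and record that $\rho_R=\rho$ on the set $E:=\{\|\phi_\La\|_2^2\le R-1\}$ (more precisely, on $\{\|\phi\|_2^2\le R-1\}$, where the cutoff is inactive). Second, I would split
\be
\int |G^{(R)}-G|\,d\phi = \int_E |G^{(R)}-G|\,d\phi + \int_{E^c} |G^{(R)}-G|\,d\phi.
\ee
On $E^c$, bound the integral by $\mu^{(R)}(E^c)+\mu(E^c) \le p+p'$ by the triangle inequality. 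On $E$, use that $\rho_R=\rho$ there, so $G^{(R)}-G = \rho(\tfrac1{Z_R}-\tfrac1Z)$ on $E$, hence
\be
\int_E |G^{(R)}-G|\,d\phi = \Big|\frac{1}{Z_R}-\frac{1}{Z}\Big|\int_E \rho\,d\phi \le \Big|\frac{Z}{Z_R}-1\Big|\cdot\mu(E) \le \Big|\frac{Z}{Z_R}-1\Big|.
\ee
It then remains to bound $|Z/Z_R - 1|$. Splitting $Z=\int_E\rho+\int_{E^c}\rho$ and $Z_R=\int_E\rho+\int_{E^c}\rho_R=\int_E\rho_R+\int_{E^c}\rho_R$, and using $\int_E\rho=\int_E\rho_R$, one gets $Z-Z_R=\int_{E^c}(\rho-\rho_R)$, so
\be
\Big|\frac{Z}{Z_R}-1\Big| = \frac{1}{Z_R}\Big|\int_{E^c}(\rho-\rho_R)\,d\phi\Big| \le \frac{1}{Z_R}\Big(\int_{E^c}\rho + \int_{E^c}\rho_R\Big) = \mu(E^c)\frac{Z}{Z_R} + \mu^{(R)}(E^c).
\ee
Combining and rearranging (using that $\int_E\rho=\int_E\rho_R$ forces $Z/Z_R$ to be comparable to $1$ as long as $\mu(E^c),\mu^{(R)}(E^c)<1$, which is automatic for $R$ large) yields $|Z/Z_R-1|\le$ (something bounded by $p+p'$ up to a constant factor), and adding the $E^c$-contribution $p+p'$ gives the clean bound $2(p+p')$ after the arithmetic is tidied.

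\textbf{Main obstacle.} The only real point of care is getting the constant exactly $2$ rather than some larger absolute constant: one must arrange the estimates on the $E$-part (i.e.\ on the ratio of partition functions) so that they contribute at most $p+p'$, not $2(p+p')$, to the total. This is achieved by observing that $\int_E\rho=\int_E\rho_R$ implies $\min(Z,Z_R)\ge\int_E\rho = (1-\mu(E^c))Z = (1-\mu^{(R)}(E^c))Z_R$, which makes the ratio $Z/Z_R$ lie in $[1-p', 1/(1-p)]$ (or the symmetric bounds), and a short computation then shows $|Z/Z_R-1|\cdot\mu(E)\le p+p'$; everything else is bookkeeping. An alternative, perhaps cleaner route to the same constant is the general fact that for two probability densities whose unnormalized versions $\rho,\rho'$ satisfy $\rho=\rho'$ off a set $F$, the total variation distance is at most $\mu(F)+\mu'(F) \le p+p'$ directly (by a coupling / explicit computation with $\min(\rho,\rho')$), and then $\int|G^{(R)}-G| = 2\,d_{TV} \le 2(p+p')$; I would likely present it this way, as it makes the factor $2$ transparent and the proof essentially immediate once one notes that the cutoff only modifies the density on $\{\|\phi\|_2^2>R\}$.
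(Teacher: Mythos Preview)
Your proposal is correct, and your ``cleaner route'' at the end is exactly the paper's argument. The paper splits $\int|G^{(R)}-G|$ over $\{\|\phi\|_2^2>R\}$ and its complement; on the complement it introduces the common conditional density $g(\phi)=e^{-\beta H_\La(\phi|\bar\phi)}\big/\int_{\|\phi\|_2^2\le R}e^{-\beta H_\La}$ and notes that $G^{(R)}=(1-p)g$ and $G=(1-p')g$ there, so that piece contributes $|p-p'|\le p+p'$, while the bad set contributes $p+p'$ trivially. Your first route via $|Z/Z_R-1|$ would also work but is needlessly roundabout; normalizing both densities by the \emph{same} integral $\int_E\rho=\int_E\rho_R$ (as the paper does, and as is implicit in your TV argument) makes the constant $2$ fall out with no ``tidying'' required.
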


\medskip
\noindent
{\bf Proof.}
Call $Z^{(R)}$ the partition function then
   \begin{equation*}
Z^{(R)} = \int_{\|\phi\|_2^2 \le R} d\phi \,e^{-\beta H_{\La,R}(\phi|\bar\phi)} + p Z^{(R)}
    \end{equation*}
so that
   \begin{equation*}
G^{(R)}(\phi) = (1-p) \frac {e^{-\beta H_{\La,R}(\phi|\bar\phi)}} {\int_{\|\phi\|_2^2 \le R} d\phi \,e^{-\beta H_{\La,R}(\phi|\bar\phi)}} = (1-p) \frac {e^{-\beta H_{\La}(\phi|\bar\phi)}} {\int_{\|\phi\|_2^2 \le R} d\phi \,e^{-\beta H_{\La}(\phi|\bar\phi)}} ,\quad \|\phi\|_2^2 \le R
    \end{equation*}
The analogous formula holds for $G(\phi') $ so that calling
   $$
   g(\phi):=  \frac {e^{-\beta H_{\La}(\phi|\bar\phi)}} {\int_{\|\phi\|_2^2 \le R} d\phi \,e^{-\beta H_{\La}(\phi|\bar\phi)}} ,\qquad \|\phi\|_2^2 \le R,
   $$
one has
 \begin{eqnarray*}
 \int d\phi \,| G^{(R)}(\phi)-G(\phi)|  & \le & \int_{\|\phi\|_2^2 > R} d\phi \,\big(G^{(R)}(\phi)+G(\phi)\big)+\int_{\|\phi\|_2^2 \le R} d\phi \,| G^{(R)}(\phi)-G(\phi)|   \nn\\ & \le &  p + p'+\int_{\|\phi\|_2^2 \le R} d\phi \,g(\phi)(p+p')
 \le 2(p+p')
    \end{eqnarray*}
hence \eqref{11.4}.  \qed

\medskip

\begin{cor}
There are $A>0$ and $B$ so that, calling $\mathcal A=\{ \sup_{t\le T}  \|\phi^{(R)}(t)\|_2^2 \ge S\}$,
  \begin{equation}
   \label{11.16a}
 \big| (\mu^{(R)}\times P) [\mathcal A] -(\mu\times P) [\mathcal A]\big|\le e^{- A R + B},\quad R> S
    \end{equation}

\end{cor}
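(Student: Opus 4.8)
The plan is to combine the time super-stability estimate of Theorem~\ref{thmn1.3} with the $L^1$ bound on the difference of densities just proved. The key observation is that the event $\mathcal A = \{\sup_{t\le T}\|\phi^{(R)}(t)\|_2^2 \ge S\}$ is, once we condition on the Brownian realization $\om$, a measurable event in the initial configuration $\phi$ (through the solution map $\phi \mapsto \phi^{(R)}(t|\phi,\bar\phi,\om,\Delta,\sigma)$, which is well defined $P$-almost surely by Proposition~\ref{thmn2.2}). Writing $\mathds 1_{\mathcal A}(\phi,\om)$ for its indicator and integrating first in $\phi$ and then in $\om$, we get
\begin{equation*}
\big| (\mu^{(R)}\times P)[\mathcal A] - (\mu\times P)[\mathcal A]\big|
= \bigg| \int P(d\om) \int d\phi\, \mathds 1_{\mathcal A}(\phi,\om)\,\big(G^{(R)}(\phi)-G(\phi)\big)\bigg|
\le \int d\phi\, \big|G^{(R)}(\phi)-G(\phi)\big|,
\end{equation*}
where in the last step we bounded $\mathds 1_{\mathcal A}\le 1$ and used that $P$ is a probability measure. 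Thus the left-hand side of \eqref{11.16a} is at most $2(p+p')$ by the previous proposition.

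It then remains to bound $p$ and $p'$. For $p = \mu^{(R)}[\|\phi\|_2^2 > R]$ this is exactly the instantaneous ($t=0$, or any fixed $t$) case of the time super-stability estimate: taking $T$ arbitrary and $S=R$ in \eqref{n1.12} — noting $\{\|\phi\|_2^2>R\}\subseteq\{\sup_{t\le T}\|\phi^{(R)}(t)\|_2^2\ge R\}$ since the supremum includes $t=0$ where $\phi^{(R)}(0)=\phi$ — gives $p\le e^{-AR^2+B}\le e^{-AR+B}$ for $R>2$, after relabelling constants. For $p' = \mu[\|\phi\|_2^2>R]$ one uses the super-stability built into the Hamiltonian directly: by Definition~\ref{Definition 2.2}, $H_\La(\phi|\bar\phi)\ge a\|\phi\|_2^2 - b|\La| + (\text{boundary terms bounded below})$, and the linear tilt $-\sum_x \la^*_x\phi_x$ is dominated by the quadratic term for $\|\phi\|_2$ large; a Gaussian tail estimate then yields $p'\le e^{-A'R^2 + B'}$ for suitable constants, hence $p'\le e^{-A'R+B'}$. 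Combining, $2(p+p')\le e^{-\tilde A R + \tilde B}$ for appropriate $\tilde A>0$ and $\tilde B$, which is the claimed bound \eqref{11.16a}; absorbing the requirement $R>2$ into the hypothesis $R>S$ (legitimate since $S>2$ in Theorem~\ref{thmn1.3}) finishes the proof.

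The only genuinely delicate point is the measurability and integration-order argument in the first step: one must make sure that $\mathcal A$ is jointly measurable in $(\phi,\om)$ and that the solution $\phi^{(R)}(t)$ is defined on a full-measure set with respect to both $\mu^{(R)}\times P$ and $\mu\times P$. Since $\mu^{(R)}$ and $\mu$ are mutually absolutely continuous (both have strictly positive $C^\infty$ densities on $\R^{\La_{n,N}}$), the $P$-almost sure existence from Proposition~\ref{thmn2.2} transfers to both product measures, and the Fubini step is justified. Everything else is a routine Gaussian tail bound plus bookkeeping of constants, so I do not expect a substantive obstacle there.
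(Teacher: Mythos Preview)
Your proof is correct and follows the same approach as the paper: bound the difference by $\int d\phi\,|G^{(R)}(\phi)-G(\phi)|\le 2(p+p')$ via the preceding proposition, then control $p$ and $p'$ by super-stability. The only minor difference is that the paper invokes the equilibrium density bound \eqref{11.13} directly for both $p$ and $p'$ (the same Gaussian upper bound on the density holds for $G$ as for $G^{(R)}$, by the same argument), whereas you route the bound on $p$ through the time super-stability estimate of Theorem~\ref{thmn1.3} applied at $t=0$; this is a harmless detour that yields the same conclusion.
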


\medskip
\noindent
{\bf Proof.} By \eqref{11.4} the left hand side of  \eqref{11.16a} is bounded by
  \begin{equation}
   \label{11.16.0}
   \int dP \int  d\phi |G^{(R)}(\phi)-G(\phi)| \le 2(p+p')
    \end{equation}
    and \eqref{11.16a} follows from \eqref{11.13}.  \qed

    \medskip
    
\noindent
{\em Existence.} 
There are $a'>0$ and $b'$ so that,
    for  $R>S$,
    \begin{equation*}
  (\mu \times P)\Big[\sup_{  t\le T} \|\phi^{(R)} (t)\|_2^2 < S\Big] \ge 1-2e^{-a' S + b'},\quad a'>0
    \end{equation*}
    having used   \eqref{11.16a} and \eqref{n1.12}.
  Therefore, calling $\phi(t)$ the solution of
 \eqref {1.34.00.1}, we have  also
    \begin{equation*}
  (\mu \times P)\Big[\sup_{  t\le T}  \|\phi (t)\|_2^2 < S\Big] \ge 1-2e^{-a' S + b'}
    \end{equation*}
 because $\phi^{(R)} (t)=\phi  (t)$ in the set
 $$\Big\{  \sup_{t\le T}  \|\phi^{(R)} (t)\|_2^2 < S\Big\} ,\quad S< R
 $$
 Thus
   \begin{equation*}
  (\mu \times P)\Big[\sup_{  t\le T}  \|\phi (t)\|_2^2 < \infty\Big] = 1
    \end{equation*}
hence the existence of solutions to \eqref{1.34.00.1}  with probability 1.

\vskip.5cm

\noindent
{\em Time invariance.}  It is enough to prove  that, given any $t>0$,
  \begin{equation}
   \label{11.16.0}
  \int  d\mu(\phi) \int dP  f(\phi(t)) =  \int  d\mu(\phi)  f(\phi)
    \end{equation}
for any test function $f$ such that $\sup_\phi |f(\phi)| \le 1$.

Given any $\eps>0$, let $S$ be such that
  \begin{equation*}
  (\mu \times P)\Big[\sup_{  t\le T}  \|\phi (t)\|_2^2 \ge S \Big] >1-\eps,\quad
  (\mu^{(R)} \times P)\Big[\sup_{  t\le T}  \|\phi^{(R)} (t)\|_2^2 \ge S \Big]>1-\eps
    \end{equation*}
for any $R>S$,  then
 \begin{equation}
   \label{11.16.1}
 \Big| \int  d\mu(\phi) \int dP  f(\phi(t)) -   \int_{\sup_{  t\le T} \|\phi (t)\|_2^2 < S} d\mu(\phi) \times dP  f(\phi(T))\Big| \le \eps
    \end{equation}
For  $R>S$
\begin{equation}
   \label{11.16.2}
    \int_{\sup_{  t\le T} \|\phi (t)\|_2^2 < S} d\mu(\phi) \times dP  f(\phi(T))
    =  \int_{\sup_{  t\le T} \|\phi^{(R)} (t)\|_2^2 < S} d\mu^{(R)}(\phi) \times dP  f(\phi(T))
    \end{equation}
and
 \begin{equation}
   \label{11.16.3}
 \Big| \int  d\mu^{(R)}(\phi) \int dP  f(\phi^{(R)}(T)) -   \int_{\sup_{  t\le T} \|\phi^{(R)} (t)\|_2^2 < S} d\mu(\phi) \times dP  f(\phi^{(R)}(T))\Big| \le \eps
    \end{equation}
Since
 \begin{equation}
   \label{11.16.4}
  \int  d\mu^{(R)}(\phi) \int dP  f(\phi^{(R)}(T)) =   \int  d\mu^{(R)}(\phi)   f(\phi)
    \end{equation}
 we get
   \begin{equation}
   \label{11.16.5}
 \Big| \int  d\mu(\phi) \int dP  f(\phi(T)) - \int  d\mu(\phi)  f(\phi)\Big| \le  2\eps
    \end{equation}
    \qed

\vskip.5cm

\noindent
{\em Average current: proof of \eqref{n3.4}}. 
From time-invariance, 
\be
 \int  d\mu(\phi) \int dP J_{x\to y}(\phi(T)) =  \int  d\mu(\phi)  J_{x\to y}(\phi) 
\ee
Then \eqref{n3.4}  easily follows, using integration
by parts, from the definition of the current \eqref{currdef1} and  the
explicit expression for the stationary measure \eqref{muN}.
%
%
\qed

\vskip.5cm

\noindent
{\em Proof of Theorem \ref{thmn3.1bis}.}
  Fix $N$, we claim that any measure $p$ on $\mathcal X_n$ of the form
$dp(\phi)=d\nu( \phi_{\La_{n,N}^c}) d\mu_{n,N,\la^*}(\phi_{\La_{n,N}}|\phi_{\La_{n,N}^c})$ is invariant under the partial dynamics with $N$.  By choosing
$\nu$ to be the restriction of $\mu$ to configurations on the complement of $\La_{n,N}$ and by using the DLR property, we will then get the invariance statement in the theorem. Let $f(\phi)$ be a smooth test function and let $g_{\phi_{\La_{n,N}^c}} (\phi_{\La_{n,N}}) := f(\phi_{\La_{n,N}},\phi_{\La_{n,N}^c})$.
 By Theorem \ref{thmn3.1} we get
\begin{eqnarray*}
\int p(d\phi)P(d\om) f\circ T^{(N)}_t (\phi,\om,\Delta,\sigma)  &=&\int  \nu(d\phi_{\La_{n,N}^c})\int \mu_{n,N,\la^*}(d\phi_{\La_{n,N}}|\phi_{\La_{n,N}^c}) \int P(d\om)\\&\times& 
g_{\phi_{\La_{n,N}^c}}(\phi^{(N)}(t | \phi_{\La_{n,N}},\phi_{\La_{n,N}^c},\om, \Delta,\sigma))
\\&=&
\int \nu(d\phi_{\La_{n,N}^c})\int \mu_{n,N,\la^*}(d\phi_{\La_{n,N}}|\phi_{\La_{n,N}^c})
g_{\phi_{\La_{n,N}^c}}(\phi_{\La_{n,N}})
\\ &=& \int p(d\phi)f( \phi)
    \end{eqnarray*}
\qed

\vskip.5cm

\section{Proof of Theorem \ref{thmn3.2}}
\label{sec6}

We are in the setup of  Hypothesis \ref{case}(b) so that, by item (3) of Proposition \ref{recur2},  $\la_x^{N,\Delta,\sigma} = \la_x$,
throughout the section $\la$ is the harmonic function in Definition \ref{def111}.
We will use  the following shorthand notation: given $n$ and $N>n$ we denote by $\phi$ a configuration on $\La_{n,N}$, by $\bar \phi$ a configuration in the complement of
$\La_{n,N}$ and by $\mu_{N,\bar \phi,\la}(d\phi)$ the Gibbs measure with hamiltonian $H-\sum_x \phi_x\la_x$  and with boundary condition $\bar\phi$.

The starting point is again a time superstability estimate.
We can not use the one proved in Appendix \ref{appB} because
the parameters in the estimates are volume dependent.
In Appendix \ref{appC} we will first prove 
an equilibrium superstability estimate.
\vskip.5cm
\begin{thm}
\label{nnthm1.1}
There are $a>0$, $N_0>0$ and $b$ so that for all  $N>N_0$ the following holds.
Let  $|\bar\phi_x| \le   (\log |x|)^{1/3}$ for all $x \notin \La_{n,N}$, then for any $x_0
\in \La_{n,N/2}$ and $S>0$
  \begin{equation}
   \label{nn10.2}
 \mu_{N,\bar \phi,\la}\Big[ |\phi_{x_0}| \ge S \Big] \le e^{- aS^4 +b},\quad x_0
\in \La_{n,N/2}
    \end{equation}

\end{thm}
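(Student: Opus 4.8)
The plan is to reduce the tail bound \eqref{nn10.2} to a one-site marginal estimate for the Gibbs measure with a shifted chemical potential and then exploit the superstability bound \eqref{n2.12} on the conditional energy. First I would complete the square: writing the density of $\mu_{N,\bar\phi,\la}$ as $Z^{-1}\exp\{-\beta(H_{\La_{n,N}}(\phi|\bar\phi) - \sum_x \la_x\phi_x)\}$ and using $H = H^0 + H'$ with $H^0_{\La_{n,N}}(\phi) = a\sum_x \phi_x^2$, the quadratic part absorbs the linear term $\sum_x \la_x\phi_x$ into $a\sum_x(\phi_x - \la_x/(2a))^2$ up to a constant. Since $\la$ is bounded uniformly in $N$ (Proposition \ref{recur}(1)), this shift is harmless; after relabeling I may as well assume $\la \equiv 0$ at the cost of an $N$-independent constant in $b$.

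Next I would fix the site $x_0 \in \La_{n,N/2}$ and integrate out all other variables. The key estimate is
\be
\mu_{N,\bar\phi,\la}\big[|\phi_{x_0}| \ge S\big] = \frac{\int_{|\phi_{x_0}|\ge S} d\phi\, e^{-\beta(H_{\La}(\phi|\bar\phi) - \sum \la_x\phi_x)}}{\int d\phi\, e^{-\beta(H_{\La}(\phi|\bar\phi) - \sum \la_x\phi_x)}}.
\ee
For the numerator I would split the energy into the terms involving $x_0$ and the rest: $H_{\La}(\phi|\bar\phi) = a\phi_{x_0}^2 + (\text{one-body at }x_0) + \sum_{A\ni x_0}V_A + H_{\La\setminus\{x_0\}}(\ldots)$. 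Using the superstability lower bound \eqref{n2.12} on the full remainder $H'$ (which gives $H' \ge -b|\La|$, but more usefully the conditional version bounding the energy of the configuration with $\phi_{x_0}$ removed from below), together with the quadratic growth $a\phi_{x_0}^2$ coming from $H^0$, one gets that the numerator is at most $e^{-c\beta a S^2}\cdot(\text{something comparable to the denominator})$ — but to reach the $S^4$ rate one must exploit that the one-body potential grows like $\phi_{x_0}^4$ (the Dinaburg–Sinai form $(\phi_{x_0}^2-1)^2$), not just the $a\phi_{x_0}^2$ term. The cleanest route is: bound the interaction terms $\sum_{A\ni x_0}V_A$ involving the neighbors of $x_0$ using $V_{\{x,y\}} \le c(\phi_x^2+\phi_y^2)$ — but since neighbors may themselves be large, one instead conditions on the neighbor values and absorbs $c\phi_{x_0}^2$ into the quartic one-body term, which dominates: $(\phi_{x_0}^2-1)^2 - c\phi_{x_0}^2 \ge \tfrac12 \phi_{x_0}^4 - c'$ for a new constant. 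This yields the Gaussian-in-$S^2$ (i.e. $e^{-aS^4}$) decay for the conditional law of $\phi_{x_0}$ given all other charges, uniformly in those charges, hence after integrating also for the marginal.

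The main obstacle is controlling the denominator (the partition function) from below and making sure the constants $a,b$ are genuinely $N$-independent. The denominator restricted to, say, $|\phi_y|\le 1$ for all $y$ is bounded below by $e^{-\beta b' |\La_{n,N}|}$, which is exponentially small in the volume — this would spoil $N$-independence if it appeared in the final bound. The resolution is that the volume-extensive factors cancel between numerator and denominator: the "bad event" $|\phi_{x_0}|\ge S$ constrains only one coordinate, so after the change of variables the ratio factorizes over $x_0$ versus the rest, the rest-integrals cancel, and only the one-dimensional quartic integral $\int_{|\phi_{x_0}|\ge S}e^{-\beta(\tfrac12\phi_{x_0}^4 - c')}d\phi_{x_0}\big/\int e^{-\beta(\tfrac12\phi_{x_0}^4-c')}d\phi_{x_0}$ survives, which is $\le e^{-aS^4+b}$ by an elementary Gaussian-type tail estimate on a quartic. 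The hypothesis $|\bar\phi_x|\le(\log|x|)^{1/3}$ and $x_0\in\La_{n,N/2}$ enters only to ensure that the finitely many boundary interactions felt by sites near $x_0$ (within range $R$ of $\La_{n,N}^c$ — irrelevant since $x_0$ is deep inside) and the global normalization are finite and controlled; the core estimate is purely local and the $S^4$ rate comes entirely from the quartic one-body potential dominating the at-most-quadratic interactions. I would carry this out by first doing the general-Hamiltonian case abstractly (quartic one-body lower bound plus quadratic interaction upper bound) and then noting the Dinaburg–Sinai Hamiltonian \eqref{sinai2} is the prototype.
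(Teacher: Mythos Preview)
Your argument has a genuine gap at the step where you claim the conditional law of $\phi_{x_0}$ given all other charges has a uniform $e^{-aS^4+b}$ tail. Write the conditional energy for the Dinaburg--Sinai Hamiltonian:
\[
E(\phi_{x_0})=(\phi_{x_0}^2-1)^2+\sum_{y\sim x_0}(\phi_{x_0}-\phi_y)^2-\la_{x_0}\phi_{x_0}
=\phi_{x_0}^4+(\text{quadratic})-2\phi_{x_0}\sum_{y\sim x_0}\phi_y+\text{const}.
\]
The cross term $-2\phi_{x_0}\sum_y\phi_y$ is linear in $\phi_{x_0}$ with coefficient $\sum_y\phi_y$, which is \emph{not} controlled a priori. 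If the neighbours satisfy $\sum_y\phi_y=M$ with $M$ large, the conditional minimiser sits near $\phi_{x_0}^*\sim M^{1/3}$, and for any $S<\phi_{x_0}^*$ the conditional probability of $\{|\phi_{x_0}|\ge S\}$ is close to $1$, not $e^{-aS^4}$. So ``absorbing $c\phi_{x_0}^2$ into the quartic'' handles the quadratic piece but not the linear one, and your uniformity claim fails. Relatedly, the assertion that ``the rest-integrals cancel'' is incorrect: after you drop the (nonnegative) interactions $V_{\{x_0,y\}}$ in the numerator you are left with $\int e^{-\beta H_{\La\setminus\{x_0\}}}$, whereas the denominator is $Z_\La$, which still contains those interactions; the ratio is an expectation of $e^{\beta\sum_{y\sim x_0}\phi_y^2}$ under a modified measure and is not bounded by any constant you have established.

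The paper's proof deals with precisely this obstruction via Ruelle's multi-scale stopping argument: one introduces cubes $\Delta_q$ of side $2q+1$ centred at $x_0$ and a stopping index $q$, the first scale at which the configuration is ``regular'' in the sense $\sum_{x\in\Delta_q}\phi_x^2\le|\Delta_q|\log q$. One then drops the interaction across $\partial\Delta_{q-1}$, extracts $e^{-\frac\beta3\phi_{x_0}^4}$ from the one-body term inside $\Delta_{q-1}$, and reconstructs the partition function by restricting the integration to $|\phi_x|\le1$ on $\Delta_{q-1}$; the regularity at scale $q$ bounds the interaction across the interface by $Cq^{d-1}\log q$, which is beaten by the gain $e^{-\frac\beta3|\Delta_{q-1}|\log(q-1)}$. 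The hypothesis $|\bar\phi_x|\le(\log|x|)^{1/3}$ and $x_0\in\La_{n,N/2}$ are not incidental: they are used in the case where the stopping cube reaches $\partial\La_{n,N}$, to bound the interaction with the frozen boundary (since then $q\ge N/2$, so $\log N\le\log 2q$ and the boundary contribution is again subleading). Your single-site conditioning is the right intuition for \emph{bounded} pair potentials, but for the unbounded interaction here you need the multi-scale machinery.
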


The bound on $\bar\phi$ is motivated by  Corollary \ref{cornn1.2} stated below.

 \medskip
 \begin{defin}
 We set:

 \begin{itemize}

 \item  $M_{\la}$ is the set of all DLR measures $\mu$ with chemical potential $\la$ such that $$\mu\Big[ |\phi_x| \ge S \Big] \le e^{- aS^4 +b}\qquad \text{ for all }\qquad x \in \mathcal X_n.$$
 
     \item  The set $\mathcal G$ of ``good configurations'' is:
      \begin{equation}
   \label{nn1.9}
 \mathcal G = \bigcup_{N> n} {\mathcal G}_{N},\qquad {\mathcal G}_{N}=
 \bigcap_{x \notin \La_{n,N}} \Big\{ |\phi_x| \le (\log |x|)^{1/3}\Big\}
    \end{equation}

 \end{itemize}

 \end {defin}

   \medskip

\begin{cor}
\label{cornn1.2}

With the above notation:

\begin{itemize}


\item  The set $M_\la$ is non-empty because, if $\phi \in \mathcal G$ then, calling $\bar \phi_N$ the restriction of $\phi$ to the complement of $\La_{n,N}$, any weak limit point of $\mu_{N,\bar \phi_N,\la}$ is in $M_{\la}$.

\item If $\mu \in M_{\la}$ then,
 for any $a'<a$, there is $b'$ so that
  \begin{equation}
   \label{nn1.9.0}
 \mu \big[\mathcal G_N \big] \ge 1 -e^{-a'(\log N)^{4/3}+b'}
    \end{equation}
and therefore $\mu[\mathcal G]=1$.

\end{itemize}

\end{cor}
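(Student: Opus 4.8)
Both items follow from the equilibrium superstability estimate of Theorem \ref{nnthm1.1} by soft arguments: a weak-compactness argument for the first bullet and a union bound for the second. For the first bullet, fix $\phi\in\mathcal G$, so that $\phi\in\mathcal G_{N_1}$ for some $N_1>n$, i.e.\ $|\phi_x|\le(\log|x|)^{1/3}$ for all $x\notin\Lambda_{n,N_1}$. For every $N\ge\max\{N_0,N_1\}$ the restriction $\bar\phi_N$ of $\phi$ to $\Lambda_{n,N}^c$ (which is contained in $\Lambda_{n,N_1}^c$) satisfies the hypothesis of Theorem \ref{nnthm1.1}, so that $\mu_{N,\bar\phi_N,\la}[\,|\phi_{x_0}|\ge S\,]\le e^{-aS^4+b}$ for all $x_0\in\Lambda_{n,N/2}$ and $S>0$, with $a,b$ independent of $N$. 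In particular, for each fixed site $x$ the laws of $\phi_x$ under $\mu_{N,\bar\phi_N,\la}$ (defined once $N$ is large enough that $x\in\Lambda_{n,N/2}$) are tight, hence the family $\{\mu_{N,\bar\phi_N,\la}\}_{N}$ is tight in the product topology on $\R^{\mathcal X_n}$ and has a weak limit point $\mu$, attained along some subsequence.

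Since the interaction is finite range and the Hamiltonian is superstable, the finite-volume specifications with chemical potential $\la$ depend continuously on the boundary condition, so by the usual closedness of the DLR equations under weak limits $\mu$ is a DLR measure with formal Hamiltonian $H-\sum_x\phi_x\la_x$. The tail bound then transfers to $\mu$: for $S'<S$ the set $\{|\phi_x|>S'\}$ is open, so by the portmanteau theorem (applied along the said subsequence) $\mu[\,|\phi_x|\ge S\,]\le\mu[\,|\phi_x|>S'\,]\le e^{-a(S')^4+b}$, and letting $S'\uparrow S$ gives $\mu[\,|\phi_x|\ge S\,]\le e^{-aS^4+b}$ for every $x$, i.e.\ $\mu\in M_\la$. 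The same reasoning applies to any weak limit point, and the existence of one shows $M_\la\ne\emptyset$.

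For the second bullet, let $\mu\in M_\la$ and fix $a'<a$. Since $\mathcal G_N^c=\bigcup_{x\notin\Lambda_{n,N}}\{|\phi_x|>(\log|x|)^{1/3}\}$, a union bound together with the defining estimate of $M_\la$ gives $\mu[\mathcal G_N^c]\le\sum_{x\notin\Lambda_{n,N}}e^{-a(\log|x|)^{4/3}+b}$. Writing $a(\log|x|)^{4/3}=a'(\log|x|)^{4/3}+(a-a')(\log|x|)^{4/3}$ and using that $|x|>N$ for $x\notin\Lambda_{n,N}$, the right side is at most $e^{-a'(\log N)^{4/3}}\sum_{x\in\mathcal X_n}e^{-(a-a')(\log|x|)^{4/3}+b}$, and the remaining sum is finite because $\mathcal X_n$ has polynomial volume growth while $e^{-(a-a')(\log|x|)^{4/3}}$ decays faster than any power of $|x|$. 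Hence $\mu[\mathcal G_N^c]\le e^{-a'(\log N)^{4/3}+b'}$ with $b'=\log\big(\sum_{x\in\mathcal X_n}e^{-(a-a')(\log|x|)^{4/3}+b}\big)$, which is \eqref{nn1.9.0}. Since $\mathcal G_N$ increases in $N$ and $\mathcal G=\bigcup_N\mathcal G_N$, letting $N\to\infty$ yields $\mu[\mathcal G]=1$.

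The substantial input is Theorem \ref{nnthm1.1} itself; given it, the corollary is essentially soft. The only points deserving care in the first bullet are the direction in which the portmanteau theorem is used — one must work with the open sets $\{|\phi_x|>S'\}$ and then send $S'\uparrow S$, since upper bounds on the $\mu_N$ only yield an upper bound on $\mu$ — and the verification that the weak limit solves the DLR equations with the prescribed chemical potential; the summation in the second bullet is elementary.
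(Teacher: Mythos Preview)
Your proof is correct and follows essentially the same approach as the paper, which is extremely terse (one sentence invoking Theorem \ref{nnthm1.1} for the first bullet, and one displayed inequality for the second); you have simply filled in the standard details of tightness, the portmanteau argument, and the union bound. One cosmetic point: in the second bullet your final sum $\sum_{x\in\mathcal X_n}e^{-(a-a')(\log|x|)^{4/3}+b}$ should be restricted to, say, $|x|>n$ so that $\log|x|$ is positive, but this is immaterial since the original sum already runs over $x\notin\Lambda_{n,N}$.
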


\medskip
\noindent
{\bf Proof.} The first statement follows from  Theorem \ref{nnthm1.1}.
 Since $\mu\Big[ |\phi_x| \ge S \Big] \le e^{- aS^4 +b}$, then
 \begin{equation*}
\mu\Big[|\phi_x| \ge (\log |x|)^{1/3} \Big] \le
e^{- a  (\log |x|)^{4/3} +b}
    \end{equation*}
    which yields \eqref{nn1.9.0}.
       \qed


   \vskip.5cm

We shall next extend the super-stability estimates to the time-dependent case.
Given $n$ and $N$ as above and a boundary configuration $\bar \phi$ we consider the partial dynamics defined in Section \ref{pd-sec}
with hamiltonian \eqref{quad2} and denote by $\{\phi_x(t), x \in \La_{n,N}\}$ the corresponding process. Recall that the charges outside $\La_{n,N}$ are frozen to the initial value $\bar \phi$ and that the dynamics does not depend on the chemical potential $\la$.  We denote by $P_{\mu_{N,\bar \phi, \la}}= \mu_{N,\bar \phi, \la} \times P $ the law of the process $\phi^{(N)}(t|\phi,\om)$ when it starts from $\mu_{N,\bar \phi,\la}$.
In the sequel we fix arbitrarily a positive time $T$ and study the process in the time interval $[0,T]$.  Using that $\mu_{N,\bar \phi,\la}$ is invariant we 
will prove in Appendix \ref{appC} the following:

\medskip

\begin{thm}
\label{thmnn1.3}
Let  $\bar \phi$ be a configuration in $\La_{n,N}^c$ such that $|\bar \phi_x| \le
(\log |x|)^{1/3}$ for all $x\in \La_{n,N}^c$.  Then, given $T>0$ there are $A>0$ and $B$ (independent of  $N$ and $\bar \phi$) so that, for all  $S>2$,
 \begin{equation}
   \label{nn1.3.1}
P_{\mu_{N,\bar \phi,\la}}\Big[\sup_{t \le T}  |\phi_x(t)|  \ge S \Big] \le e^{- AS^4 +B},\quad x \in\La_{n,N/4}
    \end{equation}

\end{thm}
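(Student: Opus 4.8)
The goal is to upgrade the equilibrium superstability estimate of Theorem~\ref{nnthm1.1} to a pathwise bound valid uniformly over $[0,T]$, for the quadratic Hamiltonian \eqref{quad2}, starting from the stationary measure $\mu_{N,\bar\phi,\la}$. Since the dynamics of \eqref{quad2} is an Ornstein--Uhlenbeck type system and, crucially, does not feel the chemical potential $\la$ (which only tilts the invariant measure), the natural route is a martingale / supermartingale argument applied to a well-chosen Lyapunov functional of a single coordinate $\phi_x(t)$, combined with the equilibrium tail bound \eqref{nn10.2} to control the initial datum. I would work with the exponential functional $F_t = \exp\{\gamma \,\phi_x(t)^2\}$ for a small $\gamma>0$, or more robustly with a localized quadratic form $\sum_{y} \rho_{x}(y)\,\phi_y(t)^2$ with weights $\rho_x(y)$ decaying in $|y-x|$, chosen so that the finite-range Laplacian-type generator acting on it produces a good (negative-definite up to constants) drift; this is the standard device for obtaining $N$-independent constants despite the volume growing.

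\textbf{Key steps, in order.} First I would write Itô's formula for $t\mapsto e^{\gamma\,\phi_x(t)^2}$ (or for the localized quadratic form) under \eqref{1.34.00.1} with $H$ the quadratic Hamiltonian: the drift is an explicit quadratic expression in the nearby charges, the diffusion coefficient is bounded by a constant depending only on the (fixed) coordination number, and for $\gamma$ small enough the dangerous $+\gamma^2\phi_x^4 e^{\gamma\phi_x^2}$ term from the quadratic variation is dominated by the confining $-\phi_x^2$ contribution. Second, I would add a compensator so as to obtain a genuine supermartingale $M_t = e^{\gamma\phi_x(t)^2} - \int_0^t (\text{controlled drift})\,ds$, or run a Gronwall-type estimate on $\mathbb{E}[e^{\gamma\phi_x(t)^2}]$ using stationarity: because $\mu_{N,\bar\phi,\la}$ is invariant (Proposition~\ref{thmn2.2} / the remark that \eqref{quad2} is a cutoff Hamiltonian), $\mathbb{E}_{\mu_{N,\bar\phi,\la}}[e^{\gamma\phi_x(t)^2}]$ is constant in $t$ and, by \eqref{nn10.2} with the $S^4$ tail (hence certainly a Gaussian-type bound), is bounded by $e^{B_0}$ uniformly in $N$ for $x\in\La_{n,N/2}$, in particular for $x\in\La_{n,N/4}$. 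Third, I would apply Doob's maximal inequality to the supermartingale $M_t$ (after the compensation step) to pass from the one-time bound $\mathbb{E}[e^{\gamma\phi_x(t)^2}]\le e^{B_0}$ to the uniform-in-time bound $P_{\mu_{N,\bar\phi,\la}}[\sup_{t\le T}\phi_x(t)^2\ge \rho] \le e^{-\gamma\rho + B_1}$. Fourth, to recover the stated $S^4$ exponent rather than $S^2$ I would instead run the whole argument with the stronger functional $e^{\gamma\phi_x(t)^4}$ — here the confining term $-\phi_x^2$ in the drift combats a $\phi_x^6 e^{\gamma\phi_x^4}$ term, so one needs that the drift's linear-in-neighbors part does not spoil positivity; this is exactly where the localized quadratic weight $\rho_x$ earns its keep, by absorbing cross terms $\phi_x^3\phi_y$ via Young's inequality into $\phi_x^4+\phi_y^4$ with $N$-independent constants. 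Finally, set $S^2=\rho$ (or $S^4=\rho$) to get \eqref{nn1.3.1}, noting that the restriction $x\in\La_{n,N/4}$ is what guarantees the localization weights $\rho_x$ stay inside $\La_{n,N/2}$ where \eqref{nn10.2} is available and the boundary data $\bar\phi$ (controlled by $(\log|x|)^{1/3}$) contributes only a bounded correction.

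\textbf{Main obstacle.} The delicate point is obtaining constants $A,B$ that are genuinely independent of $N$ and of $\bar\phi$. Itô's formula on a single coordinate generates drift terms involving neighboring coordinates $\phi_y$, and naively iterating would propagate through the whole (growing) volume. The remedy — a spatially localized Lyapunov functional whose generator action is controlled after a discrete summation by parts, using the finite range $R$ and the super-stability/confinement built into \eqref{quad2} — is standard in the Ginzburg--Landau literature, but making the bookkeeping uniform requires care: one must check that the ``bad'' higher-order terms produced by the quadratic variation are strictly dominated (with room to spare) by the confining drift for a fixed small $\gamma$, and that the contributions of the frozen boundary charges $\bar\phi$, which enter the drift of coordinates near $\partial\La_{n,N}$, are harmless for $x$ deep inside $\La_{n,N/4}$ — this is precisely why the theorem restricts to $\La_{n,N/4}$ and assumes the logarithmic bound on $\bar\phi$, and why the equilibrium estimate \eqref{nn10.2} was needed on the larger region $\La_{n,N/2}$. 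I expect the technical heart of Appendix~\ref{appC} to be this uniform drift-domination computation; everything downstream (Doob, Gronwall, exponentiation) is routine.
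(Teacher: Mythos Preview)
There is a genuine gap. You have taken the Hamiltonian to be the quadratic one \eqref{quad2}, but Section~\ref{sec6} and Appendix~\ref{appC} concern the Dinaburg--Sinai Hamiltonian \eqref{sinai2} (the reference to \eqref{quad2} just before Theorem~\ref{thmnn1.3} in the paper is a typo; compare the opening of Section~3.2 and of Appendix~\ref{appC}, and the bound $|\partial H/\partial\phi_z|\le c(|\phi_z|^3+\dots)$ used in the proof). This is fatal for your plan. With the quadratic Hamiltonian the equilibrium measure has Gaussian tails, so the input estimate \eqref{nn10.2} with an $S^4$ exponent is simply false; and your Lyapunov functional $e^{\gamma\phi_x^4}$ cannot be made into a supermartingale under a \emph{linear} drift: the second-order contribution to $L\,e^{\gamma\phi_x^4}$ contains a term of order $\gamma^2\phi_x^6\,e^{\gamma\phi_x^4}$ which, for large $|\phi_x|$, dominates the confining contribution of order $-\gamma\phi_x^4\,e^{\gamma\phi_x^4}$ coming from the drift, for every $\gamma>0$. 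Localized weights $\rho_x$ do not help here; the obstruction is at the single site $x$, not in the cross terms. (For \eqref{sinai2} the drift carries a cubic piece $-4\phi_x^3$, which produces a term of order $-\gamma\phi_x^6\,e^{\gamma\phi_x^4}$ and does beat the quadratic variation for small $\gamma$, so your scheme becomes plausible --- but that is not the computation you outlined.)

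The paper's actual argument is also quite different and avoids exponential Lyapunov functionals and Doob altogether. One fixes $S>2$ and a bounded smooth $f$ with $f=0$ on $[0,S-1]$, $f=1$ on $[S,\infty)$, so that $f',f''$ are supported on $[S-1,S]$; set $f_t=f(|\phi_x(t)|)$. The Holley--Stroock martingale inequality (Theorem~\ref{thmn5.2}) together with stationarity of $\mu_{N,\bar\phi,\la}$ gives
\[
\mathcal E_{\mu_{N,\bar\phi,\la}}\Big[\sup_{t\le T} f_t^2\Big]\ \le\ 2T^2\,E_{\mu_{N,\bar\phi,\la}}[\gamma_1^2]+4T\,E_{\mu_{N,\bar\phi,\la}}[\gamma_2]+E_{\mu_{N,\bar\phi,\la}}[f^2],
\]
with $\gamma_1=Lf$, $\gamma_2=Lf^2-2fLf$. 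Every term in $Lf$ carries the indicator $\mathbf 1_{|\phi_x|\in(S-1,S)}$ multiplied by a polynomial in $\{\phi_z:|z-x|\le 2\}$; Cauchy--Schwarz and the equilibrium bound \eqref{nn10.2} (applicable because $x\in\La_{n,N/4}$ forces all such $z$ into $\La_{n,N/2}$) then yield the factor $e^{-(a/2)(S-1)^4}$. No spatial weights are needed; the restriction $x\in\La_{n,N/4}$ is used only to keep the finitely many neighbours $z$ inside $\La_{n,N/2}$, and the hypothesis on $\bar\phi$ enters only through \eqref{nn10.2}, not through the dynamical part of the argument.
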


   \vskip.5cm

\noindent
{\bf The infinite-volume limit.}
Here we prove Theorem \ref{thmn3.2}.  We need preliminarily to extend the super-stability estimates from conditional Gibbs measures to DLR measures.

\medskip

\begin{thm}
\label{thmnn1.4}

 For any $N$, any measure $\mu \in M_{\la}$ is invariant under the evolution $T_t^{(N)}(\phi,\om,\la)$,
 see \eqref{pluo}.  Moreover
for all $S>0$,
   \begin{equation}
   \label{nn1.3.2}
 P_{\mu}\Big[\sup_{t \le T}\sup_{x \in \La_{n,N/4}}  |\phi^{(N)}_x(t)| \ge S \Big] \le e^{- AS^4 +B} + e^{- a' (\log N)^{4/3} +b'}
    \end{equation}
where $a'$ and $b'$ are as in \eqref{nn1.9.0}.

\end{thm}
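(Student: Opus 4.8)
The plan is to combine the equilibrium estimate of Corollary \ref{cornn1.2}, the conditional-Gibbs time estimate of Theorem \ref{thmnn1.3}, and the disintegration argument already used in the proof of Theorem \ref{thmn3.1bis}. First I would fix $N>n$ and a measure $\mu\in M_\la$, and disintegrate $\mu$ with respect to the configuration outside $\La_{n,N}$: write $d\mu(\phi)=d\mu(\bar\phi)\,d\mu(\phi_{\La_{n,N}}\mid\bar\phi)$, where $\bar\phi=\phi_{\La_{n,N}^c}$ and, by the DLR property, the conditional measure $\mu(\cdot\mid\bar\phi)$ is exactly the Gibbs measure $\mu_{N,\bar\phi,\la}$ appearing in Theorem \ref{thmnn1.3}. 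Invariance of $\mu$ under $T^{(N)}_t(\phi,\om,\la)$ then follows verbatim as in the proof of Theorem \ref{thmn3.1bis}: since the charges in $\La_{n,N}^c$ are frozen, for a.e.\ $\bar\phi$ the inner conditional dynamics is the partial dynamics with boundary datum $\bar\phi$, which by Theorem \ref{thmn3.1} (applied with Hypothesis \ref{case}(b), so $\la^*=\la$) leaves $\mu_{N,\bar\phi,\la}$ invariant; integrating over $d\mu(\bar\phi)$ gives invariance of $\mu$. Here one needs that the inner dynamics is well-defined for a.e.\ $(\phi,\om)$, which requires $\bar\phi$ to satisfy the growth bound $|\bar\phi_x|\le(\log|x|)^{1/3}$; this is guaranteed on the event $\mathcal G$, and by \eqref{nn1.9.0} we have $\mu[\mathcal G]=1$.

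For the quantitative bound \eqref{nn1.3.2} I would split on the good event $\mathcal G_N$ of \eqref{nn1.9}. On its complement we simply pay $\mu[\mathcal G_N^c]\le e^{-a'(\log N)^{4/3}+b'}$ from Corollary \ref{cornn1.2}, which is the second term on the right-hand side. On $\mathcal G_N$ the boundary configuration $\bar\phi$ satisfies $|\bar\phi_x|\le(\log|x|)^{1/3}$ for all $x\notin\La_{n,N}$, so Theorem \ref{thmnn1.3} applies to the conditional measure $\mu_{N,\bar\phi,\la}$ and gives, uniformly in such $\bar\phi$,
\[
P_{\mu_{N,\bar\phi,\la}}\Big[\sup_{t\le T}\sup_{x\in\La_{n,N/4}}|\phi_x(t)|\ge S\Big]\le e^{-AS^4+B},
\]
where I have upgraded the single-site bound \eqref{nn1.3.1} to a supremum over $x\in\La_{n,N/4}$ by a union bound over the at most $O(N^{d-1})$ sites in that set, absorbing the polynomial factor into $B$ (or, more cleanly, replacing $S^4$ by $\tfrac12 S^4$ and noting $|\La_{n,N/4}|e^{-\tfrac12 AS^4}\le e^{B'}$ for $S>2$ after adjusting constants). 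Integrating this bound against $d\mu(\bar\phi)$ restricted to $\mathcal G_N$ and adding the contribution of $\mathcal G_N^c$ yields \eqref{nn1.3.2}.

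The main point requiring care is the union-bound upgrade from the single-site estimate \eqref{nn1.3.1} to the supremum over $\La_{n,N/4}$: one must check that the exponential rate $A$ in Theorem \ref{thmnn1.3} can be taken uniform in $x\in\La_{n,N/4}$ (which the statement already asserts) and that the polynomial volume factor $|\La_{n,N/4}|\le cN^d$ is harmless against $e^{-AS^4}$ for $S>2$; shrinking $A$ slightly absorbs it into a new constant $B$ independent of $N$. A secondary subtlety is measurability of the map $\bar\phi\mapsto P_{\mu_{N,\bar\phi,\la}}[\cdots]$ and the legitimacy of Fubini in the disintegration, both of which are routine given the Feller property of the cutoff dynamics (Proposition \ref{thmn2.2}) and the limiting procedure removing the cutoff from Section \ref{sect5}. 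Everything else is bookkeeping with the constants $a,a',A$ and $b,b',B$.
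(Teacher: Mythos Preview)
Your approach is essentially identical to the paper's: invariance is deferred to Theorem \ref{thmn3.1bis}, and the bound \eqref{nn1.3.2} is obtained by conditioning on $\bar\phi=\phi_{\La_{n,N}^c}$, applying Theorem \ref{thmnn1.3} on $\mathcal G_N$ and paying $\mu[\mathcal G_N^c]$ via \eqref{nn1.9.0} on the complement. The one point to flag is your justification of the union bound: the claim that the volume factor $|\La_{n,N/4}|\le cN^d$ can be absorbed into a constant $B$ \emph{independent of $N$} by shrinking $A$ is not correct for fixed $S$ and $N\to\infty$; the paper's proof is equally silent on this, and in fact the statement \eqref{nn1.3.2} as written should carry an extra polynomial-in-$N$ factor in front of $e^{-AS^4+B}$, which is harmless for the only use made of it (Corollary \ref{cornn1.5} with $S=(\log N)^{1/3}$, where $N^d e^{-A(\log N)^{4/3}}\to 0$).
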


\medskip
\noindent
{\bf Proof. } Invariance has been already proved in Theorem \ref{thmn3.1bis}.
 We condition the probability on the left hand side of \eqref{nn1.3.2} to the configuration $\bar\phi$ outside $\La_{n,N}$.  We can use the bound in \eqref{nn1.3.1} when  $\bar\phi \in \mathcal G_N$ and get in this case the bound with $e^{- AS^4 +B}$. The additional term comes from the contribution of the configurations $\bar \phi$ which are not in $\mathcal G_N$, their probability is bounded using \eqref{nn1.9.0}.
 \qed

   \medskip
   It   follows from Theorem \ref{thmnn1.4} that:

\begin{cor}
Let  $\mu \in M_{\la}$ then
\label{cornn1.5}
 \begin{equation}
   \label{nn1.32}
P_{ \mu}\big[\mathcal G' \big] =1,   \qquad \mathcal G' = \bigcup_{N\ge n}\mathcal G'_N,\qquad
\mathcal G'_N = \Big\{\sup_{t\le T}\sup_{x \in \La_{n,N/4}} |\phi^{(N)}_x(t)| \le (\log N)^{1/3}\Big\} 
    \end{equation}

\end{cor}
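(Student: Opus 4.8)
The plan is to deduce Corollary \ref{cornn1.5} from Theorem \ref{thmnn1.4} by a Borel--Cantelli-type argument along the sequence of volumes $N$. First I would fix $\mu \in M_\la$ and recall that, for each fixed $N>n$, the event
\[
\big(\mathcal G'_N\big)^c = \Big\{\sup_{t\le T}\sup_{x\in\La_{n,N/4}} |\phi^{(N)}_x(t)| > (\log N)^{1/3}\Big\}
\]
has $P_\mu$-probability bounded, via \eqref{nn1.3.2} applied with $S=(\log N)^{1/3}$, by
\[
e^{-A(\log N)^{4/3}+B} + e^{-a'(\log N)^{4/3}+b'}.
\]
Both terms are summable in $N$ (indeed they decay faster than any negative power of $N$, since $(\log N)^{4/3}\to\infty$), so $\sum_{N} P_\mu\big[(\mathcal G'_N)^c\big] < \infty$.

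Next I would invoke the Borel--Cantelli lemma: with $P_\mu$-probability $1$, only finitely many of the events $(\mathcal G'_N)^c$ occur, hence $\phi$ (together with $\om$) lies in $\mathcal G'_N$ for all $N$ large enough, and in particular in $\mathcal G' = \bigcup_{N\ge n}\mathcal G'_N$. This gives $P_\mu[\mathcal G']=1$, which is exactly \eqref{nn1.32}. One small point to be careful about is that the different processes $\phi^{(N)}(\cdot)$ for different $N$ are all built on the same probability space $(\phi,\om)$ using the common family of Brownian motions, so the events $(\mathcal G'_N)^c$ are genuinely events in the single space $(\mathcal X_n\times\Om, P_\mu)$ and Borel--Cantelli applies directly; I would state this explicitly.

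I do not expect a serious obstacle here: the corollary is a routine consequence of the uniform (in $N$ and $\bar\phi$) exponential estimate established in Theorem \ref{thmnn1.4}, and the only thing that makes it work is the super-exponential decay in $(\log N)^{4/3}$, which ensures summability. If anything, the mild subtlety is bookkeeping — making sure the "$S$" in \eqref{nn1.3.2} is allowed to be taken $N$-dependent (it is, since the estimate holds for all $S>0$ with constants independent of $N$) — and confirming that $\La_{n,N/4}$ is non-decreasing in $N$ so that $\mathcal G'$ is a genuine increasing union. Both are immediate, so the proof is short.

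\medskip
\noindent
\textbf{Proof.} Fix $\mu \in M_\la$. For each integer $N>n$, apply estimate \eqref{nn1.3.2} of Theorem \ref{thmnn1.4} with $S = (\log N)^{1/3}$ (the estimate holds for every $S>0$ with $A,B$ independent of $N$). We obtain
\[
P_\mu\Big[\sup_{t\le T}\sup_{x\in\La_{n,N/4}}|\phi^{(N)}_x(t)| > (\log N)^{1/3}\Big]
\le e^{-A(\log N)^{4/3}+B} + e^{-a'(\log N)^{4/3}+b'}.
\]
Since $(\log N)^{4/3}\to\infty$ as $N\to\infty$, each term on the right is $o(N^{-2})$, hence
\[
\sum_{N>n} P_\mu\big[(\mathcal G'_N)^c\big] < \infty .
\]
All the processes $\{\phi^{(N)}(t), t\le T\}$ are defined on the common probability space with law $P_\mu = \mu\times P$ (they are driven by the same Brownian motions $B_{\{x,y\}}$), so the events $(\mathcal G'_N)^c$ are events in this space and the Borel--Cantelli lemma gives
\[
P_\mu\Big[\limsup_{N\to\infty}(\mathcal G'_N)^c\Big] = 0 .
\]
Equivalently, $P_\mu$-almost surely there exists $\bar N$ such that the pair $(\phi,\om)$ belongs to $\mathcal G'_N$ for all $N\ge \bar N$; in particular $(\phi,\om)\in\mathcal G' = \bigcup_{N\ge n}\mathcal G'_N$. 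Therefore $P_\mu[\mathcal G']=1$, which is \eqref{nn1.32}.
\qed
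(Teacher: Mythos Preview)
Your proof is correct and is exactly the argument the paper has in mind (the corollary is stated there without proof, merely as ``It follows from Theorem \ref{thmnn1.4}''): plug $S=(\log N)^{1/3}$ into \eqref{nn1.3.2} and use that the resulting bound is summable in $N$. One small correction to your commentary: the sets $\mathcal G'_N$ are \emph{not} nested, since each one involves a different partial process $\phi^{(N)}$, so your remark that $\mathcal G'$ is an ``increasing union'' is inaccurate---but your actual proof never uses this, and in fact the Borel--Cantelli conclusion (that one lies in $\mathcal G'_N$ for all large $N$) is precisely what is implicitly needed afterwards in Proposition \ref{propnn1.6}.
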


 \medskip

\begin{prop}
\label{propnn1.6}
There exist $a''>0$ and $b''$ so that in $\mathcal G'_N$ we have:
  \begin{equation}
   \label{nn1.35.0.1}
\sup_{x \in \La_{n,N_0/8}}\sup_{t\in [0,T]} \big| \phi^{(N)}(t) - \phi^{(2N)}(t)\big|
\le  e^{-a'' N \log N +b''}
    \end{equation}
%
%

\end{prop}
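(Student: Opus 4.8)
The plan is to compare the two partial dynamics $\phi^{(N)}(t)$ and $\phi^{(2N)}(t)$, which both start from the \emph{same} configuration $\phi$ (restricted appropriately) and are driven by the \emph{same} Brownian motions, and to exploit the fact that for the quadratic Hamiltonian \eqref{quad2} the drift in \eqref{1.34.00.1} is \emph{linear}. For the quadratic Hamiltonian $\partial H/\partial\phi_x=\phi_x$, so the equation for $\phi^{(N)}_x(t)$ reads, for $x\in\La_{n,N}$,
\be
\phi^{(N)}_x(t)=\phi_x(0)-\int_0^t\!ds\,\Big[\sum_{y\sim x,\,y\in\La_{n,N}}(\phi^{(N)}_x(s)-\phi^{(N)}_y(s))+\sum_{y\sim x,\,y\in\Delta}(\phi^{(N)}_x(s)-\la_y)\Big]+\beta^{-1/2}\!\sum_{y\sim x}w_{x,y}(t),
\ee
and similarly for $\phi^{(2N)}$, where now the sites in $\Lambda_{2N,n}\setminus\Lambda_{N,n}$ are genuine dynamical variables rather than being frozen at $\bar\phi$. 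Setting $\delta_x(t)=\phi^{(N)}_x(t)-\phi^{(2N)}_x(t)$ for $x\in\La_{n,N}$ (and extending $\delta_x\equiv 0$ for the frozen sites of the $N$-dynamics that lie in $\La_{n,2N}$), the noise terms cancel exactly and $\delta(t)$ solves a \emph{closed, deterministic} linear ODE: $\dot\delta(t)=-A_N\,\delta(t)+\text{(boundary source)}$, where $-A_N$ is (minus) the discrete Laplacian with Dirichlet/absorbing conditions and the source is supported near $\partial\La_{n,N}$.

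The key point is that the source is \emph{not small} — on $S_{n,N}$ the difference between the frozen value $\bar\phi$ (bounded by $(\log|x|)^{1/3}$ on $\mathcal G'_N$) and the dynamical value $\phi^{(2N)}$ (bounded by $(\log N)^{1/3}$ on $\mathcal G'_{2N}$, which contains $\La_{n,N/2}\supset S_{n,N}$ only if we enlarge; here one uses that in $\mathcal G'_N$ both are $O((\log N)^{1/3})$) is of order $(\log N)^{1/3}$, not exponentially small. The smallness in \eqref{nn1.35.0.1} comes instead from the \emph{spatial decay} of the Green's function of the discrete Laplacian over the time interval $[0,T]$: the effect of a perturbation localized on $\partial\La_{n,N}$ reaches a site $x\in\La_{n,N_0/8}$, at graph distance of order $N$, only after the heat semigroup $e^{-tA_N}$ has propagated that far, and $\|(e^{-tA_N})_{x,z}\|\le e^{-c\,\mathrm{dist}(x,z)\log(\mathrm{dist}(x,z)/t)}$ for $\mathrm{dist}\gg t$ — this is the standard large-deviation/Bennett bound for continuous-time random walk, giving factorially small transition probabilities at distance $\asymp N$ over bounded time. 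Duhamel's formula $\delta(t)=\int_0^t e^{-(t-s)A_N}(\text{source})(s)\,ds$ then yields $|\delta_x(t)|\le T\cdot(\log N)^{1/3}\cdot e^{-a''N\log N+\text{const}}$, which is \eqref{nn1.35.0.1} after adjusting $b''$.

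The steps I would carry out, in order: (i) write the equation for $\delta(t)$, noting the exact cancellation of the noise and identifying the generator $-A_N$ and the boundary source explicitly; (ii) record that on $\mathcal G'_N$ (and using $\La_{n,N/4}\cap S_{n,N}$-type inclusions, or directly that $\phi^{(2N)}$ restricted to $\La_{n,N}$ is controlled by $\mathcal G'_{2N}$ which we may assume to hold, or better: work on $\mathcal G'_N$ alone and bound the frozen $\bar\phi$ and the near-boundary value of $\phi^{(2N)}$ both by $(\log N)^{1/3}$) the source is pointwise bounded by a constant times $(\log N)^{1/3}$; (iii) solve by Duhamel and bound $\delta_x(t)$ by the time-integrated operator norm of the restriction of $e^{-sA_N}$ to the pair (bulk site $x$, boundary layer $S_{n,N}$); (iv) invoke the random-walk estimate $\mathbb P_x(X(s)\text{ has moved distance }\ge L)\le e^{-cL\log(L/s)+c's}$ for $L\asymp N$, $s\le T$, to bound that operator norm by $e^{-a''N\log N+b''}$. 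The main obstacle is step (iv): one must make precise the factorial-decay heat-kernel bound for the \emph{absorbed} walk on the graph $\La_{n,N}\cup\Delta$ — but monotonicity (killing only decreases transition probabilities) lets us dominate by the free continuous-time simple random walk on $\Z^d$ (or on $\C$), for which the bound $\mathbb P_0(\|X(s)\|_1\ge L)\le 2d\,e^{-L\log(L/(2ds))+L}$ is elementary from the Poisson-clock representation (the number of jumps by time $s$ is Poisson($2ds$), and $L$ jumps have probability $e^{-2ds}(2ds)^L/L!$, summed over $\ell\ge L$). A secondary technical point is ensuring all the conditioning events ($\mathcal G'_N$, and implicitly a good-configuration event for the $2N$-dynamics) are compatible; this is handled by the earlier corollaries and by the fact that \eqref{nn1.35.0.1} is only claimed to hold \emph{on} $\mathcal G'_N$, so one is free to also intersect with $\mathcal G'_{2N}$ when needed, its complement having the exponentially small probability from \eqref{nn1.32}.
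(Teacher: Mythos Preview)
There is a genuine gap: you have assumed the wrong Hamiltonian. Proposition~\ref{propnn1.6} sits in Section~\ref{sec6}, whose opening sentence restricts to the Dinaburg--Sinai Hamiltonian \eqref{sinai2}, for which $\partial H/\partial\phi_x = 4\phi_x^3-4\phi_x + \text{(n.n.\ terms)}$. The drift is \emph{cubic}, not linear, so $\delta(t)=\phi^{(N)}(t)-\phi^{(2N)}(t)$ does \emph{not} satisfy a closed linear ODE $\dot\delta = -A_N\delta + \text{source}$, and the Duhamel/heat-kernel machinery as you wrote it does not apply. The quadratic Hamiltonian \eqref{quad2} is treated separately in Section~\ref{sec3.3} via duality and does not need this proposition.

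What survives of your sketch, and what the paper actually does: the noise cancellation is correct, and the remaining argument is a finite-speed-of-propagation estimate, but implemented via a \emph{Gronwall iteration of an integral inequality} rather than a linear semigroup bound. The point of the event $\mathcal G'_N$ is precisely to supply a \emph{local} Lipschitz constant for the cubic drift: since $|\phi^{(N)}_x(s)|,|\phi^{(2N)}_x(s)|\le(\log 2N)^{1/3}$ there, the mean-value theorem gives $|U'(\phi^{(N)}_x)-U'(\phi^{(2N)}_x)|\le 12(\log 2N)^{2/3}\,|\delta_x|$. Combined with the (globally) Lipschitz two-body term this yields
\[
|\delta_x(t)|\le c\log N\int_0^t ds\sum_{|y-x|\le 3}|\delta_y(s)|,
\]
which one iterates $K\asymp N$ times (each step enlarges the spatial support by $\le 3$, so $K$ iterations starting from $\La_{n,N_0/8}$ stay inside $\La_{n,N/4}$). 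The $K$-fold time integral produces $T^K/K!$, the spatial sum produces at most $(c')^K$, and Stirling gives $|\delta_x(t)|\le (c''T\log N)^K/K!\cdot 2(\log 2N)^{1/3}\le e^{-a''N\log N+b''}$. This is the nonlinear analogue of your step (iv); your Poisson-tail bound $\mathbb P(\text{dist}\ge L\text{ by time }T)\le e^{-cL\log(L/T)}$ is exactly the same factorial mechanism, but routed through iteration of the integral inequality rather than through a linear semigroup that does not exist here.

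Your remark about needing both $\mathcal G'_N$ and $\mathcal G'_{2N}$ is well taken; the paper is tacitly using both bounds in \eqref{nn1.38}.
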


\medskip
\noindent
{\bf Proof.} 
We need to bound
the differences $| \phi^{(N)}_x(t) -  \phi^{(2N)}_x(t)|$ (recall that $2N$ is the first integer after $N$ in the set $\{2^n, n \in \mathbb N\}$)
with $t \le T$ and $N \ge N_0$.
We have:
  \begin{eqnarray}
   \label{nn1.37}
\big| \phi^{(N)}_x(t) -  \phi^{(2N)}_x(t)\big|&\le&  \int_0^t ds  \sum_{y \in \La_N, y \sim x}
 \Big| \Big\{\frac{\partial H}{\partial \phi_x}(\phi^{(N)}(s)) -\frac{\partial H}{\partial \phi_y} (\phi^{(N)}(s))\Big\}\nn\\ &-& \Big\{\frac{\partial H}{\partial \phi_x}(\phi^{(2N)}(s)) -\frac{\partial H}{\partial \phi_y} (\phi^{(2N)}(s))\Big\}
  \Big|
    \end{eqnarray}
The contribution of the two-body potential $V$ to $\frac{\partial H}{\partial \phi_x}$ is uniformly Lipschitz, the one body term is bounded as follows:
  \begin{equation}
   \label{nn1.38}
 \Big| \frac{\partial U}{\partial \phi_x}(\phi_x^{(N)}(s))-\frac{\partial U}{\partial \phi_x}(\phi_x^{(2N)}(s))\Big| \le 12 ((\log 2N)^{1/3})^2 |\phi_x^{(N)}(s)-\phi_x^{(2N)}(s)|
    \end{equation}
because $(\log 2N)^{1/3}$ bounds $|\phi_x^{(i)}(s)|$, $i=N,2N$. 

It then follows:
  \begin{eqnarray}
   \label{nn1.39}
| \phi^{(N)}_x(t) -  \phi^{(2N)}_x(t)|&\le& c \log N\int_0^t ds  \sum_{y \in \La_N, |y - x| \le 3}
| \phi^{(N)}_y(s) -  \phi^{(2N)}_y(s)|
    \end{eqnarray}
We can iterate \eqref{nn1.39} $K$ times with $K$ the largest integer such that $N/8 + 4K \le N/4$.  After $K$  iterations we get \eqref{nn1.35.0.1}.  \qed

 \medskip

 In the theorem below we write $T_t^{(N)}(\phi,\om):=\phi^{(N)}(t|\phi,\om)$.

\begin{thm}
For any $(\phi,\om)\in \mathcal G'$ there is $\phi(t|\phi,\om)$ which satisfies the infinite volume stochastic differential equations \eqref{1.34.00} and,
for any $x$ and any $t\le T$, $\phi^{(N)}_x(t|\phi,\om)$ has a limit when $N\to \infty$, that we denote by $\phi_x(t|\phi,\om)$:
  \begin{equation}
   \label{nn1.35}
\lim_{N\to \infty} \sup_{t \in [0,T]} \big|\phi^{(N)}(t|\phi,\om) -\phi_x(t|\phi,\om)\big| =0
    \end{equation}
Moreover, if $\mu \in M_\la$, then for any test function $f$,
  \begin{equation}
   \label{nn1.36}
 \int \mu(d\phi) \int P(d\om)  f(\phi(t|\phi,\om))  = \int \mu(d\phi)  f(\phi )
    \end{equation}

\end{thm}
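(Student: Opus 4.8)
The plan is to combine the convergence estimate from Proposition \ref{propnn1.6} with the uniform time-superstability bound of Theorem \ref{thmnn1.4} and the fact that $\mu[\mathcal G]=1$ (Corollary \ref{cornn1.2}) to pass to the limit. First I would fix $(\phi,\om) \in \mathcal G'$ and a target site $x$; for all $N$ large enough (so that $x \in \La_{n,N_0/8}$ in the notation of Proposition \ref{propnn1.6}, after relabelling $N$ along the dyadic sequence $\{2^k\}$) the estimate \eqref{nn1.35.0.1} shows that the sequence $\{\phi^{(N)}(\cdot\,|\phi,\om)\}_N$ is Cauchy in $C([0,T])$, since $\sum_k e^{-a'' 2^k \log 2^k + b''} < \infty$. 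Hence the uniform-in-$t$ limit $\phi_x(t|\phi,\om)$ in \eqref{nn1.35} exists. Because each $\phi^{(N)}$ solves the truncated integral equation \eqref{1.34.00.1} and the drift terms are continuous, I would then pass to the limit inside the integral in \eqref{1.34.00.1}: the one-body term converges by the uniform bound $|\phi^{(N)}_x(s)| \le (\log N)^{1/3}$ on $\mathcal G'_N$ together with the pointwise convergence just established (dominated convergence on $[0,t]$), and the two-body term converges by its uniform Lipschitz property; the ``reservoir'' contribution involving $\sigma_y = \la_y$ and the frozen $\bar\phi$ disappears as $N\to\infty$ since $x$ stays well inside $\La_{n,N}$. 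This yields that $\phi(t|\phi,\om)$ solves the infinite-volume system \eqref{1.34.00}.

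For the invariance statement \eqref{nn1.36}, let $\mu \in M_\la$ and let $f$ be a test function depending (by density it suffices) on finitely many coordinates, say with support contained in some $\La_{n,M}$, and with $\sup|f|\le 1$. By Theorem \ref{thmnn1.4}, $\mu$ is invariant under each $T^{(N)}_t$, i.e.
\begin{equation}
\label{plan-inv-N}
\int \mu(d\phi) \int P(d\om) \, f\big(T^{(N)}_t(\phi,\om,\la)\big) = \int \mu(d\phi)\, f(\phi).
\end{equation}
I would then show that the left-hand side converges to $\int \mu(d\phi)\int P(d\om)\, f(\phi(t|\phi,\om))$ as $N\to\infty$ along the dyadic subsequence. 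On the event $\mathcal G'$ (which has $P_\mu$-probability $1$ by Corollary \ref{cornn1.5}) the coordinates of $\phi^{(N)}(t)$ entering $f$ converge to those of $\phi(t)$ by \eqref{nn1.35}, so $f(T^{(N)}_t(\phi,\om,\la)) \to f(\phi(t|\phi,\om))$ pointwise on $\mathcal G'$; since $|f|\le 1$, bounded convergence gives the claim, and combined with \eqref{plan-inv-N} we obtain \eqref{nn1.36}.

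The main obstacle is controlling the convergence of the drift uniformly enough to justify the limit in the integral equation — in particular ensuring that the finitely many iterations in Proposition \ref{propnn1.6} indeed reach from the boundary layer $\La_{n,N}\setminus\La_{n,N/2}$ down to the fixed observation window, and that the resulting bound \eqref{nn1.35.0.1} is summable along the dyadic sequence; this is where the super-stability estimates of Theorems \ref{thmn1.3}, \ref{nnthm1.1} and \ref{thmnn1.3} do the heavy lifting, by providing the a priori logarithmic bound on $|\phi^{(N)}_x(s)|$ that linearizes the quartic one-body potential and gives a genuine Lipschitz constant of order $\log N$. A secondary, more routine, point is the passage from test functions of finitely many coordinates to general test functions, and checking that the limiting object does not depend on the choice of dyadic subsequence — which follows a posteriori from uniqueness of solutions to \eqref{1.34.00} on $\mathcal G'$, itself a consequence of the same Lipschitz-type Gronwall argument used in Proposition \ref{propnn1.6}.
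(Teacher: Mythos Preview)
Your proposal is correct and follows essentially the same route as the paper: the Cauchy property from Proposition \ref{propnn1.6} (summed along the dyadic sequence) gives the uniform limit, and invariance of $\mu$ under each partial dynamics $T^{(N)}_t$ together with dominated convergence on the full-measure set $\mathcal G'$ yields \eqref{nn1.36}. You are in fact more explicit than the paper about passing to the limit in the integral equation \eqref{1.34.00.1} to verify that the limit solves \eqref{1.34.00}; the paper's own proof merely invokes equi-continuity and boundedness for subsequential convergence and then Proposition \ref{propnn1.6} for uniqueness of the limit, leaving the SDE verification implicit.
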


\medskip
\noindent
{\bf Proof.} Let $N$ be such that $(\phi,\om) \in \mathcal G'_N$.  Since it satisfies the equations \eqref{1.34.00.1} then $\phi_x^{(N)}(t)$, $t\in [0,T]$, is equi-continuous and bounded and therefore it converges by subsequences to a limit $\phi_x(t)$.  The limit is independent of the subsequence because
$\phi^{(N)}(t)$ is Cauchy by Proposition \ref{propnn1.6}.

By the invariance of $\mu$ for the partial dynamics and   Theorem \ref{thmn3.1bis}, we have, for all $N$,
 \begin{equation}
   \label{nn1.41}
 \mathcal E_{\mu}\Big[ f(\phi^{(N)}(t))\Big]  =   E_{\mu}\Big[ f(\phi (0))
\Big]
    \end{equation}
Then \eqref{nn1.36} follows from \eqref{nn1.35} and the Lebesgue dominated convergence theorem.  \qed

\vskip.5cm

\section{Proof of theorem \ref{dual-thm}}
\label{due}

%

Recalling the notation in Section \ref{sec3.3},
the duality statement \eqref{dual-formula}
is a consequence of the following:
\begin{lem}
\label{lem1}
For $x,y$ nearest neighbors in $\Lambda_{n,N}$ 
we have
$$
L_{x,y} D^{\Delta,\sigma}(\cdot,\eta)(\phi) = {\mathcal L}_{x,y} D^{\Delta,\sigma}(\phi,\cdot)(\eta)
$$
For $x\in \Lambda_{n,N}$ and $y\in \Delta$ nearest neighbors we have
$$
\bar{L}_{x,y} D^{\Delta,\sigma}(\cdot,\eta)(\phi) = {\mathcal{\bar{L}}}_{x,y} D^{\Delta,\sigma}(\phi,\cdot)(\eta)
$$

\end{lem}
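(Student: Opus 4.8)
The plan is to verify the two operator identities in Lemma \ref{lem1} by direct computation, exploiting the classical three-term recurrence and derivative relations for Hermite polynomials. Recall that the (probabilists') Hermite polynomials satisfy $h_n'(\xi)=n\,h_{n-1}(\xi)$ and $\xi\, h_n(\xi)=h_{n+1}(\xi)+n\,h_{n-1}(\xi)$, equivalently $h_{n+1}(\xi)=\xi\,h_n(\xi)-n\,h_{n-1}(\xi)$ and $h_n''(\xi)=n(n-1)h_{n-2}(\xi)$. For the quadratic Hamiltonian \eqref{quad2} one has $\frac{\partial H}{\partial \phi_x}=\phi_x$, so the generators \eqref{gen} and \eqref{2.55} simplify to
\[
L_{x,y}=-(\phi_x-\phi_y)\Bigl(\tfrac{\partial}{\partial\phi_x}-\tfrac{\partial}{\partial\phi_y}\Bigr)+\tfrac1\beta\Bigl(\tfrac{\partial}{\partial\phi_x}-\tfrac{\partial}{\partial\phi_y}\Bigr)^2,\qquad
\bar L_{x,y}=-(\phi_x-\sigma_y)\tfrac{\partial}{\partial\phi_x}+\tfrac1\beta\tfrac{\partial^2}{\partial\phi_x^2}.
\]
(One should first check that the normalization constant $\beta$ can be absorbed, or carry it through; I expect the duality function may need a $\beta$-dependent rescaling $\phi_x\mapsto\sqrt\beta\,\phi_x$ inside the Hermite polynomials, and I would fix this convention at the outset.)

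First I would prove the bulk identity. Since $D^{\Delta,\sigma}(\phi,\eta)$ factorizes over sites, applying $L_{x,y}$ only touches the two factors $h_{\eta_x}(\phi_x)h_{\eta_y}(\phi_y)$, all other factors passing through as constants. So it suffices to show
\[
L_{x,y}\bigl[h_{m}(\phi_x)h_{k}(\phi_y)\bigr]
= m\bigl[h_{m-1}(\phi_x)h_{k+1}(\phi_y)-h_m(\phi_x)h_k(\phi_y)\bigr]+k\bigl[h_{m+1}(\phi_x)h_{k-1}(\phi_y)-h_m(\phi_x)h_k(\phi_y)\bigr],
\]
where $m=\eta_x$, $k=\eta_y$, because the right-hand side is exactly ${\mathcal L}_{x,y}$ acting on the $\eta$-variable (moving one particle across the bond $\{x,y\}$). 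I would expand the left-hand side using the product rule: the second-order term $\tfrac1\beta(\partial_{\phi_x}-\partial_{\phi_y})^2$ produces $h_m''h_k-2h_m'h_k'+h_mh_k''$, and then use $h''_m=m(m-1)h_{m-2}$, $h'_m=m\,h_{m-1}$; the first-order drift term $-(\phi_x-\phi_y)(\partial_{\phi_x}-\partial_{\phi_y})$ produces $-(\phi_x-\phi_y)(m\,h_{m-1}h_k-k\,h_mh_{k-1})$, and I would eliminate the bare factors $\phi_x h_{m-1}$, $\phi_y h_{k-1}$ using the recurrence $\xi h_{j}=h_{j+1}+j h_{j-1}$. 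Collecting terms, the mixed $h_{m-1}h_{k-1}$ contributions should cancel between the two pieces, leaving precisely the claimed right-hand side. This is the computation I expect to be slightly delicate in bookkeeping but entirely mechanical; the key structural point is that duality here is just the statement that multiplication/raising on the $\phi$-side is adjoint to the hopping on the $\eta$-side.

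Then I would treat the boundary identity, which is analogous but simpler since only the single factor $h_{\eta_x}(\phi_x)$ is affected on the $\phi$-side, while on the $\eta$-side $\bar{\mathcal L}_{x,y}$ removes a particle from $x$ (absorption), multiplying $D$ by an extra $\sigma_y$. Concretely I must show
\[
\bar L_{x,y}\bigl[h_m(\phi_x)\bigr]=m\bigl[\sigma_y\,h_{m-1}(\phi_x)-h_m(\phi_x)\bigr],
\]
which follows by writing $\bar L_{x,y}h_m=-(\phi_x-\sigma_y)m h_{m-1}+\tfrac1\beta m(m-1)h_{m-2}$, using $\phi_x h_{m-1}=h_m+(m-1)h_{m-2}$ so that $-m\phi_x h_{m-1}+\tfrac1\beta m(m-1)h_{m-2}$ collapses (with the right $\beta$-convention) to $-m\,h_m$, and the remaining $m\sigma_y h_{m-1}$ is the raising/absorption term. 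Finally, I would assemble Lemma \ref{lem1} into Theorem \ref{dual-thm}: summing the local identities over all bonds gives $L^{n,N,\Delta,\sigma}D(\cdot,\eta)(\phi)=\mathcal L D(\phi,\cdot)(\eta)$, and then the duality relation \eqref{dual-formula} follows by the standard semigroup argument — both sides of $E_\phi[D(\phi^N(t),\eta)]$ and $\mathcal E_\eta[D(\phi,\eta^N(t))]$ solve the same linear evolution equation with the same initial datum $D(\phi,\eta)$, so they coincide; one needs a mild a priori moment bound (polynomial growth of $D$ against the Gaussian/finite-particle dynamics, available since $\Lambda_{n,N}$ is finite) to justify differentiating under the expectation and to guarantee uniqueness of the solution. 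The only genuine obstacle is the uniqueness/regularity step — making sure the relevant function spaces are preserved by both semigroups — but on a finite volume with an Ornstein–Uhlenbeck-type diffusion and a finite-state (per particle) dual this is routine.
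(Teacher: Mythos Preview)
Your proposal is correct and follows essentially the same route as the paper: factorize $D^{\Delta,\sigma}$ over sites, apply $L_{x,y}$ (resp.\ $\bar L_{x,y}$) to the two (resp.\ one) relevant Hermite factors, and reduce via the standard Hermite identities. The only cosmetic difference is that the paper groups terms so as to invoke the combined identity $h_n''(\xi)-\xi\,h_n'(\xi)=-n\,h_n(\xi)$ directly, which spares the explicit cancellation of the $h_{m-2}$ and $h_{m-1}h_{k-1}$ contributions that your expansion produces; your concern about $\beta$ is well-placed, as the paper's computation tacitly takes $\beta=1$ (equivalently rescales $\phi$), so do fix that convention at the outset as you suggest.
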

\begin{proof}
To alleviate notation we do not write the argument of the polynomials. We have
\begin{eqnarray*}
L_{x,y} D^{\Delta,\sigma}(\cdot,\eta)(\phi)
&=&
\Big[ \prod_{z\in\Lambda_{n,N}, z\neq{x,y}} h_{\eta_z} \Big] \Big[\prod_{z\in \Delta} \sigma_z^{\eta_z} \Big]\\
& &\Big[
h_{\eta_x}'' h_{\eta_{y}}  + h_{\eta_x} h_{\eta_{y}}'' - 2 h_{\eta_x}' h_{\eta_{y}}' \\
& & - \phi_x h_{\eta_x}' h_{\eta_{y}} - \phi_{y} h_{\eta_x} h_{\eta_{y}}' + \phi_x h_{\eta_x} h_{\eta_{y}}' + \phi_y h_{\eta_x}' h_{\eta_{y}}\Big].
\end{eqnarray*}
We regroup terms as follows
\begin{eqnarray*}
L_{x,y} D^{\Delta,\sigma}(\cdot,\eta) (\phi)
&=&
\Big[ \prod_{z\in\Lambda_{n,N}, z\neq{x,y}} h_{\eta_z} \Big] \Big[\prod_{z\in \Delta} \sigma_z^{\eta_z} \Big]\\
& &
\Big[h_{\eta_{x}}' (\phi_{y} h_{\eta_{y}}- h_{\eta_{y}}') + (h_{\eta_x}'' - \phi_x h_{\eta_x}') h_{\eta_{y}}  \\
& &+
(\phi_x h_{\eta_x} - h_{\eta_x}') h_{\eta_{y}}'  + h_{\eta_{x}} (h_{\eta_{y}}'' - \phi_{y} h_{\eta_{y}}')  \Big],
\end{eqnarray*}
and then use the following identities for Hermite polynomials
\be\label{id1}
h_n'(\xi) = n \,h_{n-1}(\xi)
\ee
\be\label{id2}
\xi h_n(\xi) - h_n'(\xi) = h_{n+1}(\xi)
\ee
\be\label{id3}
h_n''(\xi) - \xi h_n'(\xi) = -n h_n(\xi)
\ee
to find
\begin{eqnarray*}
L_{x,y} D^{\Delta,\sigma}(\cdot,\eta)(\phi)
&=&
\Big[ \prod_{z\in\Lambda_{n,N}, z\neq{x,y}} h_{\eta_z} \Big] \Big[\prod_{z\in \Delta} \sigma_z^{\eta_z} \Big]\\
& & \Big[\eta_x (h_{\eta_{x}-1} h_{\eta_{y}+1} - h_{\eta_x}h_{\eta_{y}})
+ \eta_{y}(h_{\eta_x +1} h_{\eta_{y}-1}  - h_{\eta_{x}} h_{\eta_{y}}) \Big] \\
& = &
{\mathcal L}_{x,y} D^{\Delta,\sigma}(\phi,\cdot)(\eta).
\end{eqnarray*}
Similarly, for the boundaries we have
\begin{eqnarray*}
\bar{L}_{x,y} D(\cdot,\eta)(\phi)
&=& 
 \Big[ \prod_{z\in\Lambda_{n,N}, z\neq{x}} h_{\eta_z}\Big] \Big[\prod_{z\in \Delta, z \neq y} \sigma_z^{\eta_z} \Big]\\
&  &
 \Big[ \sigma_y^{\eta_y} h_{\eta_{x}}'' - (\phi_x- \sigma_y)  \sigma_y^{\eta_y} h_{\eta_{x}}' \Big]\\
\end{eqnarray*}
This can be rewritten as
\begin{eqnarray*}
\bar{L}_{x,y} D(\cdot,\eta)(\phi)
&=& 
 \Big[ \prod_{z\in\Lambda_{n,N}, z\neq{x}} h_{\eta_z}\Big] \Big[\prod_{z\in \Delta, z \neq y} \sigma_z^{\eta_z} \Big]\\
& &
 \Big[\sigma_y^{\eta_y+1}h_{\eta_{x}}' + \sigma_y^{\eta_y}( h_{\eta_{x}}'' - \phi_x h_{\eta_{x}}' )\Big]\\
\end{eqnarray*}
By using the identies \eqref{id1} and \eqref{id3} one arrives to
\begin{eqnarray*}
\bar{L}_{x,y} D(\cdot,\eta)(\phi)
&=& 
 \Big[ \prod_{z\in\Lambda_{n,N}, z\neq{x}} h_{\eta_z}\Big] \Big[\prod_{z\in \Delta, z \neq y} \sigma_z^{\eta_z} \Big]\\
 & &
\Big[\eta_x ( \sigma_y^{\eta_y+1}h_{\eta_{x}-1} -  \sigma_y^{\eta_y} h_{\eta_{x}}) \Big]\\
& = &
{\bar{{\mathcal L}}}_{x,y} D(\phi,\cdot)(\eta)
\end{eqnarray*}

\end{proof}

\medskip   
   
\appendix 
\section{}  

\medskip
\noindent
{\bf Proof of Proposition \ref{recur}.} \hspace{0.1cm}  

\medskip
\noindent
{\bf Item (1).} \hspace{0.1cm} 
Introducing the notation $\mathcal{B}^{\pm} = \{X(\cdot)\in \Z^d_{\pm n,\pm} \: \text{definitively}\}$,
we may write
\be
\label{op}
\la_x = \la^+\mathbb{P}_x[\mathcal{B}^{+}] + \la^-\mathbb{P}_x[\mathcal{B}^{-}]
\ee
By letting the walker $X(\cdot)$ do its first jump to one of its neighbors,
and calling $d_x$ the number of neighbors of $x$, we can write
\be
\label{opp}
\mathbb{P}_x[\mathcal{B}^{\pm}] =\frac{1}{d_x} \sum_{\substack{y \in \C \\{y\sim x}}} \mathbb{P}_y[\mathcal{B}^{\pm}] 
\ee  
Inserting \eqref{opp} into \eqref{op} we find
\be
d_x \la_x = \sum_{\substack{y \in \C \\{y\sim x}}} \la_y
\ee
from which it follows that $\la$ is an harmonic function.

\medskip
\noindent
{\bf Item (2).} \hspace{0.1cm} 
The proof follows from classical estimates on the recurrence of
random walks on $\Z^d$. This explains why we need a spatial dimension
larger than two.
In $d\le 2$ the random walk $X(\cdot)$ comes back infinitively many times
to the channel $C_n$ and therefore the only harmonic functions are the
constant ones.  For completeness we give some details.

To show that $p_x^+ + p_x^- = 1$ it is enough to show that if $d\ge 3$ then the random walk
$X(\cdot)$ is definitively in the complement of $C_n$.
Let
\be
K_n = \{x\in\C : x_1=n, \, |x_i| \le n \}
\ee
and define the hitting time of $\{X(t), \; t\ge0\}$ to $K_n$ as
\be
\tau(K_n) = \inf\{t\ge 0 \;:\; X(t) \in K_n\}
\ee
Denoting by $\mathbb{P}^{X}_x$ the law
of the $X(\cdot)$ process started from $x$,
and defining
\be
S_{n,N}^+ =\Z_{n,+}^d \cap S_{n,N-1}\,,
\ee
where $S_{n,N-1}$ has been defined in \eqref{nota},
we claim that $p_x^+ + p_x^- = 1$  is implied by
\be
\label{want}
\lim_{N\to\infty} \min_{x\in S^+_{n,N}}\mathbb{P}^{X}_x[\tau(K_n) = \infty] = 1.
\ee
Indeed, we distinguish the following cases:
\begin{itemize}
\item If $x\in\Lambda_{n,N}$  then with probability 1 the walker
$X(\cdot)$ will hit $S_{n,N}^+\cap S_{n,N}^-$ in a finite time.
\item If $x\in \Z^d_{n,+} \setminus \Lambda_{n,N}$ it can only reach
$K_n$ after passing through $S_{n,N}^+$. The analogous statement
holds for $x\in \Z^d_{n,-} \setminus \Lambda_{n,N}$.
\end{itemize}
To prove \eqref{want}
we call $\{Y(t), \; t \ge 0\}$ the usual continous-time random
walk on $\Z^d$ that jumps with intensity $1$ to any
of its nearest neighboring sites.
Classical estimates prove that, if $d\ge 3$ then  for any compact set $K$,
\be
\label{use}
\lim_{N\to\infty} \min_{x\in S^+_{n,N}} \mathbb{P}^{Y}_x[\tau(K) = \infty] = 1
\ee
where  $\mathbb{P}^{Y}_x$ denotes the law
of the $\{Y(t), \; t \ge 0\}$ process started from $x$.
We may couple $X$ and $Y$ in such a way that
\be
X_i(t) = Y_i(t) \qquad i =2,\ldots,d
\ee
while
\be
X_1(t) =
\left\{
\begin{array}{ll}
Y_1(t) & \text{if } Y_1(t) \ge n,\\
- Y_1(t) +2n -1  & \text{if } Y_1(t) < n.
\end{array} \right.
\ee
If we call $T$ the first time when $X(t)\in K_n$ then $Y(T)\in (K_{n} \cup K_{n-1})$.
Therefore the claim \eqref{want} follows from \eqref{use}
with $K = K_{n} \cup K_{n-1}$.

To prove that $\lambda_x$ is non-constant as $x$ varies in $\C$ we observe that
$p_x^+ \to 1$ and $p_x^-\to 0$ when $x_1\to+\infty$
and the opposite occurs when $x_1\to-\infty$.
This sufficies.

\medskip
\noindent
{\bf Item (3).} \hspace{0.1cm} To show that the flux is the same on each section it is enough to prove that $I_\xi(\la)=I_{\xi+1}(\la)$ for $|\xi| < n$.
 For such a $\xi$ we write, recalling (2.6) and that  $\la$ is harmonic,
       \begin{equation}
   \label{107}
0= \sum_{x\in \Sigma_\xi}G\la(x) = I_\xi(\la)-I_{\xi+1}(\la).
    \end{equation}
 
\smallskip 
\noindent 
We work with  the central section
inside the channel, i.e. $\xi =0$ and prove that for each $x\in\Sigma_0$, we have
$\la_{x-e_1} - \la_x \le c/n$. This in turn follows if we prove that
\begin{equation}
\label{101}
| p_{x^0}^{\pm} - p_{y^0}^{\pm} | \le \frac cn
\end{equation}
where we recall that $p_z^{\pm}$ is defined in (2.9) 
and we take $x^0=(x^0_1,..,x^0_d)$, $y^0=(y^0_1,..,y^0_d)$
with $x^0_1=0$, $y^0_1=-1$ and  $x^0_i=y^0_i$ for $i=2,\ldots,d$.

Call $\{X(t), t\ge 0\}$ and $\{Y(t), t \ge 0\}$ two copies of the random walk process with generator $G$ in \eqref{genRW},
starting respectively from $x^0$ and $y^0$.  We will prove that there exists a coupling $Q$  of these two processes
so that
  \begin{equation}
   \label{102}
Q\Big [ \text{$X(\cdot)$ definitively in $\mathbb Z_{n,\pm}$ and
$Y(\cdot)$ definitively in $\mathbb Z_{n,\mp}$}\Big] \le  \frac {c'}n
    \end{equation}
which clearly implies \eqref{101}.  

To define the coupling $Q$ it is convenient to realize the process $X(t)$ in terms of its coordinates $X_i(t)$.  To each $i \in \{1,\ldots, d\}$ we associate an exponential clock which rings with intensity 2, all clocks are independent.  When a clock rings we take a variable $\varepsilon$ with values $\pm 1$, all the $\varepsilon$-variables are mutually independent.  If the $i$-th clock rings and $\varepsilon$ is the associated variable, then $X_i$ tries to jump: $X_i\to X_i+\varepsilon$, the jump is done if after the jump $X\in \mathcal X_n$, otherwise it is suppressed.  
%

\medskip

{\em Definition of Q:}
The coupling $Q$ is a measure on the sample space $\Omega$ and  we will define $X(t)$ and $Y(t)$  on $\Omega$.
The elements $\omega\in\Omega$ are of the form $\omega=\{t_{n}^{i,x}, \varepsilon_{n}^{i,x}, t_{n}^{1,y}, \varepsilon_{n}^{1,y}\}$ 
where  $\; i \in \{1,\ldots, d\}$  and $n \in \N$.
Under $Q$, the times $t_{n}^{i,x}$  and $t_{n}^{1,y}$ are realizations of 
Poisson processes of intensity $2$ and the increments  $\varepsilon_{n}^{i,x}$  and $\varepsilon_{n}^{1,y}$
are realizations of Bernoulli processes with parameter $1/2$. All these processes
are independent of each other. Thus $Q$ is completely defined.

\medskip

{\em Representation of $X(t)$:} 
We define the processes $\{X_i(t)\}_{t\ge 0}$, with  $i = 1,\ldots, d$,
as the collection of walkers that are initialized from $x^0_i$ 
and at the times $t_{n}^{i,x}$ jumps by $\varepsilon_{n}^{i,x}$
 if the jump is allowed (the walker can not exit $\C$).

\medskip
{\em Representation of $Y(t)$:} 
We first define the auxiliary  processes $\{Y'_i(t)\}_{t\ge 0}$ where  $i = 1,\ldots, d$.
They start from $y^0_i$ and they use the variables $\{t_{n}^{i,x}, \varepsilon_{n}^{i,x}\}$ 
for $i = 2,\ldots, d$ and the variables $\{t_{n}^{1,y}, \varepsilon_{n}^{1,y}\}$ 
for the first coordinate $\{Y'_1(t)\}_{t\ge 0}$.
To define  $\{Y(t), t \ge 0\}$ we introduce the time $\bar{t}$ as 
$$
\bar{t} = \inf \{{t\ge 0 \,:\, Y'_1(t)=X_1(t)}\}
$$
and define $Y(t) = Y'(t)$ for $t\le \bar{t}$.
Then $\{Y(t), t \ge \bar{t}\}$ is constructed by using the variables
$\{t_{n}^{i,x}, \varepsilon_{n}^{i,x}\}$ with $i=1,\ldots,d$
and starting at time $\bar{t}$ from $Y'(\bar{t})$.

\bigskip

Clearly the law of $X(\cdot)$ is $\mathbb{P}_{x^0}$
and the law of $Y(\cdot)$ is $\mathbb{P}_{y^0}$,
thus $Q$ defines the desired coupling.
Having defined the coupling $Q$, we now start the 
analysis of \eqref{102}.
To this aim, it is convenient to define 
two processes $X_1^*(t)$ and $Y_1^*(t)$.
The process $\{X^*_1(t), \;   t \ge 0\}$ 
is initialized from $0$ and, at  times $t_{n}^{1,x}$, jumps by $\varepsilon_{n}^{1,x}$
(with no restrictions).
We define similarly the auxiliary process $\{Z_1(t), \;  t \ge 0\}$:
it starts from $-1$ and it uses the variables $\{t_{n}^{1,y}, \varepsilon_{n}^{1,y}\}$. 
To define $\{Y_1^*(t), t \ge 0\}$ we introduce the time $t^{*}$ as the first time when $Z_1(t)=X^*_1(t)$
and define $Y_1^*(t) = Z_1(t)$ for $t\le t^{*}$.
 $\{Y_1^*(t), t \ge t^{*}\}$ is constructed by using the variables
$\{t_{n}^{1,x}, \varepsilon_{n}^{1,x}\}$ and starting at time $t^{*}$ from $Y'(t^{*})$.
As a consequence, 
$X^*_1(t) = Y^*_1(t)$ for $t\ge t^*$.

%
%
%

\bigskip
We introduce a stopping time $\tau$ as 
 \begin{equation}
   \label{103}
\tau = \inf \Big\{ t \ge 0 \; : \; \max\{|X_1(t)|,|Y_1(t)|\} = n\Big\}
    \end{equation}
and observe that the following three properties hold true:
\begin{itemize}
\item 
\be
\label{uno1}
 X_i(t)  = Y_i(t), \; \text{for} \; i\ge 2,\quad  t\le \tau \; \text{(because this holds at time 0)} 
\ee
\item
\be
\label{due2}
X_1(t) = X_1^*(t) \;\:\text{and} \; \: Y_1(t)=Y_1^*(t) \;\:\text{for} \; \: t\le \tau 
\ee
\item 
\be
\label{treee}
\text{If at  time} \;T<\tau, \:  X(T)=Y(T) \; \text{then} \; X(t) = Y(t) \; \text{for all} \; t>T
\ee
\end{itemize}
Thus, by law of  total   probability, we may write
  \begin{equation}
   \label{105}
 Q[X(T)\ne Y(T)] \le Q[\tau \le T] +  Q[\tau >T;X_1(T)\ne Y_1(T)]
    \end{equation}
where \eqref{uno1} has been used in the second term of the r.h.s.    
By using  \eqref{due2} we have
\be
\{\tau \le T\} = \Big\{\sup_{s\le T} |X_1^*(s)| \ge n\Big\} \cap \Big\{\sup_{s\le T} |Y_1^*(s)| \ge n\Big\}
\ee
By classical estimates for the maximum of a random walk, 
there exist $a>0$ and $b$ so that
      \begin{equation}
   \label{106}
  Q[\tau \le T] \le \frac{e^{-(a n^2/T)+b}}{\sqrt T}
    \end{equation}    
As a consequence of   \eqref{due2} we also have that
\be
Q[\tau >T;X_1(T)\ne Y_1(T)] = Q[\tau >T;X_1^*(T)\ne Y_1^*(T)]
\ee     
Thus 
\be
Q[\tau >T;X_1(T)\ne Y_1(T)] \le Q[X_1^*(T)\ne Y_1^*(T)] \le \frac{c''}{\sqrt T}
\ee        
for some constant $c''$. 
Choosing $T=n^2$   we then get \eqref{102} because,
by \eqref{treee},
  \begin{equation}
   \label{104}
\text{l.h.s. of \eqref{102}} \; \le Q[X(T)\ne Y(T)]
    \end{equation}
 
  \pagebreak
\noindent
{\bf Proof of Proposition \ref{recur2}.} \hspace{0.1cm}  

\medskip
\noindent
{\bf Item (1).} \hspace{0.1cm} We recall that
\be
\label{aaaa3}
\lambda^{(N,\Delta,\barr)}_x
= \sum_{y\in \Delta} \barr_y \, \mathbb{P}_x(X^{N,\Delta}(\tau) = y)
\ee
Similarly to item (1) of Proposition \ref{recur},
by letting the walker $X^{N,\Delta}(\cdot)$ do its first jump 
to one of its neighboring sites,
and calling $d_x$ the number of neighbors of $x\in\Lambda_{n,N}$, 
we can write
\be
\label{opp2}
\mathbb{P}_x(X^{N,\Delta}(\tau) = y) = \frac{1}{d_x} \sum_{\substack{z \in {\Lambda_{n,N}\cup\Delta} \\{z\sim x}}} \mathbb{P}_z(X^{N,\Delta}(\tau) = y) 
\ee  
Inserting \eqref{opp2} into \eqref{aaaa3} we find
\be
d_x \lambda^{(N,\Delta,\barr)}_x = \sum_{\substack{z \in {\Lambda_{n,N}\cup\Delta} \\{z\sim x}}}  \lambda^{(N,\Delta,\barr)}_z
\ee
from which it follows that $G^{N,\Delta}\lambda^{(N,\Delta,\barr)}(x) =0$, i.e. it is an harmonic function.

\medskip
\noindent
{\bf Item (2).} \hspace{0.1cm} 
We are in the setting of Hypothesis \ref{case}(a).
For notational simplicity we assume in this section that the spatial
dimension is fixed to $d=3$.
For any $x$, let $k$ be an integer so that $x\in \Lambda_{n,k}$.
We define
\be
\tau_k = \inf \{t \ge 0 \; :\; X(t) \in S_{n,k}^+ \cup S_{n,k}^-\}
\ee
and we have
\be
\la_x = \sum_{y\in S_{n,k}^+ \cup S_{n,k}^- } \mathbb{P}_x [X(\tau_k) = y] \Big\{ \la^+  \mathbb{P}_y[\mathcal{B}^+] + \la^-  \mathbb{P}_y[\mathcal{B}^-] \Big\}
\ee
We call
\be
\eps_k = \sup_{y\in S_{n,k}^{+} } \mathbb{P}_y[\mathcal{B}^-]
\ee
then
\be
\label{xx1}
\Big|\la_x -  \la^+  \mathbb{P}_x[X(\tau_k) \in S_{n,k}^{+} ] + \la^-  \mathbb{P}_x[X(\tau_k) \in S_{n,k}^{-} ] \Big| \le \eps_k \max(|\la^+|, |\la^-|)
\ee
Similarly, writing $X^N$ for $X^{N,\Delta}$, we take $N> k$ and define
\be
\tau^N = \inf \{t \ge 0 \; :\; |X^N_1(t)| = N\}.
\ee
Then, writing $\la^N$ for $\la^{(N,\Delta,\sigma)}$, we have
\be
\la^N_x = \sum_{y\in S_{n,k}^+ \cup S_{n,k}^- } \mathbb{P}_x [X^N(\tau_k) = y] \Big\{ \la^+  \mathbb{P}_y[X_1^N(\tau^N)=N] + \la^-  \mathbb{P}_y[X_1^N(\tau^N) = -N] \Big\}
\ee
We define
\be
\eps_{k,N} = \sup_{y\in S_{n,k}^{+} } \mathbb{P}_y[X_1^N(\tau^N) = -N]
\ee
and get
\be
\label{xx2}
\Big|\la_x^N -  \la^+  \mathbb{P}_x[X^N(\tau_k) \in S_{n,k}^{+} ] + \la^-  \mathbb{P}_x[X^N(\tau_k) \in S_{n,k}^{-} ] \Big| \le  \eps_{k,N} \max(|\la^+|, |\la^-|).
\ee
Since $X(t) = X^N(t)$ for $t\le \tau_k$ then combining \eqref{xx1} and \eqref{xx2} we find
\be
\label{dis}
|\la_x - \la_x^N | \le  (\eps_k + \eps_{k,N}) \max(|\la^+|, |\la^-|)
\ee
By \eqref{want}  $\eps_k \to 0$ as $k\to \infty$ so that we only need to bound  $\eps_{k,N}$.
Let
\be
D = \{x\in \Lambda_{n,N} \; : \; x_1 = n, |x_i| \le n, \;\text{for} \;i = 2,\ldots, d\}
\ee
and
\be
\tau_D = \inf \{t \ge 0 \; :\; X(t) \in D\}.
\ee
Then
\be
\eps_{k,N} \le \sup_{y \in S_{n,k}^{+}} \mathbb{P}_y [\tau_D < \tau^N] 
\ee
It is convenient to change coordinates $x_1 \to x_1 -n$.
For notational simplicity we rename $N$ instead of $N-n$ and
$k$ instead of $k-n$. 

The aim is to reduce to an estimate on the simple symmetric 
random walk in $\Z^3$,
that we shall call $\{Y(t), t\ge 0\}$.
This can be done by generalizing the argument in the
proof of item (2) of Prop. \ref{recur}.
Thus we introduce a set 
$$
\underline{D} =  \cup_{(m_2, m_3)\in \Z^2} D_{m_2,m_3}
$$
where $D_{0,0} = D$, $D_{m_2\pm 1,m_3} = D_{m_2,m_3} \pm(2N+1)e_2$ which
means that $D_{m_2,m_3}$ is translated by $\pm(2N+1)e_2$.
Analougously $D_{m_2, m_3 \pm 1} = D_{m_2,m_3}  \pm(2N+1)e_3$.
We define
\be
\tau_{\underline{D}} = \inf \{t \ge 0 \; :\; Y(t) \in \underline{D}\},
\ee
\be
\tilde{\tau}^N = \inf \{t \ge 0 \; :\; Y_1(t) \in \{N,-N-1\}\},
\ee
\be
A_{k} = \text{the boundary of} \{x: |x_i| \le k, i=1,\ldots, d\},
\ee
and
\be
A_{k}^+ = A_k \cap \{x: |x_1| \ge 0\}.
\ee
Then
\be
\eps_{k,N} \le \sup_{x\in A_k^+} \mathbb{P}_x[\tau_{\underline D} < \tau_N]
\ee
The proof that $\eps_{k,N}$ vanishes for $k$ and $N$ large follows from:
\begin{prop}
There exist positive constants $c_1,c_2,c_3, a$ such that for all $x\in A_k^+$
\be
\label{hii}
\mathbb{P}_x[\tau_{\underline D} < \tau_N] \le  c_1 \frac{n^{2}}{k} +c_2 \frac{n^2}{\sqrt{N}} + c_3 e^{-a\sqrt{N}}
\ee
\end{prop}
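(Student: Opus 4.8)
The plan is to estimate the probability that the walk $Y(\cdot)$, started somewhere on $A_k^+$, reaches the ``hole array'' $\underline{D}$ before its first coordinate exits the slab $\{-N-1 < x_1 < N\}$. The array $\underline D$ consists of translates of the $(2n+1)\times(2n+1)$ square $D$ placed on the lattice $(2N+1)\Z^2$ in the $(e_2,e_3)$-directions, all sitting in the plane $x_1 = 0$ (after the coordinate shift). So $\underline D$ occupies a fraction of order $n^2/N^2$ of that plane, and the walk must both reach the plane $x_1=0$ and land in one of these squares. I would bound \eqref{hii} by splitting according to the mechanism of failure: either the first coordinate $Y_1(\cdot)$ exits the slab very quickly (before time of order $N$), or the walk wanders far in the transverse directions (the $e_2,e_3$ plane distance exceeds, say, $\sqrt N$ before $Y_1$ exits), or neither of these happens and we use a genuine transience/capacity estimate.

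First I would handle the ``quick exit'' term: by the reflection principle / standard maximal inequalities for one-dimensional simple random walk, $\mathbb{P}_x[\tilde\tau^N \le N^{1/2}]$ for $x$ with $|x_1|\le k \le N$ is at most $c_3 e^{-a\sqrt N}$ — this is the last term in \eqref{hii}, and it is where the $k\le N$ relation (from $x\in A_k^+$, implicitly $k$ much smaller than $N$) is used. Next, on the complementary event the walk spends at least time $\sqrt N$ before $Y_1$ exits; during such a time interval the transverse coordinates $(Y_2,Y_3)$ perform a two-dimensional random walk, and I would use the recurrence of $\Z^2$ together with a hitting-probability estimate: to hit $\underline D$ the walk must hit one of the squares $D_{m_2,m_3}$, and since consecutive squares are spaced $2N+1$ apart while the transverse displacement up to the relevant time is controlled, only the square $D_{0,0}=D$ near the origin is reachable. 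The probability of the $\Z^3$ walk ever hitting the fixed finite set $D$ (of size $\asymp n^2$) before leaving is, by the Green's-function / capacity bound for transient $\Z^3$, of order $\mathrm{cap}(D)/\mathrm{dist}(x,D) \lesssim n^2/k$ when $x\in A_k^+$ is at distance $\asymp k$ from the origin. That produces the $c_1 n^2/k$ term. The middle term $c_2 n^2/\sqrt N$ accounts for the case in which the transverse walk has drifted enough that a neighboring translate $D_{\pm1,0}$ or $D_{0,\pm1}$ could in principle be hit: this requires transverse displacement of order $N$ within time of order $N$, an event of small probability, but once one conditions on it the hitting probability of that far square is again $\lesssim n^2/\sqrt N$ by the same capacity estimate now measured from distance $\asymp \sqrt N$ (the typical transverse scale at the relevant time).

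The main obstacle I expect is the bookkeeping in the second step: making precise that the only translate of $D$ the walk can realistically hit is the central one, i.e. controlling the interplay between the transverse spread of the walk (which wants to be $\asymp \sqrt t$) and the exit time of $Y_1$ from the slab (which is $\asymp N^2$ in expectation but has a heavy distribution). One must avoid a naive union bound over all $\asymp N^{?}$ translates $D_{m_2,m_3}$, which would lose the gain; instead one should integrate the capacity estimate $n^2/\mathrm{dist}$ against the law of the transverse position at the hitting time of the plane $x_1=0$, using that far-away translates are reached only on low-probability large-deviation events for $(Y_2,Y_3)$. Once this is organized, \eqref{hii} follows, and then one lets first $N\to\infty$ and then $k\to\infty$ in \eqref{dis} (together with $\eps_k\to 0$ from \eqref{want}) to conclude $\lim_{N\to\infty}\lambda^{(N,\Delta,\sigma)}_x=\lambda_x$, which is item (2) of Proposition \ref{recur2}.
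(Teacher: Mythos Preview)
Your ingredients (the Green's-function/capacity bound in $\Z^3$, large-deviation control of the transverse walk) are the right ones, and you correctly name the main obstacle in your last paragraph. But the explicit time-decomposition you propose is oriented the wrong way and does not close. You bound $\mathbb P_x[\tilde\tau^N\le N^{1/2}]$ and call it the exponential term; however the exit time $\tilde\tau^N$ of the one-dimensional walk $Y_1$ from a slab of width $\sim N$ is typically of order $N^2$, so on the complementary event $\{\tilde\tau^N>N^{1/2}\}$ (which has probability essentially~$1$) the transverse coordinates $(Y_2,Y_3)$ have time $\asymp N^2$ and hence move distance $\asymp N$ before exit---exactly the spacing $2N+1$ between the translates $D_{m_2,m_3}$. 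Thus many translates, not only $D_{0,0}$, are reachable with non-negligible probability, and your claim that ``transverse displacement of order $N$ within time of order $N$ is an event of small probability'' rests on the wrong time scale: the available time is $N^2$, not $N$.

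The paper avoids this by decomposing the \emph{target set} rather than the time: write $\underline D=\underline D'\cup\underline D''$ with $\underline D'=\bigcup_{|m_2|,|m_3|\le\sqrt N}D_{m_2,m_3}$. For $\underline D'$ one simply sums the Green-function bound $\mathbb P_x[\text{ever hit }z]\le c/|x-z|$ over all $z\in\underline D'$: the central square $D_{0,0}$ contributes $c_1 n^2/k$, while the remaining translates contribute $\sum_{0<|m|\le\sqrt N}c\,n^2/(N|m|)\le c_2\, n^2/\sqrt N$. For the far translates $\underline D''$ one bounds
\[
\mathbb P_x[\tau_{\underline D''}<\tilde\tau^N]\le\mathbb P_x[\tau_{\underline D''}\le N^{5/2}]+\mathbb P_x[\tilde\tau^N\ge N^{5/2}],
\]
and both terms are $\le c_3 e^{-a\sqrt N}$ by standard displacement estimates (the first asks for transverse displacement $\gtrsim N^{3/2}$ in time $N^{5/2}$; the second asks the one-dimensional walk to remain in the slab for time $N^{5/2}$). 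Note that the exponential term arises from $\tilde\tau^N$ being \emph{large}, the opposite of what you wrote.
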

\noindent
{\bf Proof:}
The proof follows from classical estimates for random walks,
in particular the estimate for the Green function 
$G(x,z) \le \frac{c}{|y-z|}$  in 3 dimensions, see \cite{Spi} Chapter 6, \S 26.
Let  
$$
\underline{D}' =  \cup_{\{(m_2, m_3)\in \Z^2 : |m_2| \le \sqrt{N}, |m_3|, \le \sqrt{N}\}} D_{m_2,m_3}
$$
and
$$
\underline{D}'' = \underline{D} \setminus \underline{D}'. 
$$
Then
\be
\mathbb{P}_x[\tau_{\underline D} < \tau_N] \le \mathbb{P}_x[\tau_{{\underline D}'} < \infty] + \mathbb{P}_x[\tau_{{\underline D}''} < \tilde{\tau}^N] 
\ee
We have
\be
\mathbb{P}_x[\tau_{\underline D}' < \infty] \le \sum_{z\in \underline D' } \frac{c}{|x-z|} \le c_1 \frac{n^{2}}{k} +c_2 n^2\frac{\sqrt{N}}{N}
\ee
where the first term bounds the contribution of $D_{0,0}=D$ and the second term
comes from
\be
\sum_{\underline{m}\in {\underline D}' \setminus D_{0,0} } \frac{\tilde{c}n^2}{N |\underline{m}|}
\ee
and 
\be
\mathbb{P}_x[\tau_{{\underline D}''} < \tilde{\tau}^N]  \le \mathbb{P}_x[\tau_{{\underline D}''} \le N^{(1+\frac12)2 - \frac12}] + \mathbb{P}_x[\tilde\tau^N \ge N^{(1+\frac12)2 - \frac12}].
\ee
By classical estimates on the displacement of a random walk we obtain
\be
\mathbb{P}_x[\tau_{{\underline D}''} < \tilde{\tau}^N]  \le c_3 e^{-a\sqrt{N}}
\ee
Hence \eqref{hii} is proved.
\qed
\vskip.3cm
By letting first $N\to\infty$ and then $k\to\infty$ we obtain $\la_x^N \to \la_x$ 
from \eqref{dis}, after recalling that, by \eqref{want},  $\eps_k \to 0$.

\medskip
\noindent
{\bf Item (3).} \hspace{0.1cm} 
We recall that
\be
\la_x = \la^+\mathbb{P}_x[\mathcal{B}^{+}] + \la^-\mathbb{P}_x[\mathcal{B}^{-}]
\ee
so that
\begin{eqnarray}
\la_x 
&=&
\la^+ \sum_{y\in S_{n,N}} \mathbb{P}_y[\mathcal{B}^{+}] \mathbb{P}_x[X(\tau)=y] 
+
\la^- \sum_{y\in S_{n,N}} \mathbb{P}_y[\mathcal{B}^{-}] \mathbb{P}_x[X(\tau)=y]
\nonumber\\
&=&
\sum_{y\in S_{n,N}} \la_y \mathbb{P}_x[X(\tau)=y]
\end{eqnarray}
By using that $X^{N,\Delta}(t) = X(t)$ for $t\le \tau$ we thus find $\la_x = \la_x^{N,\Delta,\sigma}$.

\section{}
\label{appB}
  \medskip
  \noindent
  The following is a weak (volume-dependent) form of Ruelle's superstability estimates.
  \begin{lem}
  There is $c$ (which does depend on $\La$ and $\bar \phi$, but since they are fixed we may regard $c$ as a constant) so that
   \begin{equation}
   \label{11.13}
 \frac{d\mu^{(R)}}{d\phi}(\phi)\equiv G^{(R)}(\phi) \le c  e^{- \beta \frac a2 \| \phi\|_2^2}
    \end{equation}
    \end{lem}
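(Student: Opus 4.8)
The plan is to bound the density directly. With $\La=\La_{n,N}$ we have
\begin{equation*}
G^{(R)}(\phi)=\frac{1}{Z^{(R)}}\,e^{-\beta\left[\,H_{\La,R}(\phi|\bar\phi)-\sum_{x\in\La}\la^*_x\phi_x\,\right]},\qquad
Z^{(R)}=\int e^{-\beta\left[\,H_{\La,R}(\phi|\bar\phi)-\sum_{x\in\La}\la^*_x\phi_x\,\right]}d\phi .
\end{equation*}
The strategy is to establish, with constants depending only on $\La$, $\bar\phi$ and the fixed data $\la^*,a,b,\beta$ but \emph{not} on $R$: \emph{(a)} a lower bound $Z^{(R)}\ge z_0>0$; and \emph{(b)} a quadratic lower bound $H_{\La,R}(\phi|\bar\phi)-\sum_{x\in\La}\la^*_x\phi_x\ge\frac a2\|\phi\|_2^2-C$ valid for every $\phi$ and every $R>1$. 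Granting these, $G^{(R)}(\phi)\le z_0^{-1}e^{\beta C}\,e^{-\beta\frac a2\|\phi\|_2^2}$, which is \eqref{11.13} with $c=z_0^{-1}e^{\beta C}$.

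For \emph{(b)} I would first prove a super-stability estimate for the \emph{boundary-conditioned} energy, $H_\La(\phi|\bar\phi)\ge a\|\phi\|_2^2-C_1(\La,\bar\phi)$. The hypothesis \eqref{n2.12} controls the internal energy $H'_\La$ but says nothing a priori about the interaction of $\phi$ with the frozen configuration $\bar\phi$, so the point is to absorb the boundary terms. Enlarge $\La$ to the bounded set $\hat\La:=\La\cup\{y\in\C:\ \mathrm{dist}(y,\La)\le R\}$: every $A$ with $V_A\not\equiv 0$ meeting $\La$ has diameter $\le R$ and hence lies in $\hat\La$, so by \eqref{n2.13}
\begin{equation*}
H_\La(\phi|\bar\phi)=H_{\hat\La}\!\left(\phi_\La,\bar\phi_{\hat\La\setminus\La}\right)-H_{\hat\La\setminus\La}\!\left(\bar\phi_{\hat\La\setminus\La}\right).
\end{equation*}
Applying \eqref{n2.12} to $\hat\La$ gives $H_{\hat\La}(\phi_\La,\bar\phi_{\hat\La\setminus\La})\ge a\|\phi_\La\|_2^2-b|\hat\La|$, while $H_{\hat\La\setminus\La}(\bar\phi_{\hat\La\setminus\La})$ is a finite constant determined by $\La$ and $\bar\phi$; hence $H_\La(\phi|\bar\phi)\ge a\|\phi\|_2^2-C_1$. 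Therefore $H'_\La(\phi|\bar\phi)=H_\La(\phi|\bar\phi)-a\|\phi\|_2^2\ge -C_1^+$, where $C_1^+:=\max(C_1,0)$, and since $0\le g_R\le 1$ one has $g_R(\|\phi\|_2^2)\,H'_\La(\phi|\bar\phi)\ge -C_1^+$ for every $R$; consequently $H_{\La,R}(\phi|\bar\phi)=a\|\phi\|_2^2+g_R(\|\phi\|_2^2)H'_\La(\phi|\bar\phi)\ge a\|\phi\|_2^2-C_1^+$. Finally the elementary bound $\sum_{x\in\La}\la^*_x\phi_x\le\frac a2\|\phi\|_2^2+\frac1{2a}\sum_{x\in\La}(\la^*_x)^2$ yields \emph{(b)} with $C=C_1^++\frac1{2a}\sum_{x\in\La}(\la^*_x)^2$.

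For \emph{(a)} I would restrict the integral defining $Z^{(R)}$ to the compact ball $\{\|\phi\|_2^2\le 1\}$. There $H'_\La(\phi|\bar\phi)$, a finite sum of the $C^\infty$ functions $V_A$, is bounded by some $M=M(\La,\bar\phi)$, so $H_{\La,R}(\phi|\bar\phi)\le a+M$ uniformly in $R$ (using $|g_R|\le1$), and $|\sum_{x\in\La}\la^*_x\phi_x|\le\|\la^*\|_2$ on that ball; hence the exponent is bounded above by a constant $M'$ independent of $R$, and $Z^{(R)}\ge e^{-\beta M'}\,\mathrm{vol}\{\|\phi\|_2^2\le1\}=:z_0>0$.

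The only step with real content is the boundary super-stability estimate in \emph{(b)}; the rest — the Gaussian tail that survives once the cutoff has switched the interaction off ($g_R=0$ for $\|\phi\|_2^2\ge R$), the uniform lower bound on $Z^{(R)}$, the linear chemical-potential term absorbed by Young's inequality — is routine. I note that the same computation with the cutoff removed (formally $R=\infty$, so $g_R\equiv1$) gives $d\mu/d\phi\le c\,e^{-\beta\frac a2\|\phi\|_2^2}$ as well, which is what controls $p'$ in \eqref{11.5}.
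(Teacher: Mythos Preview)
Your proof is correct and follows essentially the same route as the paper: both show $H'_\La(\phi|\bar\phi)\ge -\text{const}$ by enlarging $\La$ to include the interacting boundary sites (your $\hat\La$ is the paper's $\La\cup\Delta$) and applying super-stability there, then use $0\le g_R\le 1$, absorb the linear chemical-potential term into the quadratic, and bound $Z^{(R)}$ from below by restricting to a compact region. Your treatment is in fact slightly more careful than the paper's in handling the sign of $C_1$ via $C_1^+$, and your closing remark that the same bound holds for $d\mu/d\phi$ is exactly what is needed to control $p'$ in the subsequent proposition.
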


  \medskip

\noindent
{\bf Proof.}  Let $\Delta$ be the finite set of points in $\La^c$  interacting with those in $\La$. Recalling Definition \ref{Definition 2.2} for notation, we write
 $$
H'_{\La}(\phi|\bar\phi) = H'_{\La}(\phi|\bar\phi_\Delta)=
 H'_{\La\cup \Delta}((\phi,\bar\phi_\Delta)) -  H'_{\Delta}(\bar\phi_\Delta)
 $$
 Hence, by \eqref{n2.12},
  $$
H'_{\La}(\phi|\bar\phi) \ge -B | \La\cup \Delta| -|  H'_{\Delta}(\bar\phi_\Delta)|
 $$
The latter term can be regarded as a constant because it only depends on
$\bar\phi$.  Thus $
H'_{\La,R}(\phi|\bar\phi) \ge - g(\| \phi\|_2^2) (|B| | \La\cup \Delta| +|  H'_{\Delta}(\bar\phi_\Delta)|) \ge c$ and therefore
 $$
e^{-\beta \{H_{\La,R}(\phi|\bar\phi)- \sum_{x\in \La} \bar \la_x \phi_x\}} \le
c' e^{-\beta  \sum_{x\in \La} \{ a\phi_x^2 - \bar \la_x \phi_x\}}
\le c'' e^{-\beta \frac a2 \sum_{x\in \La} \phi_x^2}
 $$
We bound from below the partition function by restricting the integral to $|\phi_x| \le 1$ for all $x\in \La$ and we obtain \eqref{11.13}.  \qed

\medskip

To extend the bound to time intervals we will use the following theorem which will be used again  in Appendix \ref{appC}.

\begin{thm}
\label{thmn5.2}
Let $z(t)$, $t \in [0,T]$, $T>0$ be a process with law $P$.  Suppose that for $t \le T$
  \begin{equation}
   \label{n1.15}
z(t) = \int_0^t ds \ga_1(s) + M_t,\qquad M^2_t= M^2_0+ \int_0^t ds \ga_2(s) + N_t
    \end{equation}
with $M_t$ and $N_t$ martingales ($N_0=0$) and that
 \begin{equation}
   \label{n1.15.1}
\sup_{t\le T} \{E[\ga_1(t)^2] + E[\ga_2(t)^2] + E[z(0)^2]\} < \infty
    \end{equation}
    Then
     \begin{equation}
   \label{n1.15.2}
  E\Big[\sup_{t\le T} z^2(t)\Big] \le   2T \int_0^T ds E[\ga_1^2(s)] + 4
  \int_0^T ds E[\ga_2(s)]  + E[z(0)^2]
     \end{equation}

\end{thm}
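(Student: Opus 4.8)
The plan is to control the supremum of $z(t)$ by splitting $z(t)^2$ using the decomposition in \eqref{n1.15} and then applying Doob's maximal inequality to the two martingale pieces. First I would write, via the elementary inequality $(a+b)^2 \le 2a^2 + 2b^2$ applied to $z(t) = \int_0^t \ga_1(s)\,ds + M_t$,
\begin{equation*}
\sup_{t\le T} z(t)^2 \le 2 \sup_{t\le T}\Big(\int_0^t \ga_1(s)\,ds\Big)^2 + 2\sup_{t\le T} M_t^2.
\end{equation*}
The first term is handled deterministically by Cauchy--Schwarz in the time integral: $\big(\int_0^t \ga_1(s)\,ds\big)^2 \le t\int_0^t \ga_1(s)^2\,ds \le T\int_0^T \ga_1(s)^2\,ds$, so its expectation is at most $T\int_0^T E[\ga_1(s)^2]\,ds$. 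Taking expectations and combining with the factor $2$ gives the first summand $2T\int_0^T E[\ga_1^2(s)]\,ds$ on the right-hand side of \eqref{n1.15.2}.

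For the second term I would use that $M_t$ is a martingale, but rather than bound $E[\sup_t M_t^2]$ directly (which would bring in a worse constant via Doob applied to $M$), the cleaner route suggested by the hypothesis is to exploit the semimartingale structure of $M_t^2$ itself. Since $M_t^2 = M_0^2 + \int_0^t \ga_2(s)\,ds + N_t$ with $N_t$ a martingale and $N_0=0$, and since $M_0 = z(0)$, I would bound
\begin{equation*}
\sup_{t\le T} M_t^2 \le z(0)^2 + \int_0^T |\ga_2(s)|\,ds + \sup_{t\le T}|N_t|.
\end{equation*}
Here one must be slightly careful: the clean statement wants $\int_0^T E[\ga_2(s)]\,ds$ rather than $\int_0^T E[|\ga_2(s)|]\,ds$, so I would instead keep $M_t^2 = M_0^2 + \int_0^t\ga_2(s)\,ds + N_t$ and write $\sup_t M_t^2 \le M_0^2 + \sup_t\int_0^t \ga_2(s)\,ds + \sup_t N_t$, noting $\sup_t\int_0^t\ga_2(s)\,ds \le \int_0^T\ga_2(s)\,ds$ requires no sign on $\ga_2$ only after taking expectation — actually the honest move is: take expectations first, $E[\sup_t M_t^2] \le E[M_0^2] + E[\sup_t\int_0^t\ga_2] + E[\sup_t N_t]$, and bound $E[\sup_t\int_0^t \ga_2] \le E[\int_0^T|\ga_2|]$; but to recover the stated bound with $\int E[\ga_2]$ one uses that $\int_0^t\ga_2(s)\,ds = M_t^2 - M_0^2 - N_t \ge -M_0^2 - N_t$ and $\le M_t^2$, hence $\sup_t\int_0^t\ga_2(s)\,ds \le \sup_t M_t^2$, which loops back. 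The correct clean derivation is: $M_t^2$ is a nonnegative submartingale (as the square of a martingale), so by Doob's $L^2$-maximal inequality $E[\sup_{t\le T} M_t^2] \le 4\, E[M_T^2] = 4\big(E[M_0^2] + \int_0^T E[\ga_2(s)]\,ds\big)$, using \eqref{n1.15} at $t=T$ and that $E[N_T]=0$.

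Putting the two estimates together,
\begin{equation*}
E\Big[\sup_{t\le T} z^2(t)\Big] \le 2T\int_0^T E[\ga_1^2(s)]\,ds + 8\, E[M_0^2] + 8\int_0^T E[\ga_2(s)]\,ds,
\end{equation*}
with $E[M_0^2] = E[z(0)^2]$. The integrability hypothesis \eqref{n1.15.1} is exactly what guarantees $M_t^2$ is a genuine integrable submartingale so that Doob's inequality applies and all the expectations above are finite. The main subtlety — and the step I would be most careful about — is getting the constants to match \eqref{n1.15.2} ($2T$, $4$, $1$ rather than $2T$, $8$, $8$): this requires applying Doob's maximal inequality not to $M_t^2$ but rather observing that $E[\sup_t M_t^2] \le E[M_0^2] + 4\int_0^T E[\ga_2]\,ds$ by a sharper argument (Doob applied to the martingale $N_t$ after writing $\sup_t M_t^2 \le M_0^2 + \int_0^T(\ga_2)^+ ds + \sup_t N_t$ and then re-combining), or by simply noting the claimed constants already account for the crude bound after absorbing $E[z(0)^2]$ — in any case the structure of the argument is robust and only the bookkeeping of universal constants needs attention.
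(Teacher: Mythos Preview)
Your approach is correct and essentially identical to the paper's own proof: split via $(a+b)^2\le 2a^2+2b^2$, bound the drift term by Cauchy--Schwarz in time, and control $E[\sup_{t\le T} M_t^2]$ by Doob's $L^2$ maximal inequality together with $E[M_T^2]=E[z(0)^2]+\int_0^T E[\ga_2(s)]\,ds$ (using $E[N_T]=0$). Your concern about the constants is legitimate: the paper's argument in fact also produces $2T$, $8$, $8$ rather than the $2T$, $4$, $1$ written in the statement --- this is a harmless slip in the paper, since only the shape of the bound matters for the subsequent applications.
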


\medskip

\noindent
{\bf Proof.} Since it is short we give for completeness the proof which can be found in Holley and Strook \cite{hs}  and in De Masi and Presutti \cite{DP}.  We write
 \begin{equation*}
E\Big[\sup_{t \le T}  z^2(t)  \Big] \le  2\Big(  E \Big[\sup_{t \le T} \{ \int_0^t ds \ga_1(s)\}^2  \Big]
 +  E \Big[\sup_{t \le T} M_t^2  \Big]\Big)
    \end{equation*}
By Cauchy-Schwartz
 \begin{eqnarray*}
 &&   E \Big[\sup_{t \le T} \{ \int_0^t ds \ga_1(s)\}^2  \Big]
 \le T E\Big[ \int_0^T ds \ga_1^2(s)   \Big]
= T  \int_0^T ds E \Big[  \ga_1^2(s) \Big]
    \end{eqnarray*}
which is the first term on the right hand side of \eqref{n1.15.2}.

 By Doob's  theorem
  \begin{equation*}
 E\Big[\sup_{t \le T} M_t^2  \Big]  \le 4
 E\Big[M_T^2  \Big]
    \end{equation*}
By  \eqref{n1.15}
 \begin{equation*}
 E\big[M_T^2  \big] =  E\big[
 M^2_0\big]+ T E\big[\ga_2\big]
    \end{equation*}
which completes the proof of the theorem recalling that $M_0=z(0)$.  \qed

\medskip

%
%
\vskip.3cm
\noindent
{\bf Proof of Theorem \ref{thmn1.3}.}  Given $S>2$ we define  a smooth function $f(\xi)$, $\xi\ge 0$, in such a way that $f(\xi)=1$ for $\xi \ge S$
and $f(\xi)=0$ for $\xi \le S-1$.  When $|\xi| \in [S-1,S]$, $f(\xi)$ is a strictly increasing $C^\infty$ function with $0$ derivatives at the endpoints.  As a consequence  $f(\xi)$ is a smooth non decreasing function with derivatives bounded uniformly in $S$, and  if the derivatives $f'(\xi)\ne 0$ or $f''(\xi)\ne 0$
then
$\xi \in [S-1,S]$.

Writing $f_t$ for $f( \|\phi (t)\|_2^2)$  \eqref{n1.12} reads as
   \begin{equation}
   \label{n1.13}
\mathcal P^{(R)}\Big[\sup_{t \le T}  f_t \ge 1 \Big] \le e^{- AS +B}
    \end{equation}
which is implied by
  \begin{equation}
   \label{n1.14}
\mathcal E^{(R)}\Big[\sup_{t \le T}  f^2_t  \Big] \le e^{- AS +B}
    \end{equation}
 We will bound \eqref{n1.14} using Theorem \ref{thmn5.2} with
 $z(t) = f_t$ and
   \begin{equation}
   \label{n1.15'}
\ga_1 = L f,\quad \ga_2 = L f^2 - 2 f Lf
    \end{equation}
 where $L$ is the generator $L^{n,N, \la}$ (with cutoff $R$) of \eqref{GENN2}.
 The role of the measure $P$ in  Theorem \ref{thmn5.2} is now taken by $\mathcal P^{(R)}$.  Since this is time-invariant we get from \eqref{n1.15.2}
     \begin{equation}
   \label{n1.15.2.1}
 \mathcal E^{(R)}\Big[\sup_{t \le T}  f^2_t  \Big] \le    2T^2   E_{\mu^{(R)}}[\ga_1^2] + 4
   T  E_{\mu^{(R)}}[\ga_2]  + E_{\mu^{(R)}}[f^2]
     \end{equation}
  \medskip
     \noindent
     $\bullet$\; Bound of the term with $\ga_1$.

 Recalling \eqref{GENN2}, a contribution to $Lf$ comes from the first order derivatives and it
 is a finite sum of terms of the form
   \begin{equation}
   \label{n1.16}
 \Big\{\frac{ \partial} {\partial \phi_z} [H^0_\La(\phi)+ g_R(\phi) H'_\La(\phi|\bar\phi)]\Big\} \times \frac{ \partial} {\partial \phi_x} f(\phi)
    \end{equation}
while   the second order derivatives give rise to a sum of terms of the form
   \begin{equation}
   \label{n1.17}
  \frac{ \partial^2} {\partial \phi_x^2}  f(\phi) = \frac{ \partial} {\partial \phi_x}\{ 2 \phi_x f'(\phi)\} = 2 f'(\phi) + 4 \phi_x^2 f''(\phi)
    \end{equation}
The key point is that both $|f'|$ and $|f''|$ are bounded by $ \le c \mathbf 1_{\|\phi\|_2^2 \in [S-1,S]}$ and since $g$ and its derivatives are bounded, the expectation of $\ga_1^2$ is bounded by the sum of  finitely-many terms like
    \begin{equation}
   \label{n1.15.2.2}
 E_{\mu^{(R)}}[\pi(\phi)\mathbf 1_{\|\phi\|_2^2 \in [S-1,S] }]  \le
 E_{\mu^{(R)}}[\pi(\phi)^2]^{1/2} \times \mu^{(R)}\big[ \|\phi\|_2^2 \in [S-1,S]\big]^{1/2}
     \end{equation}
  where   $\pi(\phi)$ is a polynomial in $\phi$. 
By \eqref{11.13} this is bounded by $c e^{ - (S-1)(a\beta)/2}$, with $c$ depending on
$\pi(\phi)$, and in conclusion:
    \begin{equation}
   \label{n1.15.2.3}
 E_{\mu^{(R)}}[\ga_1^2]  \le c e^{ - Sa\beta/2}
     \end{equation}
.
 \medskip
     \noindent
     $\bullet$\; Bound of the term with $\ga_2$.

    We have 
         \begin{equation}
   \label{n1.15.2.3.1}
     L_{x,y} f = \beta^{-1} \Big( \frac{ \partial} {\partial \phi_x}-
      \frac{ \partial} {\partial \phi_y}\Big)^2 f,\quad \bar L_{x,y} f=
 \frac{ \partial^2} {\partial \phi_x^2} f
         \end{equation}
then, by \eqref{n1.15'}, we are reduced to the analysis of terms as those considered for $\ga_1$ and we get
    \begin{equation}
   \label{n1.15.2.3}
 E_{\mu^{(R)}}[\ga_2]  \le c e^{ - Sa\beta/2}
     \end{equation}

 \medskip
     \noindent
     $\bullet$\; Bound of the last term in \eqref{n1.15.2.1}.
  We use \eqref{11.13} to write
          \begin{equation}
   \label{n1.15.2.3.4}
  E_{\mu^{(R)}}[f^2] \le c  e^{- \beta \frac a4 (S-1)} \prod_{x\in \La} \int
  e^{- \beta \frac a2 \phi_x^2} \le c' e^{ - S(a\beta)/4}
     \end{equation}
This concludes the proof.
     \qed

\section{}
\label{appC}

We  start by extending the super-stability estimates in \cite{Ruelle-76}, \cite{LP-76} and \cite{BMPP}  to the present case,
namely for the
Hamiltonian \eqref{sinai2} to which it is added the contribution of a chemical potential $\la$ which is a harmonic function.  This
 is a special super-stable hamiltonian where the one body term is
  \begin{equation}
   \label{nn10.1}
 U(\phi_{x}) - \la(x)\phi_x \ge \frac 13 \phi_{x}^4 + \frac 13 \phi_{x}^2 -B
    \end{equation}
(with $B$ a suitable constant) and where the two-body potential, $|\phi_{x}-\phi_y|^2$, is nearest neighbor and evidently non negative.  We will exploit all that to   simplify
the proofs given in the general case.

%

\medskip
    \noindent
{\bf Proof of Theorem \ref{nnthm1.1}.}  Dropping the dependence on $\la$ in the notation we want to bound
 \begin{equation}
   \label{nn10.3}
   \rho(\phi_{x_0}):= \frac 1{Z_{\La_{n,N}}(\bar \phi)}
   \int d\phi_{\La_{n,N}\setminus x_0}
  e^{-\beta [H_{\La_{n,N}}+ W_{\La_{n,N}|\La_{n,N}^c}]}
    \end{equation}
where $H_{\La_{n,N}}$ is the energy in $\La_{n,N}$ (which includes the chemical potential $ \la$)
and $W_{\La_{n,N}|\La_{n,N}^c}$ is the interaction between the charges in $\La_{n,N}$ and those in $\La_{n,N}^c$.

The idea in  \cite{Ruelle-76}, \cite{LP-76} and \cite{BMPP} is to estimate
the integral in \eqref{nn10.3} by introducing a stopping time.  To this end we denote by $\Delta_q$ the cubes of side $2q+1$  centered at $x_0$
taking $q\ge q_0$ where $q_0$ is such that, for $q \ge q_0$,
 \begin{equation}
   \label{nn10.4}
   |\Delta_q| \log q -  |\Delta_{q-1}| \log (q-1) \le 8 d q^{d-1} \log q
    \end{equation}
We choose $N_0$ so that, for $N\ge N_0$, $\Delta_{q_0} \subset \La_{n,N}$.  We often write in the sequel for brevity $\La= \La_{n,N}$, $\La'= \La_{n,N+1}$ and
   $Z_\La(\bar \phi)=Z_{\La_{n,N}}(\bar \phi)$.

We 
  partition the configurations on  $\mathcal X_\La$ into the following atoms:
 \begin{equation}
   \label{nn10.5}
 \mathcal A_0 := \Big\{\phi\in \mathcal X_\La: \sum_{x \in \Delta_{q_0}} \phi_x^2 \le |\Delta_{q_0}|\log q_0  \Big\}
    \end{equation}
and, for $q>q_0$,
\begin{equation}
   \label{nn10.6}
 \mathcal A_q := \Big\{\phi\in \mathcal X_\La: \sum_{x \in \Delta_{q}\cap \La} \phi_x^2 \le |\Delta_{q}|\log q,\;
 \sum_{x \in \Delta_{q'}\cap \La} \phi_x^2 > |\Delta_{q'}|\log q'|, \: q_0 \le q'<q \Big\}
    \end{equation}
Thus $q'$ stops as soon as $\sum_{x \in \Delta_{q}\cap \La} \phi_x^2 \le |\Delta_{q}| \log q$.

We call $\rho_q(\phi_{x_0})$ the integral in \eqref{nn10.3} extended to $\mathcal A_q$ so that
\begin{equation}
   \label{nn10.7}
\rho(\phi_{x_0}) =
\sum_{q\ge q_0}\rho_q(\phi_{x_0})
    \end{equation}
We  split the terms $\rho_q(\phi_{x_0})$ into three classes.

 \vskip.5cm

 $\bullet$\; $\rho_{q_0}(\phi_{x_0})$.  Here we will prove the bound \eqref{nn10.12} below.   We first  drop  the non negative interaction between the charges
in $\Delta_{q_0-1}$ and those in the complement getting
 \begin{equation}
   \label{nn10.8}
\rho_{q_0}(\phi_{x_0}) \le \frac 1{Z_\La(\bar \phi)}\int d\phi_{\La\setminus x_0}
   e^{-\beta [H_{\Delta_{q_0-1}}+ H_{\La \setminus \Delta_{q_0-1}}+ W_{\La |\La^c}]}
    \end{equation}
    because by the assumption on
    $q_0$ there is no interaction between $\Delta_{q_0}$ and the complement of $\La$.
     By \eqref{nn10.1} we have
 \begin{equation}
   \label{nn10.9}
 H_{\Delta_{q_0-1}} \ge   \frac 13 \phi_{x_0}^4 + \frac 13
 \sum_{x \in \Delta_{q_0-1}}\phi_{x}^2
 -B |\Delta_{q_0-1}|
    \end{equation}
 We use the term with $\phi_{x}^2$ to perform the integrals over the variables $\phi_x, x \in \Delta_{q_0-1} \setminus x_0$  so that
 \begin{equation}
   \label{nn10.10}
\rho_{q_0}(\phi_{x_0}) \le \frac 1{Z_\La(\bar \phi)}e^{-  \frac\beta 3 \phi_{x_0}^4 + c|\Delta_{q_0-1}|}\int d\phi_{\La \setminus \Delta_{q_0-1}}
   e^{-\beta [ H_{\La \setminus \Delta_{q_0-1}}+ W_{\La |\La^c}]}
    \end{equation}
To reconstruct a partition function we write
$|\phi_{\Delta_{q_0-1}}| \le 1$ for the set where
$|\phi_x|\le 1$ for all $x \in \Delta_{q_0-1}$.  Then there is $c'$   such that
   $$
e^{c'|\Delta_{q_0}|}
\int_{|\phi_{\Delta_{q_0-1}}| \le 1}
d\phi_{\Delta_{q_0-1}}
  e^{- \beta H_{\Delta_{q_0-1}}}\ge 1
  $$
We claim that $ 2|\Delta_{q_0}|(\log q_0+2d) \ge W_{\La\setminus \Delta_{q_0-1}|\Delta_{q_0-1}} $. Proof: let $x \in \Delta_{q_0-1}$ and $y\in \Delta_{q_0}$, $x\sim y$.  We bound $(\phi_x-\phi_y)^2 \le 2(\phi_x^2
+\phi_y^2)$. By \eqref{nn10.5} the sum over all such $y$ is bounded by $2|\Delta_{q_0}|\log q_0$
 while the sum over all such $x$ is bounded by $2d(|\Delta_{q_0}|- |\Delta_{q_0-1}|)$ hence the claim.
%
%
We then get
\begin{equation}
   \label{nn10.11}
1 \le e^{\beta 2|\Delta_{q_0}|(\log q_0+2d)}\int_{|\phi_{\Delta_{q_0-1}}| \le 1} d\phi_{\Delta_{q_0-1}}
   e^{- \beta[ H_{\Delta_{q_0-1}}+ W_{\La\setminus \Delta_{q_0-1}|\Delta_{q_0-1}}]}
    \end{equation}
By \eqref{nn10.10} and \eqref{nn10.11} we then finally get:
\begin{equation}
   \label{nn10.12}
\rho_{q_0}(\phi_{x_0}) \le c e^{-  \frac\beta 3 \phi_{x_0}^4}
    \end{equation}

 \vskip.5cm

 $\bullet$\; $\rho_{q}(\phi_{x_0})$ with $q$ such that $\Delta_q \subset \La$.   With the same procedure we get the analogue of \eqref{nn10.10}:
  \begin{equation}
   \label{nn10.13}
\rho_{q}(\phi_{x_0}) \le \frac 1{Z_\La(\bar \phi)}e^{-  \frac\beta 3 \phi_{x_0}^4 - \frac\beta 3 \log (q-1) |\Delta_{q-1}| + c|\Delta_{q-1}|}\int d\phi_{\La \setminus \Delta_{q-1}}
   e^{-\beta [ H_{\La \setminus \Delta_{q-1}}+\setminus W_{\La |\La^c}]}
    \end{equation}
 Finally we need  an analogue of \eqref{nn10.11} to reconstruct the partition function.  By \eqref{nn10.4} we have
   \begin{equation}
   \label{nn10.14}
\sum_{y\in \Delta_q \setminus\Delta_{q-1}}\phi_{y}^2 \le  8 d q^{d-1} \log q
    \end{equation}
 and proceeding as before we get
 \begin{equation}
   \label{nn10.15}
\rho_{q}(\phi_{x_0}) \le c' e^{-  \frac\beta 3 \phi_{x_0}^4} e^{ - \frac\beta 3 \log (q-1) |\Delta_{q-1}| + c'|\Delta_{q-1}|+ \beta 8 d q^{d-1} \log q}
    \end{equation}

 \vskip.5cm

 $\bullet$\; $\rho_{q}(\phi_{x_0})$ with $q$ such that $\Delta_q \cap \La'\ne \emptyset$.
 We integrate as before over the charges in $\Delta_{q-1} \cap \La$
 and drop the interaction between $\La \setminus \Delta_{q-1}$ and $\Delta_{q-1} \cap \La$ as well as the interaction between $\Delta_{q-1} \cap \La$ and $\La^c$.
 We then get again
   \begin{equation}
   \label{nn10.16}
\rho_{q}(\phi_{x_0}) \le \frac 1{Z_\La(\bar \phi)}e^{-  \frac\beta 3 \phi_{x_0}^4 - \frac\beta 3 \log (q-1) |\Delta_{q-1}| + c|\Delta_{q-1}|}\int d\phi_{\La \setminus \Delta_{q-1}}
   e^{-\beta [ H_{\La \setminus \Delta_{q-1}}+ W_{\La\setminus \Delta_{q-1}  |\La^c }]}
    \end{equation}
  The reconstruction of the partition function is now more complicated because we need to take into account also the interaction between $\Delta_{q-1} \cap \La$ and $\La^c$.  We call $B^{\rm in}$ the set of points in $\La \setminus \Delta_{q-1}$ which are in a bond with a point in $\Delta_{q-1} \cap \La$. $B^{\rm out}$ instead is the set of points in $\La ^c $ which are in a bond with a point in $\Delta_{q-1} \cap \La$.  Thus we consider the partition function
  \begin{equation}
   \label{nn10.17}
\int_{|\phi_{\Delta_{q-1}\cap \La}| \le 1} d\phi_{\Delta_{q-1}\cap \La}
   e^{- \beta[ H_{\Delta_{q_-1}\cap \La}+ W_{\Delta_{q_-1}\cap \La|B^{\rm in}}
   + W_{\Delta_{q_-1}\cap \La|B^{\rm out}}   ]}
    \end{equation}
 $W_{\Delta_{q_-1}\cap \La|B^{\rm in}}$ is bounded as in  \eqref{nn10.14}.
 \begin{equation*}
  W_{\Delta_{q_-1}\cap \La|B^{\rm out}} =
\sum_{x\in \Delta_{q-1}\cap \La}\sum_{y\in B^{\rm out}}|\phi_x - \bar \phi_y|^2,\qquad |\phi_x| \le 1,\qquad |\bar \phi_y| \le c(\log N)^{1/3}
    \end{equation*}
 because $y\in \La'\setminus \La$.  We have $q \ge N/2$ (because $x_0 \in \La_{n,N/2}$ and the cube of side $2q+1$ and center $x_0$ has non-empty intersection with $\La_{n,N}^c$).  Therefore $\log N \le \log 2q$ so that
 \begin{equation*}
 W_{\Delta_{q_-1}\cap \La|B^{\rm out}}\le   c'' (\log q)^{1/3} q^{d-1}
    \end{equation*}
    In conclusion we get
 \begin{equation*}
\rho_{q}(\phi_{x_0}) \le c' e^{-  \frac\beta 3 \phi_{x_0}^4} e^{ - \frac\beta 3 \log (q-1) |\Delta_{q-1}| + c|\Delta_{q-1}|+ \beta 8 d q^{d-1} \log q + \beta  c'' (\log q)^{1/3} q^{d-1}}
    \end{equation*}

\vskip.5cm
The sum in \eqref{nn10.7} is then bounded as on the right hand side of \eqref{nn10.2} which is therefore proved.  \qed

 \vskip.5cm

\medskip
%
%
%

\medskip

\noindent
   {\bf Proof of Theorem \ref{thmnn1.3}.}
Let $S>2$ and $f(\xi)$, $\xi\ge 0$, be the same as in Appendix \ref{appB}, thus $f(\xi)$ is a smooth non decreasing function with derivatives bounded uniformly in $S$, and  if the derivatives $f'(\xi)\ne 0$ or $f''(\xi)\ne 0$ then
$\xi \in [S-1,S]$.
We fix $x\in \La_{n,N/4}$ and write $f_t$ for $f( |\phi_x(t)|)$. It is then enough to prove that
  \begin{equation}
   \label{nn1.14}
\mathcal E_{\mu_{N,\bar \phi,\la}}\Big[\sup_{t \le T}  f^2_t  \Big] \le e^{- AS^4 +B}
    \end{equation}
$\mathcal E_{\mu_{N,\bar \phi,\la}}$ being the expectation
relative to the process with law $P_{\mu_{N,\bar \phi,\la}}$.

Since $\phi(t)$ solves the stochastic differential equations \eqref{1.34.00.1} we have the following martingale decomposition:
 \begin{equation}
   \label{nn1.15}
f_t = \int_0^t ds \ga_1(s) + M_t,\quad M^2_t= M^2_0+ \int_0^t ds \ga_2(s) + N_t
    \end{equation}
with $M_t$ and $N_t$ martingales ($N_0=0$) and
  \begin{equation}
   \label{nn1.16}
\ga_1(t) = L f_t, \qquad \ga_2(t)= L f^2_t-2f_tLf_t
    \end{equation}
   see for instance Revuz and Yor, \cite{RY}. Since $\mu_{N,\bar \phi,\la}$ is time invariant we get, analogously to \eqref{n1.15.2.1},
       \begin{equation}
   \label{nn1.15.2.1}
\mathcal E_{\mu_{N,\bar \phi,\la}}\Big[\sup_{t \le T}  f^2_t  \Big] \le    2T^2   E_{\mu_{N,\bar \phi,\la}}[\ga_1^2] + 4
   T  E_{\mu_{N,\bar \phi,\la}}[\ga_2]  + E_{\mu_{N,\bar \phi,\la}}[f^2]
     \end{equation}
Recalling that $ x \in\La_{n,N/4}$ and that $f'$ and $f''$ are bounded and
equal to 0 unless $|\phi_x| \in (S-1,S)$, we have
  \begin{equation*}
|L_{x,y} f|\le c S\mathbf 1_{|\phi_x| \in (S-1,S)} \Big( \Big| \frac{\partial H}{\partial \phi_x}\Big| +  \Big|\frac{\partial H}{\partial \phi_y}\Big| + \beta^{-1}
\Big)
    \end{equation*}
Thus, after summing over $|y-x|=1$,
      \begin{equation}
   \label{nn1.16.1.1}
\ga_{1}^2 \le c' S^2\mathbf 1_{|\phi_x| \in (S-1,S)} \Big(\Big|\frac{\partial H}{\partial \phi_x}\Big|^2+ \beta^{-2} + \sum_{|y-x| =1} \Big|\frac{\partial H}{\partial \phi_y}\Big|^2
\Big)
    \end{equation}
Moreover
  \begin{equation*}
\Big| \frac{\partial H}{\partial \phi_z}\Big| \le c \Big( |\phi_z|^3 + \sum_{z':|z'-z| =1}[ |\phi_z|^2+|\phi_{z'}|^2]\Big) \le c \Big( [|\phi_z|^3+2d \phi_z^2] + \sum_{z':|z'-z| =1}|\phi_{z'}|^2 \Big)
    \end{equation*}
    so that
      \begin{equation}
   \label{nn1.16.2}
\Big| \frac{\partial H}{\partial \phi_z}\Big|^2 \le c''   \Big( [|\phi_z|^3+2d \phi_z^2]^2 + \sum_{z':|z'-z| =1}|\phi_{z'}|^4 \Big)
    \end{equation}
Thus
      \begin{equation*}
E_{\mu_{N,\bar \phi,\la}}[\ga_1^2]  \le c'''\max_{z'\ne x, |z'-x| \le 2}E_{\mu_{N,\bar \phi,\la}}\Big[\mathbf 1_{|\phi_x| \in (S-1,S)}\Big( S^8  + S^2  |\phi_{z'}|^4\Big)\Big]
    \end{equation*}
 In conclusion, after using Cauchy-Schwartz and \eqref{nn10.2}
      \begin{equation}
   \label{nn1.16.1.3}
E_{\mu_{N,\bar \phi,\la}}[\ga_1^2]  \le c  \Big( S^8  e^{- aS^4}+  S^2 e^{- (a/2)(S-1)^4 } \Big)
    \end{equation}

Since
      \begin{equation}
   \label{nn1.16.1.4}
 \ga_2 = \beta^{-1} \frac{\partial^2 f}{\partial \phi_x^2}
    \end{equation}
$E_{\mu_{N,\bar \phi,\la}}[\ga_2^2]$ is bounded in a similar way as well as
$E_{\mu_{N,\bar \phi,\la}}[f^2]$, we omit the details.  \qed


%
%
%

\vskip.5cm

\end{document}